\documentclass[sigconf, nonacm]{acmart} 

\usepackage{epstopdf}
\usepackage[cjk]{kotex}
\usepackage{graphicx}
\usepackage{amsmath}
\usepackage{color}
\usepackage{epsfig}
\usepackage{url}	
\usepackage{array,ragged2e}
\usepackage{mathtools}  
\usepackage{ifpdf}
\usepackage{endnotes}
\usepackage{xspace}	
\usepackage{multirow}	
\usepackage{tabularx}	
\usepackage{balance}	
\usepackage{marginnote}
\usepackage{enumitem}
\usepackage[utf8]{inputenc} 
\usepackage{flushend}
\usepackage{xcolor}
\usepackage{lipsum}
\usepackage{stfloats}

\usepackage{hyphenat}
\usepackage{seqsplit}
\usepackage{caption}                    
\usepackage{float}      
\usepackage[T1]{fontenc}                
\usepackage[utf8]{inputenc}                 
\usepackage{indentfirst}
\usepackage{threeparttable}
\usepackage{listings}
\usepackage{xcolor}
\lstdefinestyle{transaction}{
  basicstyle=\ttfamily\footnotesize,
  keywordstyle=\color{ForestGreen}\bfseries,
  keywordstyle=[2]\color{blue}\bfseries,
  commentstyle=\color{red},
  numbers=left,
  numbersep=5pt,
  frame=single,
  breaklines=true,
  morekeywords={Transaction},
  morekeywords=[2]{to, Read, Update},
  morecomment=[l]{//}
}

\usepackage{algorithm}
\usepackage{algpseudocode}

\usepackage{booktabs}
\usepackage{hhline}
\usepackage{comment}

\newcommand{\ie}{{\em i.e.}}
\newcommand{\eg}{{\em e.g.}}

\def\stx{{RCC}\xspace} 
\def\vanillamy{Vanilla MySQL\xspace}
\def\vanillapg{Vanilla PostgreSQL\xspace}
\def\vanilla{Vanilla\xspace}
\def\tipl{{\em TIPL}\xspace}

\def\rccc{{RCC-C}\xspace}
\def\rcc{{RCC}\xspace}
\def\rccw{{RCC-W}\xspace}
\def\rccd{{RCC-D}\xspace}

\long\def\comment#1{}

\definecolor{blue}{rgb}{0.0,0.0,1.0}
\definecolor{purple}{rgb}{0.6,0.0,1.0}
\long\def\comment#1{}
\newcommand{\reminder}[1]{[\marginpar[\mbox{$\Longrightarrow$}]{\mbox{$\Longleftarrow$}}{\textcolor{red}{#1}}]}
\newcommand{\module}[1]{\vspace{0.3ex}{\bf {#1}}\:\:}

\usepackage{tikz}

\newtheorem{thm}{Theorem}

\usepackage{graphicx}
\usepackage{subcaption}
\usepackage[dvipsnames]{xcolor}

\usepackage[svgnames]{xcolor}
\usepackage{mdframed}

\newmdenv[
    backgroundcolor=gray!15,
    linecolor=gray!75,
    linewidth=1pt,
    roundcorner=2mm,
    innertopmargin=8pt,
    innerbottommargin=8pt,
    innerleftmargin=8pt,
    innerrightmargin=8pt
]{summarybox}

\usepackage{tikz}
\definecolor{alizarin}{rgb}{0.82, 0.1, 0.26}
\definecolor{blue(ncs)}{rgb}{0.0, 0.53, 0.74}
\definecolor{black}{rgb}{0, 0, 0}

\newcommand{\jhpark}[1]{\textcolor{OliveGreen}{({\bf jh:} #1)}}
\newcommand{\swlee}[1]{\textcolor{red}{({\bf sw:} #1)}}
\newcommand{\hjin}[1]{\textcolor{purple}{({\bf hjin:} #1)}}
\newcommand{\msj}[1]{\textcolor{brown}{({\bf msj:} #1)}}
\newcommand{\fix}[2]{ \noindent\textcolor{blue}{{\bf #1:} #2}}
\long\def\revised#1{{\color{blue}#1}}
\usepackage{soul}
\usepackage{xcolor}
\AtBeginDocument{%
  }

\setcopyright{acmlicensed}
\copyrightyear{2018}
\acmYear{2018}
\acmDOI{XXXXXXX.XXXXXXX}
\acmConference[Conference acronym 'XX]{Make sure to enter the correct
  conference title from your rights confirmation email}{June 03--05,
  2018}{Woodstock, NY}
\acmISBN{978-1-4503-XXXX-X/2018/06}

\begin{document}

\title{RCC: Speculative Write Versioning with Redo Logs}


\author{Hyejin Yoo}
\authornote{Both authors contributed equally.}
\affiliation{%
  \institution{Seoul National University}
  \streetaddress{P.O. Box 1212}
  \postcode{43017-6221}
}
\email{hjin.u@snu.ac.kr}

\author{Seongjae Moon}
\authornotemark[1]
\affiliation{%
  \institution{Seoul National University}
  \streetaddress{P.O. Box 1212}
  \postcode{43017-6221}
}
\email{seongjae.moon@snu.ac.kr}

\author{Sang-Won Lee}
\affiliation{%
  \institution{Seoul National University}
  \streetaddress{P.O. Box 1212}
  \postcode{43017-6221}
}
\email{swlee69@snu.ac.kr}

\author{Jonghyeok Park}
\affiliation{%
  \institution{Korea University}
  \streetaddress{P.O. Box 1212}
  \postcode{43017-6221}
}
\email{jonghyeok_park@korea.ac.kr}


\begin{abstract}

Modern OLTP engines rely on multi-versioning to eliminate read-write conflicts, yet their concurrency is severely limited for write-write conflicts. The conventional wisdom of \emph{updating records in place and immediately} causes only one transaction to update a record at a time, and other update-conflicting transactions to wait for the former to commit or abort. Thus, conflicting transactions are \emph{serialized}. 

We propose \stx, which leverages redo logs to resolve conflicting writes. A transaction updates a record \emph{out of place} by creating a speculative write version using the redo log. With multiple speculative versions, \stx allows concurrent transactions to be \emph{pipelined} after the update-conflicting point. 
Each update made by a transaction is installed to its record upon commit. This \emph{lazy update policy} enables lightweight rollback: a transaction aborts simply by discarding its speculative versions.
To fully realize the performance potential of its speculative versioning, \rcc also proposes two novel techniques: \emph{commit-time deadlock detection} and \emph{columnar \rcc} (\rccc). The former detects cycles only once lazily at commit time and the latter eliminates record-level false WW conflicts by leveraging column-granule redo logs. 
\comment{\rcc detects cycles only once per transaction lazily at commit time, rather than checking on every lock acquisition as existing schemes do, allowing transactions to proceed without any interruption from deadlock handling.
\rcc-C extends to column granularity via redo log deltas, eliminating false conflicts among transactions updating disjoint columns of the same tuple.
}
\rcc guarantees serializability using lock-based access-time conflict ordering and dependency graph tracking.

We implement \rcc on MySQL and PostgreSQL, seamlessly integrating the concurrency control mechanism with their buffer managers, lock managers, and recovery modules.
\rcc improves TPC-C's transaction throughput and latency over Vanilla versions by an order of magnitude when running 64 concurrent threads on a machine with 128 cores.
\rccc further boosts throughput and latency by avoiding false conflict-induced deadlocks and unnecessary aborts.
For a high-contention YCSB benchmark, commit-time deadlock detection enables \rcc to scale to 128 clients while competing schemes do not scale beyond 32 threads.


\comment{
In this paper, we propose to utilize redo logs as version store for higher concurrency, to support multiversion from redo recovery. The existing multiversion concurrency (MVCC) mechanism such as SI has exploited the {\em past record version} (\ie, snapshot) to resolve the RW and WR conflict, increasing the concurrency significantly. However, WW conflict 미해결. 모든 MVCC system들이 {\em current record version}에 대해 W-W blocking 발생 (first committer/writer winner서도?). Thus, when run against update-intensive OLTP workloads such as TPC-C, WW conflicts severely hinder the concurrency. To address the remaining ww-conflict-induced concurrency bottleneck, we suggest a novel xx scheme. It uses the redo logs used for the purpose of recovery as {\em future record versions} and 을 해결.

\begin{itemize}
    \item  기존 MVCC mechanisms (O2N, N20 모두 MySQL, Postgres) 확장해서 (recovery extension 포함) shadow tx 지원 방법  
    \item  관련한 precommit 및 group commit 제시
    \item light-weight cascade rollback by delayed update + hybrid deadlock prevention to further reduce the peril of exponential deadlock with larger connection 
    \item  redo log 특성 활용해서 columnar CC를 통해 false deadlock 해결 및 false/불필요 abort 회피/최소화
\end{itemize}
}
\comment{
=== ChatGPT version

Traditional database systems rely on redo logs primarily for crash recovery and durability, while concurrency control is handled separately through mechanisms such as locking or multi-version concurrency control (MVCC) with undo-based version chains. This separation often introduces bottlenecks under high contention, as write–write conflicts still force blocking and version management incurs significant overhead.

In this paper, we propose Redo Logs as Versions (RLV), a novel design that reinterprets redo logs as the primary version store. Instead of maintaining undo records or memory-resident version chains, each committed modification is directly preserved in an append-only redo log, which naturally provides immutable historical versions. Transactions read from consistent snapshots defined over log positions, while writers append new versions without interfering with concurrent readers. This log-as-version model unifies recovery and concurrency control, reduces the need for page-level locking, and allows higher concurrency by decoupling logical versioning from physical data organization.

We describe the system architecture of RLV, including log indexing for efficient snapshot access, garbage collection for obsolete versions, and integration with page caching and checkpointing. Our evaluation on TPC-C and TPC-H benchmarks shows that RLV achieves up to X\% higher throughput under high-contention workloads compared to traditional MVCC engines, while maintaining recovery guarantees with no additional overhead.

Our results demonstrate that rethinking redo logs as versioned data, rather than auxiliary recovery metadata, opens new directions for building simpler yet more scalable database systems.

}

\end{abstract}

\begin{CCSXML}
<ccs2012>
   <concept>
       <concept_id>10002951.10002952.10003190.10003193</concept_id>
       <concept_desc>Information systems~Database transaction processing</concept_desc>
       <concept_significance>500</concept_significance>
       </concept>
 </ccs2012>
\end{CCSXML}

\ccsdesc[500]{Information systems~Database transaction processing}

\keywords{Concurrency Control, Redo log, MVCC, OLTP}


\maketitle

\section{Introduction}
\label{sec:intro}

\begin{figure}[t]
\centering
\includegraphics[clip,width=0.45\textwidth]{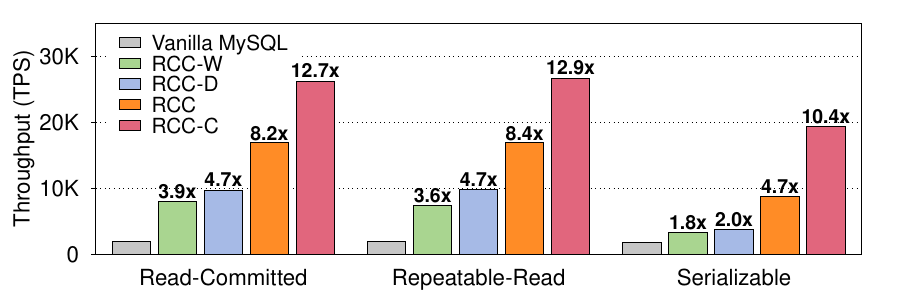}
\vspace{-0.3cm}
\caption{TPC-C Throughput (1 WH, 64 Clients):
\rcc pipelines WW-conflicting transactions via speculative versioning, while \vanillamy serializes them.
Commit-time deadlock detection enables \rcc to far outperform wound-wait (\rcc-W) and eager detection (\rcc-D).
Columnar \rcc (\rcc-C) further eliminates false conflicts among transactions accessing disjoint columns.
}
\label{fig:isolation_level}
\end{figure}

Concurrency control is a fundamental component of OLTP systems, as it determines how concurrent transactions interleave with each other for consistency and concurrency. 
In traditional \emph{single-version} concurrency control schemes, such as two-phase locking (2PL), the database maintains only the most recent version of each object.
To prevent anomalies including \emph{dirty reads}, \emph{non-repeatable reads}, and \emph{lost updates}, single-version systems enforce pessimistic locking on the current version~\cite{COWBOOK}. Thus, \emph{read-write (RW)}, \emph{write-read (WR)}, and \emph{write-write (WW)} conflicts cause transactions to be blocked. In particular, any transaction updating an object must hold an exclusive lock on the current version until commit, forcing WW-conflicting transactions to be strictly serialized and limiting concurrency severely. 

Multi-version concurrency control (MVCC) was introduced to alleviate this limitation. By maintaining multiple committed versions of an object as well as its current version, MVCC allows reads and updates to proceed concurrently without blocking each other~\cite{Reed83}, thus resolving RW and WR conflicts and significantly improving concurrency compared to single-version systems.
In particular, many MVCC systems, including MySQL and Oracle, reconstruct past committed versions on demand using \emph{undo logs}, which were originally designed for recovery~\cite{BHG}.

Although existing MVCC systems eliminate RW and WR conflicts using committed past versions and the current version of an object, they still handle WW conflicts pessimistically.
That is, they allow only one transaction to update the current version of an object at a time by holding the write lock until its commit or abort, thereby blocking other conflicting transactions from updating the same object.
To quantify this limitation, we run TPC-C on MySQL and PostgreSQL with up to 128 clients. As we detail in Section~\ref{sec:background}, transaction throughput quickly plateaus and CPU utilization remains low on servers equipped with 128 physical CPU cores. These results suggest that WW conflicts are the primary bottleneck preventing production DBMSs from scaling on modern many-core servers.

Meanwhile, many OLTP workloads exhibit a characteristic that makes WW blocking avoidable: most transactions commit successfully and update conflict-causing records early in their execution, typically at most once (Section~\ref{sec:background}).
Under these conditions, exposing uncommitted writes to subsequent transactions and pipelining their execution is unlikely to cause anomalies.
Yet existing MVCC systems forgo this opportunity because the trade-off between early write visibility and correctness has long been resolved in favor of pessimism. The prevailing practice of updating records \emph{in place and immediately} causes only one transaction to update a record at a time, forcing conflicting transactions to be \emph{serialized}.
We argue that this in-place update convention is the root cause of pessimistic WW handling~\cite{Gray81}.
Anomalies such as dirty reads and non-repeatable reads do not stem from exposing uncommitted updates, but from overwriting a shared current version. If active transactions' updates on the same record are maintained as separate versions and ordered by lock acquisition time, they can proceed concurrently without compromising correctness. 
\comment{their ordering is controlled, conflicting transactions} 


Based on this observation, we propose \rcc, which resolves WW conflicts by leveraging \emph{redo logs} as speculative write versions.
When updating a record, rather than immediately overwriting the current version in the data page, a transaction creates a speculative write version using its redo log entry, leaving the current version intact until commit. After the update, the transaction immediately releases the lock, allowing other transactions to access the record without waiting. Upon commit, the speculative version is installed into the data page.  This \emph{lazy update policy} of \rcc makes rollback \emph{lightweight}: a transaction aborts simply by discarding its speculative versions, since no changes have yet been installed to the data page. 
By maintaining separate speculative versions per active transaction, \rcc enables transactions conflicting on the same record to be \emph{pipelined} concurrently, each proceeding on the latest speculative version without waiting for predecessors to commit.
\rcc creates speculative versions in lock acquisition order and maintains a dependency graph to enforce commit ordering and cascade aborts among conflicting transactions. By adjusting the scope of dependency tracking in this graph, \rcc supports all standard isolation levels. Under \texttt{Read Committed} and \texttt{Repeatable Read}, \rcc tracks only WW dependencies from speculative writes. Under \texttt{Serializable}, \rcc additionally tracks RW anti-dependencies~\cite{SI:TODS09,SSIpostgres} and detects cycles in the dependency graph to prevent anomalies such as write skew~\cite{ansi-isolation}.

\comment{
\swlee{이 문단 수정 예정...}
Speculative write versioning has been explored by Bamboo~\cite{BAMBOO} in single version in-memory systems, demonstrating that early lock release can  substantially reduce WW blocking. To evaluate this in MVCC systems, we implement Bamboo's wound-wait-based speculative versioning (\rccw), and eager deadlock detection on every lock acquisition (\rccd), as shown in Figure~\ref{fig:isolation_level}.
While both outperform \vanilla, a significant throughput gap remains compared to \rcc. \rcc extends speculative write versioning in two directions to close this gap.
}
While speculative versioning enables transactions to be pipelined, realizing its full potential requires two novel mechanisms. First, we propose \emph{optimistic commit-time deadlock detection}, a novel deadlock handling scheme designed for speculative versioning, assuming that true deadlocks are not frequent in practice~\cite{TPBOOK}. Although \rcc eliminates lock-induced blocking, deadlock handling remains critical for both correctness and performance. 
Existing speculative versioning systems rely on wound-wait~\cite{BAMBOO}, which requires a deadlock prevention action on every lock acquisition and may cause false aborts~\cite{rebirth_retire}. In contrast, \rcc allows each transaction to execute to completion without interruption, performing cycle detection \emph{only once} at commit time. Only upon a true deadlock does \rcc abort the involved transactions.
This eliminates both false aborts and per-conflict detection overhead from the execution path, significantly reducing transaction latency and improving overall throughput. Commit-time deadlock detection is the key mechanism that enables \rcc to outperform existing deadlock prevention schemes and scale up to 128 CPU cores and threads, whereas competing schemes suffer significant throughput degradation beyond 32 concurrent threads (Section~\ref{sec:eval}).

Second, we propose \emph{Columnar \rcc} (\rcc-C), enabled by \rcc's redo log-based design. In record-level \rcc, false WW conflicts arise when transactions update different columns of the same record, forcing short pre-committed transactions to wait for long ones and, even worse, causing false deadlocks and aborts. By tracking modifications at column granularity as redo log deltas, \rcc-C eliminates false WW conflicts, further improving transaction throughput and latency. 

\comment{
\fix{R2D1}{Novelty (vs. Bamboo)} 
\revised{1) speculative write versioning의 full potential을 달성하기 위해 deadlock prevention mechanism이 critical하고 이를 위해 \emph{delayed deadlock detection at commit} propose. 기존 immediate Bamboo wound-wait나 deadlock detection은 transaction latency를 severely hampering함. 2) Redo logs carrying the column-level update information enable  RCC to resolve false/unnecessary blockings, boosting the concurrency further}

\comment{
\revised{(우선 단순하게 구현측면) Speculative write versioning을 MVCC까지 지원하는 오픈소스 RDBMS(MySQL, PostgreSQL)에 seamless하게 구현할 수 있음으로 보였고, 특히 redo log와 결합해서(?) undo-less recoverability 달성.} 

\revised{((Commit-time deadlock detection to realize the full potential of speculative write version)) The ideal goal of speculative write versioning is to pipeline conflicting transactions through early write visibility, eliminating blocking during execution. However, removing lock-induced blocking merely shifts the bottleneck to deadlock handling. Bamboo~\cite{BAMBOO} adopts wound-wait for deadlock prevention, which must synchronously wait for a lower-priority transaction to abort on every lock conflict — reintroducing blocking into the critical execution path and preventing speculative write versioning from realizing its full parallelism potential. \stx proposes \emph{commit-time deadlock detection}: since speculative unlock structurally confines all deadlock cycles to the commit stage, \stx defers detection entirely to that point. In-flight transactions proceed to completion without any execution-time interruption, and only actual deadlock cycles trigger an abort at minimal cost via lightweight TLA discard — eliminating both the false aborts of wound-wait and its synchronous blocking overhead. As shown in Section~\ref{sec:eval}, this yields up to 4× higher throughput over Bamboo under SR isolation, demonstrating that deadlock handling is the critical factor in realizing the full potential of speculative write versioning.}

\revised{(( Redo log enabled columnar CC)) However, transactions such as \texttt{New-Order} and \texttt{Payment} in TPC-C modify disjoint columns of the same row and do not truly conflict, yet they are treated as WW conflicts because they update the same record, forming unnecessary dependency edges. We observe that such false conflicts force transactions to wait for the preceding transaction to commit and cause cascading aborts, limiting concurrency. Since redo logs capture column-level deltas, \stx can compare the modified columns of transactions and avoid forming dependency edges when their modifications are disjoint, eliminating false conflicts and further improving concurrency.}
}
}

The main contributions of this paper are summarized as follows:
\begin{itemize}[leftmargin=8pt, topsep=0pt, partopsep=0pt]
    \item We identify that OLTP workloads present opportunities for speculative write visibility, including low abort ratio, rare re-updates, and early conflict points, yet existing MVCC systems fail to exploit this opportunity due to the in-place update convention.
    
    \item We propose a new concurrency control scheme, \stx, which addresses WW conflicts. Its key idea is to repurpose redo logs as speculative write versions and thus to enable update-conflicting transactions to be pipelined. 

    \item We present two techniques that realize the full potential of speculative write versioning: commit-time deadlock detection, which eliminates false aborts from wound-wait by deferring cycle detection to the commit phase, and columnar dependency tracking, which leverages physiological redo logs to eliminate false WW conflicts between transactions modifying disjoint columns.
    
    \item  We implement \rcc on both MySQL and PostgreSQL with approximately 2K and 1K lines of code changes, respectively. \rcc seamlessly integrates with the recovery module of each production DBMS. Under TPC-C, \rcc outperforms Vanilla MySQL by about an order of magnitude under both \texttt{Repeatable Read} and \texttt{Serializable} isolation. Evaluation on YCSB using a machine with 128 cores confirms that \rcc scales to 128 clients, while competing schemes do not beyond 32 threads.
    Our implementation of \rcc passes BenchmarkSQL's built-in consistency checker~\cite{benchmarksql}.
    \comment{
    \reminder{ numerous CC schemes have been proposed, but 
    dbx1000 결과 -fully working implementations  postgres, MySQl 등 disk-based DBMS with other overheads~\cite{looking-glass} 성능평가부재. 
    }
    }

\end{itemize}

\section{Background and Motivation}~\label{sec:background}

In this section, we review how modern OLTP engines resolve RW and WR conflicts using MVCC while still suffering from WW ones.
We then identify opportunities for early write lock release in OLTP workloads. Finally, we attribute pessimistic write locking to in-place updates, motivating \rcc's speculative versioning. 

\subsection{Multi-Version Concurrency Control}~\label{sec:cc}
Multi-version concurrency control (MVCC) improves concurrency by maintaining multiple physical versions of a single logical data item, allowing read and write operations to proceed without blocking each other~\cite{mysql,postgres,oracle,mvcc-eval}.
The key idea of MVCC lies in allowing read operations to access an appropriate version, while write operations create new versions without blocking concurrent readers.
In the past decade, a variety of strategies have been proposed for integrating MV with concurrency control, including timestamp ordering (MVTO), optimistic concurrency control (MVOCC), and two-phase locking (MV2PL)~\cite{mvcc-eval}.

\module{Version Construction}
To access an appropriate version, MVCC systems must construct a version that is visible to the reader.
Existing MVCC Systems~\cite{mysql, postgres} organize version chains for physical data in an append-only manner, but they differ in the ordering of versions.
MySQL adopts Newest-to-Oldest ordering, where the current committed version resides in the primary data page, and past versions are chained backward through undo log entries~\cite{mysql-mv, Oracle95,DIVA}.
PostgreSQL adopts Oldest-to-Newest ordering, where each version is stored as a separate physical tuple and newer versions are appended to the tail of the chain~\cite{postgres-mv, DIVA}. 

\module{Eliminating RW and WR Blocking.}
By maintaining multiple versions, MVCC allows readers and writers to operate independently without blocking each other.
In N2O systems, when a reader requires a version that is currently being modified, it does not block on the writer's lock. Instead, the system reconstructs a past committed version by traversing the undo log chain, allowing the readers to proceed immediately.
Writers modify the current version in place without waiting for readers, since any concurrent reader can always reconstruct an appropriate past version from the undo log.
In O2N systems, readers and writers operate on physically separate tuples.
A reader traverses the version chain and selects a tuple whose {\tt xmin} and {\tt xmax} indicate visibility to its snapshot, regardless of whether a concurrent writer is appending a new version~\cite{postgres-mv}.
Writers simply append new tuples to the chain without modifying existing ones, leaving past versions intact for concurrent readers. 

\module{WW Blocking Limits Concurrency.}
While MVCC eliminates RW and WR blocking, WW conflicts remain unresolved. When concurrent transactions attempt to update the same record, they must be serialized regardless of the version storage organization.
In N2O systems, a writer must acquire an exclusive lock on the current version and hold it until commit.
Any subsequent writer attempting to update the same record is blocked until the first writer commits and releases the lock.
This blocking is unavoidable because N2O systems perform in-place updates on the current version, and allowing concurrent modifications would result in lost updates~\cite{ansi-isolation}.
WW conflicts also persist in O2N systems. 
Before appending a new version, a writer must check if another transaction is updating the same record.
In that case, the second writer must wait for the first to complete.
While MVCC decouples readers and writers, no such mechanism exists for concurrent writers, so WW conflicts limit concurrency and scalability.

\comment{
When a transaction reads a data item, it acquires a shared lock on the appropriate version, and when it writes, it creates a new version while holding an exclusive lock on the current version. 
This allows readers and writers to proceed concurrently, as readers can access older versions without blocking writers, while still ensuring serializability through the locking protocol.
Next, we elaborate on how MVCC (MV2PL) ensures the correctness in isolation across concurrent transactions.

\module{Recoverability.}
Recoverability ensures that a transaction may commit only if all transactions from which it has read data have also committed, thereby preventing inconsistencies that could arise from reading uncommitted changes. 
In MVCC, this property is typically achieved by restricting transactions to read only committed versions, thus naturally enforcing recoverability.

\module{Cascaded Rollback}
\jhpark{design 장 또는 correctness 장으로 이동;}
Cascading rollback refers to the phenomenon where the abort of a single transaction leads to a chain of aborts in other transactions that have read its uncommitted changes. By allowing transactions to access only committed versions, MVCC can avoid cascading rollbacks, resulting in more robust and predictable recovery behavior.

\module{Serializability}
MV2PL can provide serializability because all transactions see a serialization order of update transactions that is consistent with their timestamp order.
}
To quantify the impact of WW conflicts on concurrency, we measured transactions per second (TPS) while running the TPC-C benchmark on two representative open-source databases that support MVCC: MySQL~8.4.5~\cite{mysql845} \textit{under Repeatable Read} and PostgreSQL~16.2~\cite{postgres-162} under \textit{Read Committed}, each with its default isolation level. The workload executes \texttt{New-Order} and \texttt{Payment} transactions in a 1:1 ratio, capturing the update-intensive behavior typical of OLTP workloads. 
All experiments were conducted on a dedicated server equipped with an AMD EPYC 9754 CPU (128 cores at 2.25GHz) and 256GB DRAM, running Ubuntu 22.04. 

\begin{figure}[t]%
  \centering
  \subfloat[Throughput]{%
    \includegraphics[width=.48\columnwidth]{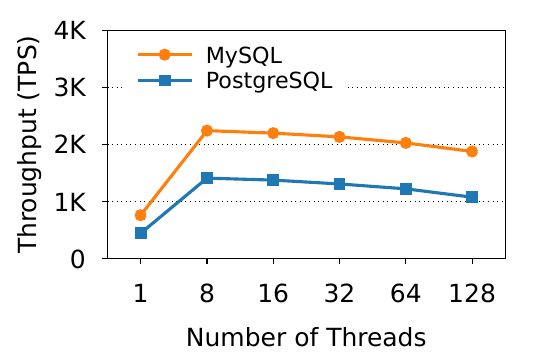}
    \label{fig:ww-varying-th-tps}
    \vspace{-0.2cm}
  }
  \centering
  \subfloat[Latency Breakdown]{%
    \includegraphics[width=.48\columnwidth]{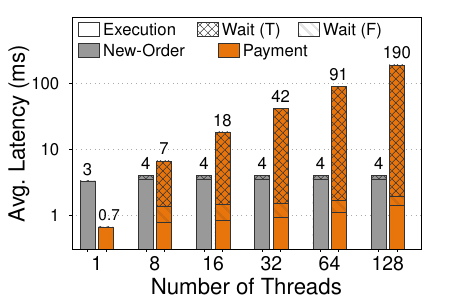}
    \label{fig:ww-varying-th-breakdown}
    \vspace{-0.2cm}
  }
  \\[-2ex] 
  \caption{WW Conflicts Limit Concurrency (TPC-C, 1 WH)}
  \label{fig:tpcc-performance}
\end{figure}

\comment{
    Figure~\ref{fig:ww-varying-th-tps} shows the resulting transaction throughput. Despite abundant CPU resources, as the number of concurrent connections exceeds eight, TPS plateaus and the CPU utilization remains very low (\ie, at most 1.7\% with 128 concurrent clients), 
    indicating that the system is not compute-bound but limited by synchronization overhead.
    To identify the source of this scalability bottleneck, we analyze transaction latency and its components. Figure~\ref{fig:ww-varying-th-breakdown} breaks down the average latency into lock wait time as the number of threads increases.
    
    \fix{R1D2}{Note that the lock wait time measures the elapsed time that a transaction spends waiting for locks. Therefore, lock wait time also includes system-level overheads such as thread scheduling, context switching, and lock manager contention that arise from WW conflict-induced blocking.} 
    
    While the execution time remains largely unchanged across different concurrency levels, the time spent waiting for write locks grows substantially with the number of concurrent connections per warehouse. This increasing lock wait time directly translates into inflated transaction latency, particularly at the tail, as a growing fraction of transactions experience prolonged blocking.
    
    \swlee{False WW-conflicts between NO and PM: Columnar CC 복선} The Tuple-level locking creates false write-write dependencies between transactions that have no true data conflict at the column level, unnecessarily serializing concurrent updates (여기서 언급? 2.2 oppotunity로?) \jhpark{2.2에서 언급하는 게 좋을 것 같습니다.}
    
    \swlee{Latency 언급} (False WW conflicts의 결과?) Beside throughput, WW conflicts adversely  impact the latency of payment transaction surges as the number of concurrent threads increases. (1ms -> 128ms). 이유 설명 
    NO -> PM -> PM이 대기함으로써 PM의 latency가 길어지고, PM의 대기시간이 길어짐을 명시하면 Row level 충돌, Column level의 충돌이 존재함. (위를 보여줄 수 있는 그래프) 
}

    Figure~\ref{fig:ww-varying-th-tps} shows the resulting transaction throughput. Despite abundant CPU resources, as the number of concurrent connections exceeds eight, TPS plateaus and the CPU utilization remains very low (\ie, at most 1.7\% with 128 concurrent clients), 
    indicating that the system is not compute-bound but limited by synchronization overhead.
    To identify the source of this scalability bottleneck, we analyze transaction latency and its components in MySQL. Figure~\ref{fig:ww-varying-th-breakdown} breaks down the average latency of each transaction type into execution time and WW lock wait. We further separate WW lock wait into true conflicts on the same column and false conflicts on different columns of the same record. Note that WW lock wait measures the total elapsed time during write lock acquisition, which includes system-level overheads such as thread scheduling, context switching, and lock manager contention.
    
    While the execution time remains largely unchanged across different concurrency levels, the time spent waiting for write locks grows substantially with the number of concurrent connections per warehouse. As Figure~\ref{fig:ww-varying-th-breakdown} shows, while \texttt{New-Order} latency remains stable, \texttt{Payment} latency grows sharply with concurrency because every \texttt{Payment} transaction updates the same warehouse record. Each \texttt{Payment} must wait for the preceding one to commit before acquiring the write lock, forming a serial chain of WW blocking. In addition, record-level locking creates false conflicts between \texttt{Payment} and \texttt{New-Order} transactions that update disjoint columns of district records, unnecessarily serializing them and increasing the lock holding time. As a result, true and false WW conflicts prolong \texttt{Payment} latency from 0.7 ms to 190 ms at 128 clients.
\subsection{Opportunity for Speculative Write Visibility}\label{sec:oppor}
A promising solution to WW conflicts is speculative write visibility~\cite{PWV}, which exposes uncommitted writes to subsequent transactions.  Although prior speculative versioning systems~\cite{BAMBOO,SL,IC3,MCC} have shown significant throughput gains over conventional locking under high contention,
two questions remain unexplored: how much execution time WW conflicts actually consume, and what characteristics of OLTP workloads make speculative write visibility effective. 
To answer these questions, we run six OLTP benchmarks on MySQL 8.4.5~\cite{mysql845} under repeatable read isolation level with 128 clients and a scale factor of 1 for 30 minutes. \autoref{tab:oltp-characteristics} summarizes the results. Our analysis reveals that OLTP transactions exhibit three characteristics that enable speculative write visibility.

\begin{table}[t]
\centering
\caption{Concurrency Characteristics in OLTP Workloads}
\label{tab:oltp-characteristics}
\begin{tabular}{lrrr}
\toprule
& WW Wait & Abort & Re-update \\
Benchmark & (\% of time) & (\% of txns) & (\% of txns) \\
\midrule
TPC-C~\cite{tpcc} & 93.9 & 1.6 & 0.05 \\
Wikipedia~\cite{wikipedia} & 51.9 & 1.0 & 0.09 \\
YCSB-A ($\theta$=0.9)~\cite{ycsb}  & 50.6 & 0 & 0 \\
AuctionMark~\cite{auctionmark} & $\sim$0 & 4.5 & 0 \\
TATP~\cite{tatp} & $\sim$0 & 4.2 & 0 \\
Epinions~\cite{epinions} & 0.04 & $\sim$0 & 0 \\
\bottomrule
\end{tabular}
\begin{tablenotes}
\footnotesize
\item WW Wait: ratio of total TX execution time blocked due to WW conflicts.
\item Abort: transaction abort ratio; Re-update: re-updating transaction ratio 
\end{tablenotes}
\end{table}

First, most OLTP transactions are expected to commit successfully. Speculative write visibility introduces the risk of cascading aborts~\cite{PWV}: if a transaction aborts after exposing its uncommitted writes, all dependent transactions abort cascadingly.
However, as shown in \autoref{tab:oltp-characteristics}, abort rates remain low across all benchmarks. 
\comment{YCSB-A and Epinions exhibit no aborts, TPC-C and Wikipedia show abort rates below 2\%, and AuctionMark and TATP have at most 5\% abort rates, \hjin{이미 구체적인 수치를 표로 제시해주고 있으니, 지면이 부족하다면 이 부분 줄여도 좋을 것 같습니다.} but their WW conflicts are negligible, so cascading aborts rarely occur.
}
In many OLTP workloads, transactions are typically designed to commit successfully, and application-level aborts are rare~\cite{TPBOOK}. 
Thus, conservative blocking or abort-heavy designs miss
opportunities to improve concurrency.

Second, transactions typically update each record only once. In \autoref{tab:oltp-characteristics}, re-update rates remain below 0.1\% across all benchmarks. However, when a transaction re-updates a record, any transaction that has read its prior speculative version observes a stale value and must be aborted. Thus, the occasional cascading abort from a re-update is a worthwhile trade-off, as exposing speculative versions to concurrent transactions significantly improves concurrency while preserving serializability, provided commit order is enforced. 

Third, in most OLTP workloads, update-conflicting operations occur in the early phase of transactions, causing transactions to hold locks longer and block subsequent transactions for longer. Transactions typically place update-conflicting operations early for reasons such as application logic, unique ID acquisition, and deadlock avoidance~\cite{Quro, MCC, Brook-2PL}. For this reason, WW conflicts account for a large fraction of total transaction execution time. For instance, as shown in \autoref{tab:oltp-characteristics}, three benchmarks (TPC-C, Wikipedia, YCSB-A) exhibit WW wait times exceeding 50\% of total execution time. One way to free subsequent transactions from prolonged blocking is to allow them to access uncommitted writes so that their
executions can be pipelined.

Taken together, these characteristics reveal a clear opportunity to improve scalability in OLTP workloads through speculative write visibility. By leveraging them, update-conflicting transactions can proceed concurrently, as cascading aborts are rare. \rcc capitalizes on this opportunity to improve concurrency under WW conflicts.

\comment{
Understanding workload characteristics is essential for designing scalable concurrency control protocols. Although WW conflicts force MVCC systems to conservatively hold write locks until commit, thereby strictly serializing conflicting  transactions, realistic OLTP transactions exhibit three intriguing characteristics that enable opportunities for early write visibility. 

First, the impact of WW conflicts is further amplified by the fact that they arise at the very beginning of transaction execution. In many OLTP workloads, including TPC-C, write conflicts concentrate on a small set of hotspot records that are accessed immediately after a transaction starts. These early updates are not incidental; they are {\em intentionally} placed as synchronization points so as to prevent race conditions and reduce deadlocks~\cite{Quro}.
In TPC-C, for instance, every \textit{New-Order} transaction retrieves and increments the \texttt{D\_NEXT\_O\_ID} field in the \texttt{DISTRICT} table to obtain a new order identifier, while each \textit{Payment} transaction updates warehouse- and district-level aggregates such as \texttt{W\_YTD} and \texttt{D\_YTD}. These records act as global coordination points for transactions targeting the same warehouse or district. Similar early exclusive-access patterns are observed in other OLTP benchmarks, such as TPC-E~\cite{Quro}. 
Although these conflicting operations occupy only a small fraction of a transaction’s total execution time, their early placement has a disproportionate impact on concurrency. Once the synchronization phase completes, the remaining operations typically access disjoint tuples and proceed with little or no contention. However, because existing MVCC systems resolve WW conflicts by holding write locks until commit, short early conflicts are transformed into long blocking periods. As a result, transactions are serialized at the very beginning of execution, causing many concurrent transactions to wait for extended durations and further limiting scalability. If other transactions were allowed to access and build upon the results of these early synchronization-point updates—rather than being blocked until commit—conflicting transactions could be pipelined instead of serialized, leading to a far more scalable execution under contention.

Second, most OLTP transactions are expected to commit successfully: 
logical aborts are rare in production OLTP workloads, and transactions are, except for aborts caused by deadlocks or crash, typically programmed to complete once initiated~\cite{TPBOOK}. This high commit probability suggests that conservative blocking or abort-heavy designs waste substantial opportunities for more concurrency.

Third, transactions rarely update the same hot record more than once.
Particularly in terms of updates, OLTP transactions usually perform final writes rather than read–modify–write cycles on the same object. Even when select-for-update patterns are used, the interval between read and write is short, and subsequent accesses to the same object are uncommon.  This access-once behavior implies that, as long as commit order is preserved, exposing early write results does not inherently compromise correctness.

Taken together, these characteristics reveal a clear opportunity to improve scalability through early write visibility~\cite{PWV}. Given the high commit probability of OLTP transactions, exposing updates earlier can prevent conflicting transactions from waiting for long commit durations. Moreover, because write conflicts arise early and transactions typically access each object only once, cascading rollbacks are unlikely in practice~\cite{BHG}. As a result, early conflict resolution can be exploited to allow transactions to be pipelined rather than serialized, without incurring dirty reads or cascading aborts. As discussed later, R-MVCC capitalizes on these characteristics to improve concurrency under update contention.
}

\subsection{In-Place Updates as the Culprit}\label{sec:culprit}

However, most existing concurrency schemes fail to exploit this ample opportunity for speculative write visibility. As the main culprit, we pinpoint the prevailing convention that all active transactions must apply their updates \emph{immediately and in place} to the \emph{current version} of a data object~\cite{pat-halland-cidr25}. When a transaction is to update a data record, it must first acquire an exclusive write lock, then capture the redo log and modify the current record version immediately~\cite{COWBOOK,ansi-isolation}. The transaction then should hold the lock until commit so as to avoid isolation anomalies such as lost updates, dirty reads, and non-repeatable reads~\cite{ansi-isolation}. 

This convention causes several concurrency-related limitations. First, updating the \emph{current version} in place implies that a transaction that updates a record must hold the write lock until commit, making the updated value invisible to other transactions, although the transaction is highly likely to commit. As a result, WW conflicts serialize update-conflicting transactions. 
Second, \emph{immediate modification of the current version} means that the database can expose only the most recently written value, making it impossible to preserve multiple concurrent update states for the same record. Lastly, when multiple active transactions share and overwrite the current version, even speculative concurrency becomes unsafe: aborting a blocked transaction can trigger cascading rollbacks, and a transaction that rereads a record after updating it may observe another transaction’s update, violating read-after-write consistency or repeatable-read semantics~\cite{ansi-isolation}. As a result, short early conflicts are amplified into long blocking periods, and transactions that could otherwise proceed are serialized. 
In summary, pessimistic write locking is not an intrinsic requirement for correctness, but rather an artifact of the \emph{update-in-place} convention. This convention constitutes a fundamental obstacle to scalable concurrency~\cite{Gray81}, motivating a different approach: resolving WW conflicts early by allowing subsequent transactions to proceed speculatively.

\comment{
n\module{WW conflicts dominates in TPC-C, making TPS inscalable.} Even under read committed mode, where WR and RW conflicts are resolved due to multiversion concurrency scheme in MySQL/InnoDB, WW conflicts 아주 빈번. For instance, payment 1 TX때문에, 평균 ? transactions are write-blocked -- 결과적으로 CPU util 낮음, TPS 낮음. core \# 늘려도 TPS saturated. (어떤 TX/Table/Statement간 blocking/blocked/ {\bf 특히, blocked duration} 분석 제시) existing record들에 update에 의한 WW conflict (not insertion or deletion) 주로 발생. 
    The dominating WW blocking makes even the existing MVCC schemes which are known to overcome RW and WR conflicts not scalable to the increasing concurrent threads and CPU cores. As shown in Figure/Table, the TPS of the TPC-C benchmark does not change at all even when the number of concurrent threads is increased. 
    
\begin{itemize}
\item New Order: warehouse dist\_id에 의해 serialize
\item Payment:  in a warehouse, ten district 따라서, ten concurrent까지 TPS 증가후 saturate
\item NO+PM: 결과적으로 몇 thread에서 saturate?
\end{itemize}

\module{Transaction Logic Analysis.} 
We make observations on concurrency-related characteristics in popular OLTP workloads. Those observations will play crucial role in justifying the core concepts of \stx and related assumptions. This section discusses three intriguing characteristics about XXX we have observed from realistic OLTP workloads write traces: most transaction successfully commits, update conflicts happens at the early phase of transactions, and they access data mostly once. We capitalize on these observation in designing \stx. 

In addition, to better understand the severity of inscalable concurrency with TPC-C benchmark, we analyze the characteristics of two major transaction types in TPC-C, NO and PM and make the following observations. 
    
\begin{itemize}
\item (Most transaction commits) transactions are programmed to commit without logical abort. -> deadlock abort 제외한 대부분의 transaction은 commit할 것이라는 optimism. 

\item (Conflicts at early stage of transaction execution) 
In the TPC-C benchmark, most transaction conflicts arise at the very beginning of the New-Order and Payment procedures. This is because every New-Order transaction must retrieve and increment the same D\_NEXT\_O\_ID field from the DISTRICT table to obtain a new order identifier, while each Payment transaction updates the warehouse- and district-level sales aggregates (W\_YTD, D\_YTD). These shared records serve as global synchronization points, forcing all concurrent transactions targeting the same district or warehouse to serialize their execution. Once this early phase completes, subsequent operations—such as inserting new orders or order lines—touch disjoint tuples and therefore proceed with little or no contention. As Wang et al.~\cite{Wang} observe, the D\_NEXT\_O\_ID update becomes the dominant contention hotspot, making the conflicting logic front-loaded in TPC-C transactions rather than evenly distributed across their lifetime 

most WW conflicts among transaction 은 early phases of transactions. WH and DIST의 경우, NO와 PM의 가장 앞 부분에 locate하고, 나머지 부분은 충돌없이 진행됨. 단일 수행 시간 기준으로 1/10th에 위치(?) \reminder{혜진,NO/PM 각각 수행시 초당 TPS 측정-NO/PM 상대적 수행 길이 비교, WH/DIST 수행 시점 측정 및 표기 부탁}\hjin{128core에서 NO(2ms), PM(0.5ms) 소요. NO: 평균 45query 중 2번째(DT) query, PM: 10 query중 1(WH), 2번째(DT) qeury에서 WW conflict 발생} 따라서, other researchers~\cite{BAMBOO,Quro}에서 지적한 것 처럼, hot object에 대한 write lock을 획득한 transaction만 수행하고 commit시까지 other tranactions을 blocked시켜서, 즉, strictly serialization 시켜, concurrency를 아주 낮게 만듦. 
    {\bf Why? schema 구조/application logic에 기인}; (NO long vs. PM short) long tx에 의해 short tx long blocking시 long TX commit시까지 오래동안 blocking 됨. 
   \comment{
    The D\_NEXT\_O\_ID value in the DISTRICT table is a major contention point because all New-Order transactions in the same district must update this value to obtain a new order ID~\cite{Wang}. Yan et. al. made the similar observation and suggested the {\em Quro} scheme which delays the hotspot contention later so as to avoid the long blocking time blocking time among concurrent transactions, thus maximing the concurrency~\cite{Quro}. 
    만일 conflicting region이 backend에 있다면, blocked duration 줄여 concurrency 저하 최소화할 수 있음~\cite{BAMBOO}) -- 만일 앞부분의 WW conflict를 resolve해서 contending transaction들이 pipeline될 수 있으면, CPU util과 concurrency를 아주 높일 수 있음.  
    \reminder{혜진, conflict point를 TX 끝부분 미루면, TPS 등 지표 얼마나 개선? --> conflict 유발 WW point를 뒤로 옮겼을 경우, TPS/CPU util measure/present?. Bamboo~\cite{BAMBOO} 실험 결과?} 
    }
\item  (Transactions tend to accesses objects only once) Access once 각 TX는 object에 대해 (대부분?) 논리적으로 one access (즉, 대부분 access는 final read/write임, select-then-update \reminder{SFU 경우에도 대부분 duration 짧음?}포함, final access 개념 TPC-C 예시로 설명); 또한 설령 그렇지 않더라도 analysis/detect/preprocessing을 통한 final read/write auto hint 가능함~\cite{PWV,BAMBOO}. 따라서, 어떤 TX T1가 update한 object를 다른 TX가 다시 read/update해도, T1이 commit하면 이후 TX들 consistency 문제 없음. 
\item Practical OLTP system often acquire exclusive access to critical records near the beginning of a transaction. This intentionally serialize the execution path, prevents race conditions on hotspot tuples and ensures that business logic observes a consistent state. In TPC-C, \textit{New-Order} updates the district’s next-order ID before creating the order; \textit{Payment} similarly updates warehouse and district aggregates before touching the customer record. In TPC-E, \textit{Trade-Result} acquires an early SELECT FOR UPDATE on the relevant trade tuple, ensuring that no concurrent transaction can modify it while the order-processing logic proceeds. To preserve correctness, systems often acquire exclusive access to the records to be updated before any modification occurs, ensuring that no concurrent transaction can alter those values and thereby preventing race conditions.

\item Another practical reason is deadlock reduction: imposing a consistent lock order by requesting key locks earlier can reduce cycles in the wait graph \cite{Brook-2PL}

\end{itemize}

\module{Why the existing MV/optimistic CCs are not such successful in spite of opportunity?} 
WW conflict시 early write lock release를 통해 (즉, early write visibility) concurrency를 높일 기회요인에도 불구하고, existing MVCC protocols also make update in-place and immediately로 인해 lost update, non-repeatable read, dirty read의 위험성때문에, conservative접근해서 기회를 살리지 못하고 있다.  OCC의 경우, hot object에 의한 충돌과 과도한 abort로 인해 부적합하다.  
}

\comment{ 
\item \stx 관점 다른 특징? 
\item (Columnar CC 복선 - row-level false conflicts; no columnar conflict)  또한, much less conflict in column-granule, not tuple-granule level: \item Serializable vs. RR mode 차이:  serializable에서 FK 의 CC에 대한 영향있고, RR서는 없음? 
}

\comment{ 
마찬가지로, FK를 disable시키고 conflict point를 뒤로 미루어서 deferred check을 하면 얼마나 개선?))
}

\comment{이 부분은 실험장에서 \stx 효과로 대체; vanilla 대비 STX 효과 더 높고, SSD 입장에서 IOPS도 높아져서, CPU / IO util 모두 높임. 
    \item (I/O에 의한 concurrency 악화) 관련해서, as the execution time of transaction holding the write lock is prolonged, other blocked transactions has to wait longer. To verify this, we ran the same TPC-C benchmark with smaller buffer cache, in which transactions may experience page misses. 이처럼 blocking TX가 I/O 수반 시, blocked TX의 waiting 더 심해져서, 따라서 throughput과 latency 악화됨. For example, in the cae of {\tt new\_order} transaction in TPC-C, WW conflict 유발 table/record는 hot해서 cached되어 있지만, other large tables such as orderline and stocks are very likely to cause I/Os. 이 경우, WW blocking이 make less IO requests하는 지? -- Implication: \stx 등으로 인해 concurency 개선 뿐 아니라 I/O util도 개선? I/O bound 상황에서 CC 저하 현상이 덜 나타날 수도 있음 (혜진 결과에서 일부 확인)
}

\comment{
\subsection{The Myth of Lost Update}~\label{sec:mvcc}
Under MVCC, a transaction attempts to update the most recent version of a data item that is visible under its snapshot.
This approach inherently leads to the possibility of Write-Write (\textbf{W-W}) conflicts, as concurrent transactions may simultaneously seek to modify the latest committed version.
If a concurrent transaction has already updated (committed) that item, system detects W-W conflict. In this case, the update may fail or the transaction may be rolled back.
Such W-W conflicts are very common in OLTP workloads. In our experiments using the TPC-C benchmark, we observed a high rate of W-W conflicts on both MySQL and PostgreSQL. \reminder{TPC-C @mysql 또는 postgres 실험중 w-w conflict 비율 높은 것 제시}
The rationale behind requiring transactions to update most recent visible version is to prevent the \textbf{Lost Update} anomaly~\cite{lostupdate:mit99, lostupdate:vldb14, ansi-isolation}.
The Lost Update anomaly occurs when multiple transactions attempt to update the same data item concurrently, and the update made by an earlier transaction is overwritten by a subsequent transaction, resulting in data inconsistency.
Even though MVCC maintains multiple versions of data items to allow concurrent reads and writes, write-write (W-W) conflicts are still inevitable.

Instead of directly modifying the latest committed version, transactions can only record their changes as redo logs and defer applying these changes to the data pages until commit time.
Moreover, this forms dependency chains, where a series of transactions transitively wait for their predecessors to commit. 
By tracking transaction dependencies during this period, the system can reduce the W-W conflicts and, verifies that no Lost Update anomalies have occurred prior to commit, even if temporarily allowing conflicting updates (\ie, implicitly systems permit the temporal lost update anomaly) between transactions.
This enables us to extract more concurrency, improving transaction throughput in MVCC protocols.

\comment{
참고 from Generic Version Control: Configurable Versioning for Application-Specific Requirements
Isolation and the Commit Validation Phase. MVCC systems offer isolation levels by implementing access rules and conflict resolution mechanisms. Most MVCC implementations use metadata fields, like begin and end timestamps, to set access rules to avoid dirty reads, non-repeatable reads, and phantom reads. They also prevent the lost update anomaly by using a first-writer-wins rule [10, 14, 15], ensuring isolation levels up to repeatable read and snapshot. To achieve serializable isolation, read-write conflicts are resolved during commit validation, using methods like dependency graphs [2, 19], read and scan sets [5, 10], or predicate logging [14].
}

\begin{figure}[t]
\centering
\includegraphics[clip,width=0.5\textwidth]{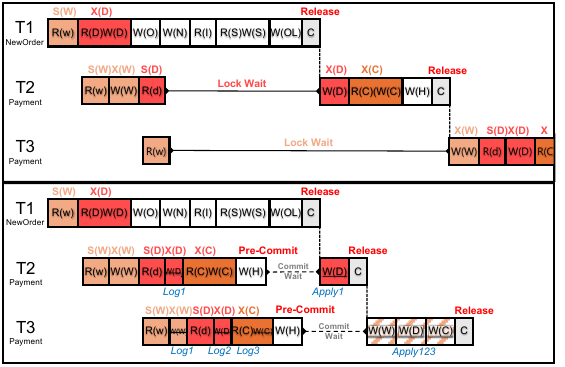}
\caption{수정필요; \stx에서 어떻게 w-w blocking을 resolve하는지 설명하는 예시
용어 정하기: pre-commmit vs. validate/ ShadowWrite?}
\Description{NV-PPL Overview}
\label{fig:schedule}
\end{figure}

\begin{figure}[t]
\centering
\includegraphics[clip,width=0.5\textwidth]{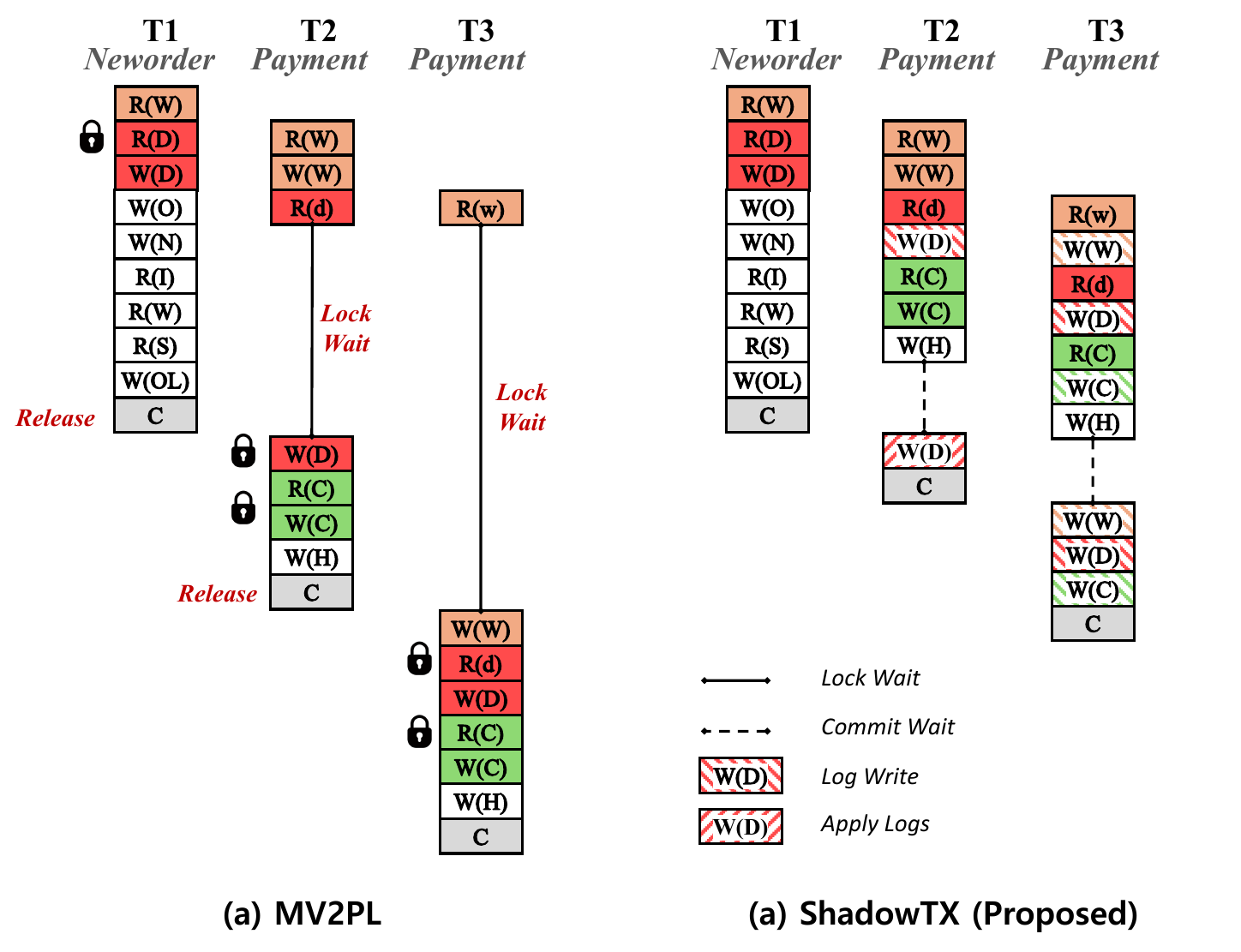}
\caption{TPC-C Transactions under MV2PL and \stx(proposed).
Each operation R(X) or W(X) denotes a read or write on TPC-C table X 
\ie, \textbf{W}arehouse, \textbf{D}istrict, \textbf{N}eworder, \textbf{O}rders, \textbf{O}rder\textbf{L}ine, \textbf{C}ustomer, \textbf{H}istory.
R(x) denotes read from X table data but from version storage.
}
\Description{MV2PL vs. \stx}
\label{fig:schedule}
\end{figure}

\comment{
storyline:
reader/writer는 서로를 block하지 않지만, w-w conflict는 발생함.
root cause: under MVCC, 트랜잭션들은 항상 현재 버전에 데이터에 업데이트를 하려고 함-> 그 이유?
Meanwhile DBMSs can have three timelines: past, present, and future.  
Why do transactions in MVCC can only update the latest version of data, thereby causing write-write (W-W) conflicts?
MVCC에서 만약 latest version이 아니라 old 버전에 업데이트를 허용하면 발생하는 문제점:
\begin{itemize}
    \item  이미 다른 트랜잭션에 의해 변경된 데이터에 대한 충돌 발생 - 직렬화 보장 어려움 및 복잡성 증가
    \item  multiple transaction들이 동시에 서로 다른 버전에 대해 충돌할 수 있는 문제가 발생함. 최신버전에 대해서만 업데이틀르 강제한 다면 위와 같은 문제를 쉽게 detect할 수 있고,  충돌발생할 경우, 다른 트랜잭션을 rollback시켜서 간단 해결 가능.
    \item  lost update 문제 발생
\end{itemize}
}
\comment{
\begin{itemize}
    \item MVCC 개요 - 핵심: single logical data에 대해 multiple physical versiond을 관리하는 multi-versioning; mv를 동시성제어 기법과 결합하여 사용함.
    그중 대부분의 commercial dbms mysql, postgres, oracle 등이 MV2PL을 차용함. 
    핵심: writers (readers) do not block readers (writers);
    \item MVCC의 구현은 다양한 version storage 형태로 지원함: N2O, O2N, Delta, Time Travel; 여기서 N2O, O2N 중심으로 설명 
    \item Garbage collection and index;
    \item Under MVCC, DBMS에서는 undo log (past), data page (present), redo log (future)가 3개의 timeline 공존; time travel이 가능함; (단, SAP HANA와 같이 version stroage레벨에서 snapshot을 제공하는 것이 아니라, 트랜잭션 레벨에서 
    \item \textbf{stereotype}: update는 data page의 최신 버전에 써야 한다! 결국 w-w conflict는 MVCC를 써도 피할 수 없으며, 성능의 병목! (Section ~\ref{sec:oltp-ww} 에서 내용 보강.
    \item 앞서 설명한, (tx-level) time travel을 한다면?
\end{itemize}
}

\module{MVCC} Pavlo paper~\cite{mvcc-eval} 서론 참고; MVCC is prevalent. Taxonomy about CC, version storage, garbage collection and indexing from Pavlo~\cite{mvcc-eval}.  

{\bf MVCC+2PL(MV2PL) 가정} In this paper, we assume MV2PL(MVCC+2PL)~\cite{mvcc-eval} for \stx for the simplicity of discussion. And because most popular DBMSs including Oracle, MySQL, Postgres, and SAP HANA are taking the approach. 

\module{MVCC and timestamp} In most MVCC system, each transaction and data version has its timestamp, and using the timestamp, each transaction determines the version to read~\cite{mvcc-eval}. \stx도 똑같은 가정을 함(?).   

\module{Cascaded Rollback}

\module{Recoverable schedule} ~\cite{BHG}

\module{Commit Ordering} ~\cite{BHG}

\module{Last bottleneck: WW conflict 그리고 Current Version 관련 Dilemma} update는 data page의 최신 버전에 써야 한다는 생각의 Dogma 갖혀 있음. 
}

\comment{
\begin{figure*}[!t]
\centering{%
\subfloat[N2O (MySQL)]{\includegraphics[width=1.0\columnwidth]{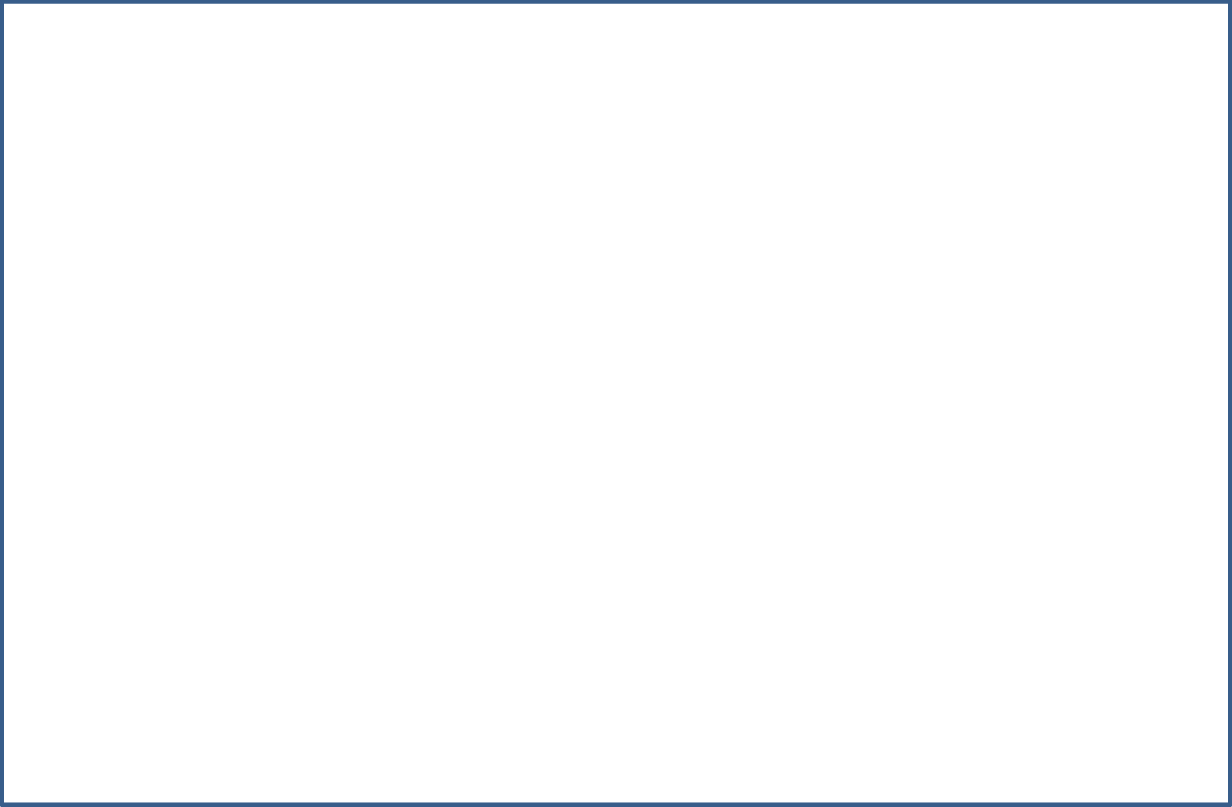}%
\label{fig:archi-n2o}}
\subfloat[O2N (Postgres)]{\includegraphics[width=1.0\columnwidth]{fig/blank-box.eps}%
\label{fig:archi-o2n}}}
\caption{Architecture of \stx. \stx can be implemented in both representative MVCC version storage architectures: N2O, by adding the AA module to MySQL, and O2N, by integrating the BB module into PostgreSQL }
\label{fig:archi}
\end{figure*}

\begin{figure}[t]
\centering
\includegraphics[clip,width=0.5\textwidth]{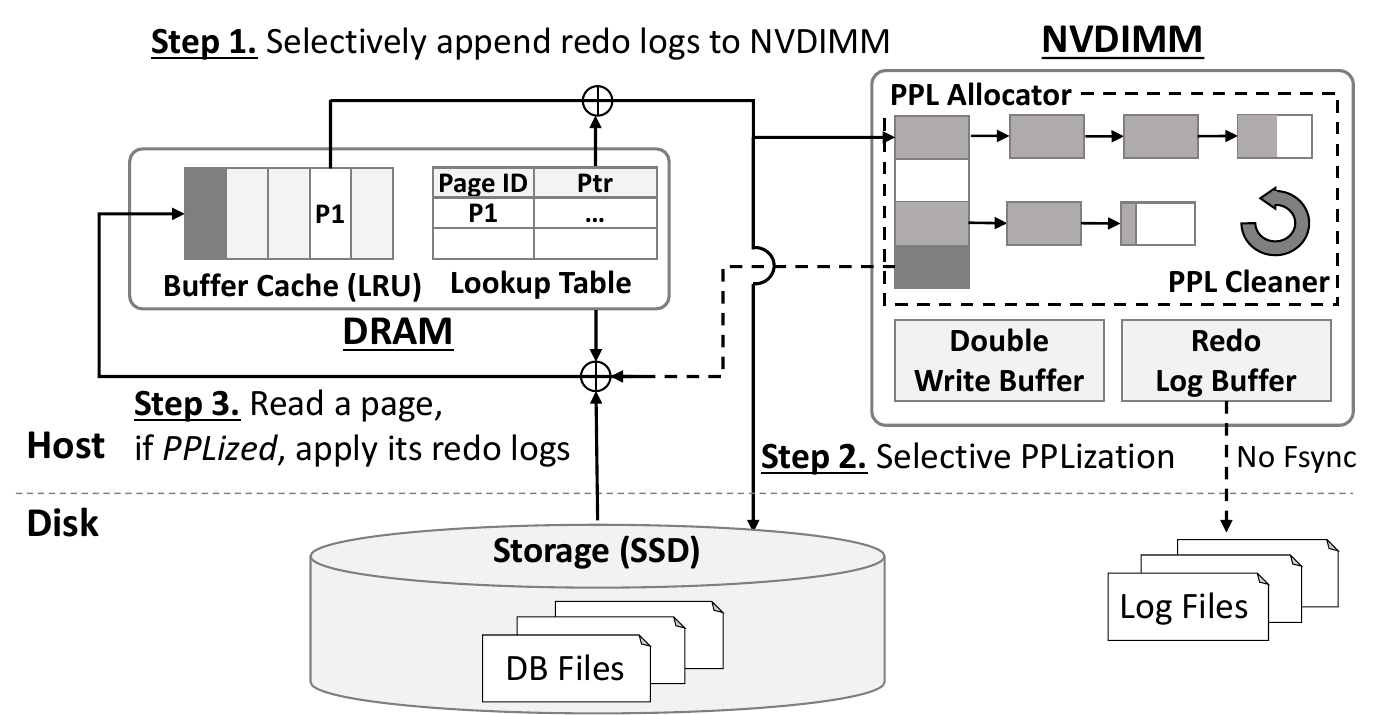}
\caption{\stx Architecture: 각 Transaction이 자신의 redo log를 위한 별도의 per-tx log buffer를 가지는 자료 구조 도식화 필요.  Redo log as future version store 개념 표현}
\Description{\stx Overview}
\label{fig:stx-archi}
\end{figure}
}

\section{Design of \stx}\label{sec:design}
The analysis in Section~\ref{sec:background} shows that OLTP transactions encounter WW conflicts early and are highly likely to commit, yet existing systems serialize them through in-place updates and exclusive locks held until commit.
\rcc eliminates this serialization by materializing out-of-place speculative write versions from redo logs at record access time, deferring installation to commit, and enforcing correctness through dependency tracking.
This enables pipelined execution of contending transactions while preserving isolation guarantees.
\comment{
The analysis in Section~\ref{sec:background} highlights a fundamental inefficiency in existing concurrency control protocols: OLTP transactions encounter predictable WW conflicts early and are highly likely to commit, yet systems serialize execution by performing immediate in-place updates and holding exclusive locks until commit. This tight coupling between conflict resolution and in-place modification severely limits concurrency.
\stx is designed to break this coupling.
Instead of blocking concurrent writers, \stx resolves WW conflicts at record access time and allows transactions to make forward progress by materializing out-of-place speculative write versions from redo logs.
By deferring installation into data pages and enforcing correctness through explicit dependency tracking, \stx enables pipelined execution of contending transactions while preserving isolation guarantees.
}

\subsection{Key Idea: Redo Logs as Versions}\label{sec:motiv}
\comment{An important observation is that modern}Modern DBMSs already maintain three logical timelines: the current version accessed by transactions, the undo logs keeping past committed versions, and the redo logs which record intended future updates. While the undo log is leveraged as version storage to resolve RW and WR conflicts in existing MVCC schemes, redo logs have been used solely to guarantee durability by recording transactional updates to be persisted at commit and replaying them upon recovery. However, redo logs naturally encode a transaction’s \emph{update intention} and can be safely ignored if the transaction aborts.
This motivates a key question: \emph{Can redo logs be reinterpreted as write versions for concurrency control?}

To answer this question, \stx extends the role of redo logs by treating redo-log-backed updates as \emph{speculative write versions} to address WW conflicts. A speculative write version represents an uncommitted and abortable version of a record, which is materialized by an active transaction. For brevity, we refer to speculative write versions just as \emph{speculative versions} hereafter. 
When updating a record, a transaction reads the speculative version from its predecessor or the current version from the data page, appends a redo log entry, and creates a new speculative version by applying this redo log.
When multiple transactions are updating the same record, each active transaction thus maintains its own speculative version of the record, enabling multiple active transactions to update the same record instead of overwriting the single current state. The speculative versions of an active transaction are installed to their corresponding current versions in data pages lazily at commit time. 

Treating active transactions' redo logs for a record as speculative versions and thus decoupling them from the record's current version enables a fundamentally different approach to concurrency control. First, dirty reads are avoided: if a transaction observes a speculative version produced by another transaction, commit ordering ensures that the observer can commit only after the producer commits. Second, repeatable reads are preserved because each transaction retains its own speculative version even if other active transactions are updating the same object. 
Lastly, lost updates are avoided because speculative versions from concurrent writers are strictly ordered and applied sequentially at commit.\comment{rather than overwriting each other.}

In addition, redo logs offer an opportunity for finer-grained concurrency control. Each redo log entry naturally captures which columns of a record a transaction modifies~\cite{mysql,oracle}. \rcc exploits this property to refine dependency tracking from the record level to the column level: when two transactions modify disjoint columns of the same record, \rcc does not create a dependency edge between them, eliminating false WW conflicts that record-level schemes cannot avoid. We elaborate on this in Section~\ref{sec-columnar}.

\begin{figure*}[t]
\centering
\includegraphics[clip,width=0.8\textwidth]{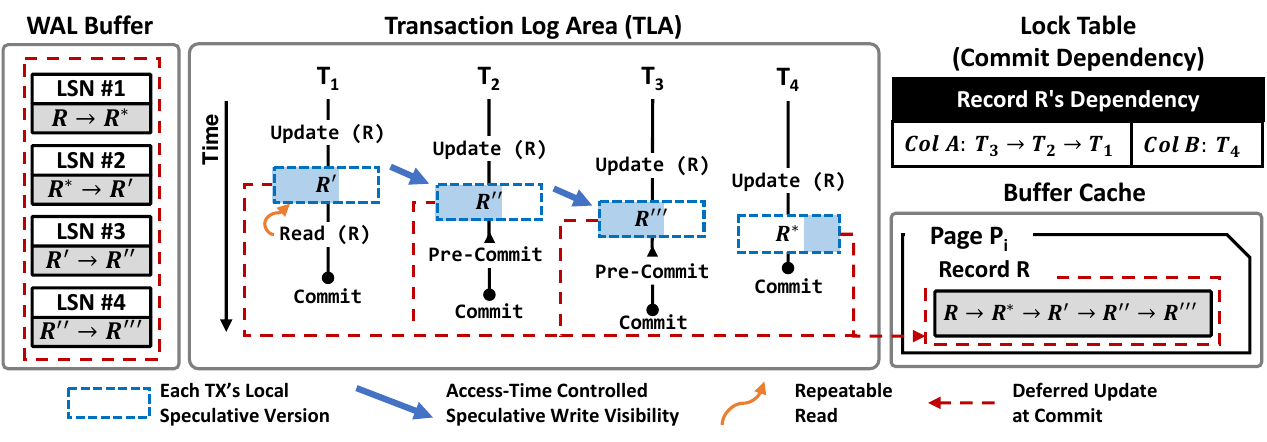}
\vspace{-0.3cm}
\caption{\stx Overview: The shaded regions in each speculative version indicate the modified columns. $T_1 - T_3$ and $T_4$ modify disjoint columns of the same record.
} 
\Description{\stx Overview}
\label{fig:stx-archi}
\end{figure*}

\subsection{Design Objectives and Architecture}\label{sec:design-obj-archi}

\subsubsection{Design Objectives}
\stx repurposes redo logs designed for recovery to also support concurrency control. The goal of \stx is to resolve WW conflicts efficiently while remaining well-aligned with existing DBMS architectures.
Rather than redesigning the storage engine or recovery mechanisms, \stx extends existing components in modern systems with minimal architectural changes. The design objectives of \stx are as follows.

\begin{itemize}[leftmargin=10pt]
\comment{
\item Concurrency 제고와 동시에, 2PL같은 strict access control 통해 불필요한 abort를 최소화하고 동시에 strong isolation level guarantee. 
\item To demonstrate that it is possible to realize \stx by modifying the existing DBMS architecture naturally and minimally 
\item Naturally support existing several standard isolation levels including RR, SI, and Serializable modes.
\jhpark{compatibility 설명 추가; \stx 에서 TLA로 별도로 update 충돌 관리하지 않으면 \vanilla와 동일함; vanilla 버전에서 생성한 데이터베이스 \stx에서 수정 없이 사용 가능}
\item Beyond CC mechanism, enable to achieve compatible / efficient recovery such as light-weight abort and undo-less recovery. 
}
\item \textbf{High Concurrency with Strong Isolation.}
\stx maximizes concurrency through \emph{speculative unlock}, which allows conflicting transactions to proceed under dependency tracking, while maintaining strict lock-based conflict ordering, as in two-phase locking.
Although speculative unlock introduces cascading abort and commit dependency management overhead, \stx addresses these efficiently through transaction-level redo-based versioning,
lightweight dependency management, and commit-time deadlock detection, which avoids the false aborts of wound-wait. 
\comment{ 
In addition, \stx refines dependency tracking to the column granularity using physiological redo logs, eliminating false WW conflicts between transactions modifying disjoint columns.}

\item \textbf{Seamless Integration.} To demonstrate that speculative versioning can be realized by extending full-fledged DBMSs with minimal modifications, \stx leverages existing recovery mechanisms without altering them, while maintaining full ACID guarantees. 

\item \textbf{Compatibility.} \stx supports multiple isolation levels, including Read Committed, Repeatable Read, Snapshot Isolation, and Serializable.
For transactions without WW conflicts, \stx operates identically to vanilla MVCC systems.
\end{itemize}


\comment{
\module{Design Objective} The design objectives of \stx is to investigate a totally novel approach to concurrency control. Redo logs generated by each active transaction represent the changes it make durable/persistent later at commit. \stx suggests to extend the role of redo logs intended for the recovery purpose as tentative versions for the concurrency purpose. 
current version에 자신의 update를 immediate하게 반영하는 기존 MVCC scheme서는 (Even when a transaction that is very likely to commit has already updated an object, the object is locked and its new value is thus invisible to other transactions till the trasaction commits. In contrast, \stx에서는 각 active transaction들이 controlled way로 자신의 update를 서로 다른 redo log version을 관리함으로써 위 문제들을 피할 수 있기 때문에, allows update lock to be released right after the update and thus make its new value visible to other transactions. 특히, WW conflict 제거를 통해 maximizing concurrency while minimizing the overhead of cascaded rollbacks. 
}

\subsubsection{Architecture}

In this section, we focus on extending N2O MVCC systems, such as MySQL, to simplify the discussion.
Since \stx maintains speculative versions out-of-place independent of the underlying MVCC architecture, O2N MVCC systems such as PostgreSQL can also be naturally extended with \stx, as discussed in \autoref{appendix}. 
We explain the overall architecture and key concepts of \stx using \autoref{fig:stx-archi}. 
To realize the full potential of {\em redo log as a version}, \stx introduces several novel concepts, as discussed below.

\module{Constructing Speculative Versions with Redo Logs.} 
As illustrated in \autoref{fig:stx-archi}, when a transaction updates a data record (\ie, the \emph{current version} of the record having the last committed value), it acquires an exclusive lock on the record, captures the corresponding redo log, and creates a speculative version using the redo log (\eg, the blue dashed rectangles in \autoref{fig:stx-archi}). 
The speculative version provides \emph{repeatable reads} to the transaction, even when other transactions have also updated the same record. In that each active transaction maintains its own speculative version, \stx is similar to optimistic concurrency control~\cite{OCC}, which maintains local data copies for each transaction.

\module{Transaction Log Area.}
\stx resolves WW conflicts by repurposing redo logs for dual purposes: recovery and concurrency control. Rather than introducing a separate version store, \stx leverages redo logs as an active version store.
Relying solely on the WAL for version construction is inefficient, since the log is maintained at the transaction level rather than the page level.
Constructing speculative versions of each data record in this manner requires scanning the entire WAL to extract relevant modifications, which can incur significant overhead, especially when log records must be retrieved from disk.
To mitigate the overhead, \stx introduces \emph{Transaction Log Area} (\textbf{TLA}), a per-transaction memory heap where each transaction stores redo logs for recovery and a speculative version to enable fast access without repeatedly applying redo logs.
Although speculative versions can be reconstructed by applying redo logs, storing them separately in the \emph{TLA} eliminates lock contention and version traversal overhead.
The space requirement of the \emph{TLA} depends on the number of active transactions and the number and length of records being updated. For instance, when running 128 concurrent threads of {\tt New-Order} and {\tt Payment} transactions in TPC-C, the maximum \emph{TLA} capacity is about 8KB. 
This amount of additional \emph{TLA} space, though not negligible, can be easily justified by the improved concurrency. 


\comment{ 
per-tx redo log를 별도로 모아서 관리해야 하고 나중 delayed update를 수행하므로 additional memory of moderate size;  예를 들어, MySQL의 경우, op 단위 mtr log를 tx 단위로 확장해서 memory footprint는 조금 더 늘어남 - 1tx 100sKB로 가정 * 256 conn = 수십 MB future version space 필요 (크지 않음 강조); What about Postgres??}

\module{Lazy Update Installation}  When a transaction updates a record, \stx does not update the record's current version in place but instead materializes the active transaction's updates to data records on their speculative versions. Later, when the transaction commits, the newly committed value is lazily installed from the speculative version to the current version. Note that the overhead of installing the current record lazily is the same as that of immediately updating the record in place. As a result, current record versions in data pages always retain the most recently committed value while their uncommitted values reside only in the \emph{TLA}. 

\module{Cascading Abort and Lightweight Rollback.} 
While \stx improves concurrency by exposing uncommitted updates by one transaction to another, it must perform cascading aborts when a predecessor transaction aborts or re-updates its speculative value.
In Figure~\ref{fig:stx-archi}, for instance, if {\tt T2} reads the speculative version {\tt $R^{\prime}$} written by {\tt T1} and {\tt T1} subsequently aborts, the dependent transaction {\tt T2} must also abort. For another example, if {\tt T1} \emph{re-updates} the record {\tt $R$} after {\tt T2} had already read its speculative version ({\tt $R^{\prime}$}), \rcc prevents stale reads by aborting {\tt T2} that has read the first update. Therefore, it is crucial to reduce the overhead of cascading aborts in \stx.
In this regard, deferred update installation  provides a side benefit: \emph{lightweight and recoverable cascading rollback}~\cite{BHG}.
Since active transactions do not install any updates to the current version, \stx can abort uncommitted transactions simply by discarding their speculative versions in the \emph{TLA}, without reverting current versions.

\comment{
Another benefit of lazy update in \stx is \emph{no undo recovery}. \jhpark{TODO}
that it does not need undo recovery WAL protocol for undo recovery because \stx installs only committed updates to the current version of data pages and thus, upon crashes, guarantees that data pages in storage does not contain uncommitted updates. 
}
\comment{ 
*** 각 active TX는 자신이 update하는 record에 대해 MySQL readview mechanism과 유사하게 ephemeral tuple version 유지. 생성 방법: 1) last committed version + all depending TXs' redo log 순차 apply - 생성 overhead if \# of depending TXs is large, 2) 각 TX-i가 tuple version 유지, 직전 TX-(i-1)의 tuple를 copy해서 update하고, a) TX-(i-i) tuple version 유지(memory overhead), b) TX-(i-i) tuple version 삭제 (나중에 repeatable read시 새로 생성 overhead) - 나중에 group commit시에 최종 버전 in-place update를 위해서도 각 TX가 자신의 version 유지하는 것이 좋음.  
(부가 장점) final read/write hint/auto-detect 등 추가 hint 없이도 early lock release을 통한 concurrent 제고와 consistency 보장을 동시에 가능케 해줌.. 
\begin{itemize}
\item read-only tx가 야기하는 문제 없음 (unlike controlled lock violation~\cite{CLV} and Early Lock Release)
\item Owner TX I/O로 인한 dependant TXs latency가 길어지는 상황 회피
\end{itemize}
}

\module{Access-Time Ordering for Conflicting Transactions.} 
In \stx, each transaction acquires an exclusive lock prior to updating a data record using the same mechanism in existing DBMSs such as a lock table, as illustrated in Figure~\ref{fig:stx-archi}.
However, \stx introduces speculative unlock, which allows a transaction to release its lock immediately after creating a speculative version of the data, rather than holding the lock until commit.
Unlike a conventional lock release, speculative unlock retains the lock table entry for dependency tracking while allowing subsequent transactions to proceed without waiting. This also enables RCC to detect re-updates and cascade-abort dependent transactions through the lock table.
When another transaction is to update the same record, it must acquire the exclusive lock for the record and then create another speculative version. For example, {\tt T2} acquires the exclusive lock prior to updating the record {\tt $R$} using the lock table, and then creates its speculative version {\tt $R^{\prime\prime}$} using the version {\tt $R^{\prime}$} of transaction {\tt T1}.
In this regard, \stx strictly enforces commits to follow lock acquisition order, thereby guaranteeing conflict-serializable schedules~\cite{COWBOOK}.
Specifically, \stx enforces commit order consistency~\cite{raz:93}. Once the update order is determined, allowing subsequent transactions to proceed speculatively does not compromise correctness, as long as the commit order respects that update order. 
\comment{ As illustrated in Figure~\ref{fig:archi}, conventional MV2PL resolves WW conflicts pessimistically by blocking or aborting competing writers~\cite{COWBOOK,TPBOOK}.} 

Compared to OCC, both schemes aim to maximize update concurrency.
However, RCC enforces strict ordering when concurrent transactions access the same data object, whereas OCC does not control concurrent accesses at access time.
Under high contention, this causes OCC to suffer frequent aborts at validation, while \stx allows conflicting transactions to commit.

\module{Dependency Chain and Pre-Commit.} When a transaction accesses the speculative version of another transaction, the former transaction is said to depend on the latter one. The former and the latter are called a \emph{dependent} and \emph{preceding} transaction, respectively. Active transactions accessing the same record constitute a dependency chain~\cite{CLV,PWV}, which is managed in the lock table. Even if a transaction reaches the commit point, it cannot commit when it depends on other transactions. We refer to transactions in this state as \emph{pre-committed}. A pre-committed transaction will commit once all its preceding transactions commit, and abort if any of the preceding transactions aborts or re-updates its speculative version. For instance, transaction {\tt T2} in Figure~\ref{fig:stx-archi} must abort if {\tt T1} re-updates its version {\tt $R^{\prime}$} after {\tt T2} creates its version {\tt $R^{\prime\prime}$}. Otherwise, {\tt T2} experiences a dirty read anomaly.
When the leading transaction in a dependency chain --- one with no remaining uncommitted predecessors --- commits, it is removed from the chain. This releases the pre-commit wait of its immediate successors, triggering their commits in turn and propagating sequential commits along the chain.

\comment{
\module{Assumption about SELECT FOR UPDATE}  \reminder{아래 comment 내용 중 최소한 언급}
transaction이 (insert/delete 제외) 존재하는 object를 update할 모든 record는 항상 SELECT-FOR-UPDATE (SFU) mode의 lock 획득 가정 (그리고, lock은 항상 current version 대상으로 획득 - WHY?);  {\bf Rationales:} 1) In \stx, (optional SELECT)-then-(final) update 후 immediate lock release하기 때문에 blocking 시간이 짧음. 2) Unlike existing MVCC schemes where one of two WW-conflicting transactions should abort, SFU lock 대기를 통해 WW-conflicting transaction간 serilization을 시켜서 불필요한 rollback을 방지 (물론, 동시에 한 Transaction의 blocking 시간 최소화함!!)하기 위해 의도적  (unlike existing MVCC, WW conflicting 조건 check 회피 및 abort 방지 {\bf SFU 역할/목적:} (가정: 한 TX에서 한 object O에 대해서는 한번만 read-then-modify함)  1) read시 X lock 획득 후 update 직후 X lock release를 가능하게 함. 2) read - update사이 X lock 유지를 통해서 O에 대한 접근을 serialize시킴(?); \reminder{SFU의 문제점, 단점?} \hjin{비결정적인 read-modify-write에 불리. select 후 조건에 따라 update를 진행하는 경우에는 write lock 낭비될 수도 있음}
}

\comment{ 
\begin{verbatim}
T1: UPDATE emp SET sal = sal + 10 WHERE empid = 101;
T2: UPDATE emp SET sal = sal + 20 WHERE empid = 101;
...
T1, T2 모두 101 "이전 SAL 값" (즉, read committed) 읽음.
==> commit시 first commiter/writer따라, 둘중 하나 ROLLBACK!!

반면, FOR UPDATE 경우
T1: SELECT sal WHERE empid = 101 FOR UPDATE;
    UPDATE emp SET sal = sal + 10 WHERE empid = 101;
T2: SELECT sal WHERE empid = 101 FOR UPDATE;
    UPDATE emp SET sal = sal + 20 WHERE empid = 101;
...
==> NO PROBLEM! 
SFU 선택 이유: 기존방법은 X-lock blocking 폐해 크지만, 
\end{verbatim}

\module{Two type of read}
\begin{itemize}
    \item ------------------------------------------------------------------------> Time
    \item T1: UPDATE(O1)\hspace*{2cm} Read(O1)
    \item T2:\hspace*{2cm} UPDATE(O1)
    \item ------------------------------------------------------------------------>
    \item Select, Select for Update 총 두 가지의 Read를 제공
    \item (1) T1 timestamp 값 기반 모든 객체에 대해 일관된 시점 snapshot 읽기 (Select)
    \item (2) 다른 transaction들의 final write인, 가장 최신 version을 읽기 (Select for Update)
    \item (2)를 선택하는 경우, 피할 수 있는 anomaly가 많음. 
    \item 단, "TX에서 읽고 쓰기를 동시에 하는 객체에 대해서는 반드시 select for update로 접근한다"는 가정이 필요함.
\end{itemize}
}
\comment{
260106 meeting
WAL log는 업데이트 시점에 남기기 / Pre commit record는 안 만드는 걸로
WAL log에 쌓인 순서로 dependency 확인 가능
T1이 커밋한 것으로 T2, T3까지 group commit 했다고 확인 가능
Apply -> Version install로
기존: commit 시 redo log를 parsing하여 physical record에 apply
변경: TLA의 future version을 physical record에 바로 install
install은 T1이 하고, T2&T3는 pass하게
T1~T3 group commit 시, T1이 T3의 future version을 install하고(LSN은 T1의 것으로) T2, T3는 해당 vers는 skip
}

\module{Optimistic Commit-Time Deadlock Detection.} 
While RCC reduces lock-induced blocking during execution, its dependency chain remains susceptible to deadlocks, shifting the bottleneck from lock waiting to deadlock handling. 
Conventional approaches employ deadlock avoidance (\eg, wound-wait) or synchronous deadlock detection, both rooted in a pessimistic assumption that deadlocks are frequent and must be prevented or detected immediately.

These conservative strategies require synchronous intervention on every lock acquisition, placing deadlock handling logic directly on the latency-critical execution path. This leads to two major drawbacks: (1) high execution-time overhead due to frequent cycle checks~\cite{BAMBOO}, and (2) a high rate of false aborts~\cite{rebirth_retire} in the case of wound-wait, where transactions are preemptively killed before an actual deadlock is confirmed. Consequently, these methods prevent speculative write versioning from fully realizing its potential for seamless pipelining, as 
transactions are constantly interrupted by deadlock handling during execution.

To maximize the benefits of speculative write versioning, \rcc adopts an \emph{optimistic} approach to deadlock handling: rather than preventing or eagerly detecting deadlocks during execution, it defers their resolution entirely to the commit phase. This design, which we term \emph{commit-time deadlock detection}, ensures that transactions proceed through the execution pipeline without any interruption from deadlock resolution logic. The cycle detection is triggered once per transaction at commit time: each transaction traverses its local dependency chain to check for cycles, requiring no per-conflict synchronization on the execution path. As a result, transactions not involved in a deadlock cycle complete execution and commit without incurring any deadlock-related latency overhead, significantly reducing their end-to-end transaction latency. Once a deadlock is detected, all involved transactions are aborted. 
Although this involves aborting multiple transactions, the performance impact is minimized by \rcc's lightweight abort mechanism.

One might question whether deferring deadlock detection to commit time wastes CPU cycles when deadlocks actually occur, and whether a synchronous eager approach would be more efficient under high deadlock rates.
We opt for commit-time detection for two reasons. First, true deadlocks are inherently rare in practice~\cite{TPBOOK, pg-deadlock, mysql-deadlock}; pessimistic schemes such as wound-retire~\cite{BAMBOO} could trigger \emph{false aborts} excessively without an actual cycle~\cite{rebirth_retire}, so wasted work from false aborts far exceeds that from true deadlocks.
Second, \rcc invokes cycle detection only once per transaction at commit time, whereas synchronous schemes pay per-conflict overhead on every lock acquisition. Under wound-wait, an older transaction blocks until the wounded younger transaction completes rollback, paying this cost regardless of whether a deadlock exists.
As shown in Section~\ref{sec:eval}, \rcc outperforms \rccd by about 2.1$\times$ on TPC-C.
\comment{
\jhpark{ww보다 commit-time detection이 오히려, deadlock이면, cycle 낭비 더 심한 것 아닌가?
대응: 1) ww early abort가 오히려 false abort 높음; 2) younger tx가 rollback 될때까지 synchronous wait. 3) true deadlock 비율 낮음점 - wasted work 비율이 낮음. 4) TLA discard 기반 lighweight rollback
In addition, \rcc 보다 eager하게 deadlock detection 수행하는 \rccd 성능 떨어지는 점 언급; + \rccw 보다 성능 약 2.7배 좋은 것도 언급
}
\msj{DL Detect revisit한 이유, Bamboo: wait for graph가 커진다는 이유로 Wound-wait 채택 -> }
\jhpark{
기존 mysql,postgres 에서 사용하는 DL\_DETECT 방식과 차이점;
Bamboo에서 dl\_detect할 수 없는 점;
}
\revised{Bamboo chose wound-wait to avoid the complexity of wait-for graph maintenance. What about wait-for graph in \stx?? defense 필요! 
\comment{ 
}
--- wound-wait 등의 false wait를 피하고 rare하게 발생하면 abort시키면 되는 이유로, 
--- MySQL/Postgres/Oracle 등 producion DBMS서 주로 사용함. 
}
\revised{In \stx, deadlock은 pre-commit 상태에 있는 tx들이 wait time이 detect된다.}
}

\subsection{Database Operations in \stx}

With the added functions and data structures described in Section~\ref{sec:design}, \stx performs normal database operations such as {\tt update}, {\tt  read}, {\tt commit}, and {\tt abort} differently from traditional DBMSs.
This section describes how \stx redefines when and how these operations interact to enable speculative write versioning with early lock release.
While explaining how each database operation works in \stx, we also discuss the challenges and our solutions. 
\comment{
\begin{itemize}
\item 1) cascading abort - 데이터를 직접 바꾼게 아니라 log를 별도로 관리,  2) serialization overhead - tx chain threshold ?, 3) log/lock mgmt 오버헤드 - tx-wise 로그를 관리하기 때문에 centric log mgmt가 아니라 mutex 오버헤드 최소화할 수 있는점. 기존 Lock 구조체 수정 최소화...등등, 
4) 복잡도? : 모든 action에 대해 \stx 를 적용하는 게 아니라, update에 대해서만 적용함. 
\item  w-w conflict 검증 cost, 직렬화 보장 비용 - transaction dependency 관리 그 결과 commit latency 증가, cascade rollback, read consistency - 다른 트랜잭션에 의한 최신 업데이트를 볼 수 없음; 읽기 일관성 보장 어려움.
\end{itemize}
}
\subsubsection{Update}\label{sec:design:update}
Conventional RDBMSs serialize concurrent writes by holding the exclusive lock on a record's current version until commit.
\rcc breaks this serialization through speculative unlock.
Consider {\tt T1}, {\tt T2}, and {\tt T3} sequentially updating record {\tt $R$} in Figure~\ref{fig:stx-archi}.
{\tt T1} acquires the exclusive lock on $R$, applies its modification, and stores the resulting speculative version $R^{\prime}$ together with its redo log entry in its own \emph{TLA}.
When {\tt T2} subsequently acquires the lock, it waits only for {\tt T1} to complete its update rather than for {\tt T1}'s commit.
{\tt T2} reads $R^{\prime}$ from {\tt T1}'s \emph{TLA}, produces $R^{\prime\prime}$ in its own \emph{TLA}, and releases the lock immediately.
\rcc records a dependency edge {\tt T2}$\rightarrow${\tt T1}, enforcing that {\tt T2} cannot commit before {\tt T1} and must abort if {\tt T1} aborts.
{\tt T3} proceeds analogously for $R^{\prime\prime}$, yielding the dependency chain {\tt T3}$\rightarrow${\tt T2}$\rightarrow${\tt T1}.
If {\tt T1} re-updates $R$ after {\tt T2} has read $R^{\prime}$, \rcc cascade-aborts {\tt T2} and its successors ({\tt T3}) to invalidate their stale reads. 
The current version in the data page remains unchanged throughout this sequence.

\comment{
In conventional RDBMSs, when a transaction {\tt T2} updates a record {\tt $R$} (in Figure~\ref{fig:stx-archi}), it acquires an exclusive lock on {\tt $R$}'s current version, captures a redo log entry to the global log buffer, immediately applies the modification in place, and holds the lock until commit.
If another transaction {\tt T1} holds the lock, {\tt T2} must wait until that transaction commits, serializing concurrent writers.

With \stx, the record update operation proceeds as follows.
In Figure~\ref{fig:stx-archi}, {\tt T1}, {\tt T2}, and {\tt T3} sequentially update record {\tt $R$}.
When {\tt T1} updates {\tt $R$}, it acquires an exclusive lock on {\tt $R$}'s current version and applies its modification. 
The resulting speculative version ({\tt $R^{\prime}$}), along with its corresponding redo log entry, is stored in {\tt T1}'s \emph{TLA}.
Unlike conventional systems that serialize access to a centralized log buffer, this transaction-wise log management eliminates mutex contention during updates, reducing logging overhead.
When {\tt T2} subsequently updates {\tt $R$}, it waits only until {\tt T1} completes its update, not for {\tt T1}'s commit. Upon acquiring the lock, {\tt T2} reads {\tt T1}'s speculative version ({\tt $R^{\prime}$}) from {\tt T1}'s \emph{TLA}, performs its update, and stores the resulting speculative version ({\tt $R^{\prime\prime}$}) in its own \emph{TLA}.  Since {\tt T1} is still active, \stx records a dependency edge {\tt T2}$\rightarrow${\tt T1} to ensure serializability, enforcing that {\tt T2} cannot commit before {\tt T1} and must abort if {\tt T1} aborts. {\tt T2} then immediately releases the lock.  {\tt T3} follows the same procedure, reading {\tt $R^{\prime\prime}$} from {\tt T2}'s \emph{TLA} and creating {\tt$R^{\prime\prime\prime}$} in its \emph{TLA}, with dependency edge {\tt T3}$\rightarrow${\tt T2}. Note that, the current version in the data page remains unchanged. Each transaction maintains its own speculative version in its \emph{TLA}, enabling concurrent progress while preserving the serialization order established by lock acquisition.
}

\module{Lock Management.}
\stx extends the existing lock table with transaction dependency tracking while preserving the standard lock structure. 
Locks are enqueued in their acquisition order, forming a linear sequence where each holder becomes the predecessor of the next, thereby tracking commit dependencies while maintaining full compatibility with existing lock-based protocols.

\module{Insertion and Deletion.}
\stx applies speculative versioning only to update operations. 
Insertions create new records with no prior version, so WW conflicts do not arise.
Deletions are handled by setting a delete flag on the record~\cite{mysql-purge,postgres-vacuum}, allowing \stx to treat them in the same manner as updates.

\module{Index Management.}
\stx requires no modification of index structures. 
Primary key values should be chosen to be rarely or never changed~\cite{db_sys_concepts}, so primary index entries remain valid throughout speculative versioning.
For secondary indexes, entries point to the current version in the data page, not to speculative versions. 
When a transaction reads a secondary index, it copies the speculative version from the preceding transaction without traversing the dependency chain.
Secondary index maintenance occurs only at commit time when speculative versions are installed into the data page, following the same procedure as conventional systems.


\comment{

\module{Lock acquisition and dependency management} TX Ti가 record Ri을 SFU mode로 접근시, {\bf lock 대상:} current version에 대해 W (or SFU or new type lock) lock을 획득 후, update 진행; 다른 TX Tj가 Ri's current version에 SFU lock hold 중인 경우 (WW conflict), Tj의 update 기다린 후, SFU lock 획득. Tj가 active이면, Tj-Ti dependency edge를 dependency graph에 추가. (24.7.17 SJ 대화와 동일?)
\comment{ 
}
(Lock contention/acquisition) With regard to lock acquisition, \stx does not assume or prefer any specific implementation. Instead, it When multiple transactions are competing for a lock of a hot object, \stx simply follows the implementation of lock acquisition in specific DBMS.  \reminder{Postgres/MySQL 상세 구현은 4장서 언급; 어떤 방식이 \stx에 유리한지 규명필요?}

(Read 관점) 1) Figure X에서 보여진 것 처럼, active version 중 가장 최신 version 값을 읽음 (Serializable mode?), 2) (단순 MV 관점?) TX timestamp 기준 old version 값을 읽음 (current version에서 undo log 따라가서) 

\module{Lazy update} Ri에 대해 SFU lock 획득 후, Ti의 Ri update를 redo log로 capture해서 자신의 TLA에 남기고, and, as explained above, 실제 record update는 수행하지 않고 추후 Ti commit시 lazy하게 current version에 propagate함. 
update내용을 current version에 immediate하게 반영하는 기존 DBMS와 달리, Ti의 Ri update를 current version에 반영하지 않음. 대신 나중 Ti가 commit시 current version으로 delayed 반영됨. 
(alternative 1 and 2) \comment{ 
1. WW conflict 없을 시, current version in data page에 immediate update하고, old committed version을 undo log (that is, old version store)에 기록함. 
2. update redo log에 대해, 원래 처럼 redo log를 남기고 LSN도 동일하게 확보하고, per-tx redo log area에는 LSN 도 같이 남기고 그 update가 최종 반영될 때, pageLSN도 반영하는 방법도 있으나, TX가 long일 경우, 그의 오래된 LSN이 CP\_LSN에 left behind해서 checkpoint 지연 또는 recovery 문제가 있음.} 

(부가 장점: no WAL and undo recovery) In \stx adopting lazy updates at commit, data pages at buffer cache are guaranteed to always keep only committed versions. Therefore, when such pages are flushed to the storage, non-committed update가 storage에 미반영되므로 WAL protocol 필요없고 failure시 undo recovery도 불필요. (+ CP flush write도 적게 발생(?))  what about redo recovery?   
({\bf redo log format: record version VS. physiological log}) 만일 current version에 write lock이 set되어 있고 final write이면, updating transaction T1이 redo log를 capture해서 자신의 redo log region에 새로운 record version copy (\reminder{Or, 이전 record에 대한 pure update log (column update시 PPL 형태?}) 를 남김. redo log 형태 기존 WAL format 확장 필요? 

{\bf (Early lock release)} In \stx assuming that transactions mostly updates records only once, record update에 해당하는 redo log를 TLA에 생성 후, current version에서 active TX의 SFU lock을 바로 release한다. (lock duration: SFU W 획득 - 마지막 update (not till commit) {\bf Update Visibility in \stx} update직후 lock release하고 바로 visible함 because, 2.3장에서 언급한 것처럼, most transaction update an object only once; ((what if multiple updates?))  dependant TXs를 cascading aborts하면 됨; 또는 EOT까지 hold하는 lock mode나 final update hint mode? In addition, {\bf commit point와 같은 update visibility 시점을 결정하는 개념~\cite{PWV} 미도입 이유.} 1) 아주 낮은 확률로 logical abort, 2) 적은 cascade abort overhead

(TX에서 record update 후에 TX's Read) 각 active TX는 자신이 생성한 redo log 기반의 record version을 읽음 (repeatable read 보장)

\module{Dependency Graph and Deadlock Resolution} Owner와 dependant TX간의 dependency graph (node - tx, edge - dependency b/w TXs, cyclic DAG?)를 유지함.  MySQL/Postgres/Oracle 같은 locking 기반의 CC를 지원하는 DBMS에선느 이미 TX간의 dependency graph를 기반으로 deadlock resolution 및 XXX 용도로 사용함. \stx서는 WW에 대해 go하지만, 동일하게 dependency 관계 유지 필요; cascades abort와 deadlock detection 목적으로 dependency graph 유지가 필요함.  (\stx 목적에 추가로 필요한 정보?) (edge insertion 시점) T1-T2간 WW conflict에서 T1 update 후에 T2가 lock을 획득했을 경우, T1이 아직 active상태이면, T1-T2 edge를 dependency graph에 추가함.  

((Deadlock resolution)) Deadlock condition in \stx = cycle in dependency graph?  (deadlock detection VS. avoidance) 
둘중 어느 방식? 성능 차이 크지 않다면 이유 서술 필요; ((Deadlock Detection:)) Update마다 cycle detection cost가 크기 때문에 나중에 Deadlock Detection을 하는게 더 낫다. (deadlock detection 시점?) edge insertion or precommit or commit?
(Deadlock Avoidance:) Pre-commit단계에서 체크하면 Cascade Abort cost가 커서 미리 체크/방지 방지 중요함. cycle detection cost vs. cascade abort cost 고려할 때,\reminder{혜진: \stx에서는 어떤 방법이 유리해서 어떤 방법을 선택했음 기술할 것. ++ commit시에 deadlock check하는 실험 결과?}  \hjin{Cycle check를 하기 위해서 잡는 Global lock-Page에 latch contention overhead 때문입니다. MySQL8.0에서는 background thread가 waiting thread를 대상으로만 짧게 mutex를 잡고 snapshot을 찍은 후 deadlock detection을 진행합니다. 즉 Lock-Page Latch contention overhead > Benefit of Early abort to reduce wasted work}

Do we need {\bf limit the number of TXs depending on a transaction} to reduce the penalty of cascaded aborts? 

\module{Avoiding False Deadlocks: \stx 기회?} \reminder{포함?} \comment{ 
Some Deadlocks in 2PL --> False, \stx and Group Commit (T1/T2 동시 commit --> deadlock free임.; (논문 서버 0.O2PL 폴더 참고 - DaMon Submission Version); 그렇다면, 진짜 deadlock은 어떤 경우? 
Under 2PL, each of the following two schedules will result in a deadlock situation. But, if both transactions commit, both schedules are serializable (True? Write Skew??) . In this sense, we regard those deadlocks as false one. And, though they will form a cycle in dependency graph, they are not true deadlocks. Thus, O2PL does not kill any victim transaction. Instead, both transactions will commit under O2PL. For Case 2, the schedule is not false only if both transactions successfully commit. For this, O2PL introduces the mechanism of group commit, where two transactions depending each other can commit atomically. As a future work, we will classify the taxonomy of diverse deadlocks in O2PL and investigate how to avoid all the false
deadlocks in O2PL.
\begin{verbatim}
r1[x], r2[y], w1[y], w2[x]: true deadlock  
w1[x], w2[y], r1[y], r2[x]: true deadlock  

- r1[x]값 기반 w1[y]이면 write skew! (Under SI)
- otherwise, r1[x]과 w1[y] 무관하면 false deadlock
- ****************************************************
- CC module 구분 불가(?), thus 보수적 Write Skew 판단? **
- ****************************************************
\end{verbatim}
}

{\bf Cycle Detection in PostgreSQL, TX longevity}(포함?)
\comment{ 
\begin{enumerate}
    \item ShadowTX의 blocking을 기다릴 필요없이 go 시킨다.
    \item commit wait을 진행하지 않아서 blocking은 줄어드는데, 바닐라에 없는 cycle이 발생할 확률이 높다.
    \item Cycle Detection을 언제 할 것인지가 중요함.(commit 시점에서 detection 비교 v.s. 업데이트 시점에서 detection 비교)
    \item 커밋시점에서 진행하는 경우, Tuple1에 대해서는 T1 <= T2가 관계를 갖고,  Tuple 2에 대해서는 T2 <= T1의 관계를 가지면은, Tuple1이 먼저 업데이트돼서 winner로 하는 경우 T1도 abort해야하는 문제 발생
    \item 따라서 업데이트 하는 시점에서 eager하게 cycle detection하는 로직이 필요함.
    \item 다만 업데이트마다 위 cycle detection을 하는것은 overhead가 크므로 해결할 방법이 필요함 (주요한 Challenge)
    \item TPC-C 에서 W-W blocking 관계 표시, TPC-C에서의 트랜잭션 길이를 조사해서 트랜잭션 길이를 활용하여 최적화가 필요하다고 언급할 필요가 있음. (기존 우리에게 유리한 Scenario, T1은 Commit까지 대기하는 시간이 100초라면, T2가 commit까지 대기하는 시간이 1초인경우, T2가 수행되는데 101초가 걸림.)
    \item 위 상황이 실제 워크로드에서 발생하는 일인지, TPC-C에서는 위와같은 상황이 얼마나 발생하는지, 다른 워크로드에도 존재하는지 확인이 필요함.
    \item 그리고 만약 abort상황에서 어떤 트랜잭션을 살릴 것인지 판단을 할 때 트랜잭션 길이를 바탕으로, 제일 긴 트랜잭션만 살림.
    \item 우리 Shadow-TX 입장에서는 블락킹 안당하고 가니까, 트랜잭션 ID 생성 순서 = 트랜잭션 길이 이므로  트랜잭션 ID 생성 순서가 제일 빠른 애들만 살리고 감.
    \item Pre-Commit 상태면 User Level에서 새로운 트랜잭션 issue할 수 있게 (tpcc 수정) -> 구현 불가능?
\end{enumerate}
}

\module{Handling record insertion/deletion} \stx is applied only for updating existing objects, not for insertion/deletion. (기존과 동일하게) Record insertions are directly made at current page and will be 기존 방식으로rollbacked. For record deletion, delayed deletion으로 처리?

\module{Index management} record가 update되었을 때, primary key 또는 update되는 column에 대한 second index가 수정된 active version을 consistent하게 pointing할 필요 있음~\cite{mvcc-eval} 참고 설명필요. Primary key는 update 안되므로 문제 없음. 하지만, Postgres와 같이 O2N에서는 index entry에 multiple entries를 유지함. \reminder{update되는 column에 대한 secondary index consistency는 어떻게 보장?}
}
\subsubsection{Commit}
When {\tt T3} commits in Figure~\ref{fig:stx-archi}, it must wait for all its predecessors ({\tt T1}, {\tt T2}) to commit. Otherwise, if any predecessor aborts after {\tt T3} commits and makes its update durable, {\tt T3}'s effects become irrecoverable~\cite{COWBOOK}.  For recoverability, \rcc introduces a \emph{pre-commit} protocol: even after completing its execution, a transaction waits in the pre-commit state until all its predecessors commit, then proceeds to commit. Upon commit, a transaction flushes its commit log record to storage and then installs its speculative versions from the \emph{TLA} into the corresponding data records in data pages.
In Figure~\ref{fig:stx-archi}, when {\tt T1} commits, {\tt T2} and {\tt T3} commit sequentially, advancing the current version of {\tt $R$} from {\tt $R^{\prime}$} 
through {\tt $R^{\prime\prime}$} 
to {\tt $R^{\prime\prime\prime}$}.

\comment{
When {\tt T3} commits in Figure~\ref{fig:stx-archi}, it must wait for all its predecessors ({\tt T1}, {\tt T2}) to commit first. If {\tt T3} commits while {\tt T1} or {\tt T2} is still active, the system becomes irrecoverable should {\tt T1} or {\tt T2} subsequently abort.
To cope with this case, \stx introduces a {\em pre-commit} protocol. When {\tt T3} completes execution but has active predecessors, {\tt T3} enters a {\em pre-commit} state rather than committing immediately. In this state, {\tt T3} has finished its operations but defers durability until all predecessors commit. This ensures that {\tt T3}'s effects become durable only after its predecessors have successfully committed, preserving recoverability~\cite{CLV}.

\module{Version Installation}
Once all predecessors have committed, \stx installs the speculative versions into the data pages. 
Version installation applies the speculative version from the \emph{TLA} to the data page and appends the corresponding redo log to the WAL buffer before writing the commit record.
This process ensures that data pages are updated in the same order as transactions accessed conflicting objects.
As shown in Figure~\ref{fig:stx-archi}, when {\tt T1} commits, \stx sequentially commits its pre-committed successors ({\tt T2}, {\tt T3}) in dependency order, installing each speculative version into the data page.
The current version of record {\tt$R$} is updated from {\tt $R^{\prime}$} through {\tt $R^{\prime\prime}$} to {\tt $R^{\prime\prime\prime}$}.
Each version installation assigns an LSN to the redo log entry and stores it as the \texttt{page\_lsn}.
The WAL preserves this dependency order, allowing recovery to restore transactions in the correct sequence.
}

\comment{
\module{Asynchronous Commit}\label{sec:async-commit}
In \stx, a pre-committed transaction is likely to wait longer for the I/Os of its owner transactions, which can offset the concurrency benefits. 
ore importantly, even if a transaction is ready to commit before its predecessors, it cannot issue its commit until all predecessors commit first. This commit ordering constraint forces short transactions to wait for long ones in the dependency chain.
To mitigate this, \stx can allow the session to initiate the subsequent transaction immediately after invoking pre-commit, rather than waiting for the final commit acknowledgment.
For pre-committed transactions, \stx defers acknowledgment until all predecessors have committed and the group commit completes.
This approach is analogous to external synchrony~\cite{xsyncfs}, where the system defers durability guarantees while allowing applications to proceed.
Furthermore, this approach follows the eventually durable model~\cite{ED:VLDB24}, where systems trade strict durability for improved performance without compromising correctness.
The asynchronous commit mechanism that decouples client-side request issuance from server-side commit completion does not affect correctness, as commit ordering and dependency constraints are still enforced by \stx at the server side\reminder{redundant claim?}.
\comment{ 
Asynchronous commit in RCC decouples client-side transaction issuance from server-side commit completion, but does not alter the correctness guarantees of the protocol. Although a client may submit a new transaction immediately after a previous one enters the pre-commit state, the server still enforces commit ordering and dependency constraints. A transaction is only made visible to others after its predecessors have committed, and dependency-based ordering is strictly preserved. As a result, asynchronous commit affects only when clients are allowed to issue subsequent transactions, not the serialization order or visibility rules enforced by RCC. Therefore, asynchronous commit does not compromise serializability or isolation guarantees.
}
}


\comment{
When  transactions T1 commits in \stx, dependency 관계 때문에 all its depending TXs가 모두 commit했으면, T1 can commit normally. 
Otherwise, if any of  its depending TXs, T2, is still active, T1 could not commit normally. If T1 commits, T1 become irrecoverable if T2 aborts. 
To cope with this case, \stx introduces the concept of {\bf pre-commit}. 
When a transaction T1 is to commit and any of its depending TXs (say T2) is still active, T1 follows the {\em pre-commit} protocol. If T1 commits (that is, its changes are made durable) while its depending TX, T2, is not yet committed and thus may abort due to deadlock or system fails before commit, then the system will be irrecoverable. To avoid this, T1 is finished but durable하게 하면 안됨; 대신 나중에 T2 commit 후 T1's effect의 durablity는 delay함 (cite IBM Fast Path, CLV etc);  \reminder{In this way, with \stx, the commit records of transactions are made durable in redo log in the same order they are allowed to access conflicting objects. (commit ordering).}
단, 사용자에게는 T1 ack해서 다음 tx issue를 허용해서 concurrency를 높일 수는 있음. (cite Eventual Durability 논문)

{\bf Redo Log, LSN 관리 및 pageLSN 수정} TX commit시 TLA의 모든 redo log를 한꺼번에 log buffer에 copy하면서 LSN 부여받고 이 LSN을 각 대응하는 page LSN으로 기록 (이 경우, DB에 발생한 event에 대한 chronological order와 LSN 부여 순서가 달라져서 문제??); TLA redo log는 Ti commit시 한꺼번에 global log buffer에 copy후 삭제됨. (일종의 
{\bf Garbage collection for active versions?}) 그리고 나서, {\tt pre-commit} record를 log buffer에 남김; Transaction table의 T2's status는 pre-committed로 남김; 

{\bf update LSN 값 부여 방식 2:} (기존 동일) record update의 redo log를 log buffer에 기록하고, LSN을 할당받고, 그 LSN을 redo에 기록하고, 나중에 apply할 때 반영; 이때 이슈는, 여러 Tx가 동일 page내 동일 record를 여러 개 update할 때, 각 update LSN들을 LSN increasing 순서대로 page에 반영; Otherwise, Page\_Last\_Modified 값이 out-of-order로 기록될 수 있음; 이 경우, Recovery만 더 많이 하게 되고 recovery나 consistency 측면 문제 없을지?  

\module{WAL log management} 기존 WAL semantic/protocol 가급적 유지; active log를 언제 WAL 반영? 1) 매 update, 2) precommit시, 3) (group) commit시  

\module{Asynchronous transaction issue with pre-commit}  In \stx, with small buffer cache, a precommitted transaction is likely to wait longer for the I/Os of its owner transactions, thus concurrency could be compromised. Thus, after the current transaction in a connection precommits, the next transaction from the connection need to be initiated for higher concurrency. For any depending TX; precommitted transaction을 issue한 session의 경우, 다음 transaction를 issue 가능함. Precommitted TX에 대해 commit ack 안 줌; 나중 group commit시 ack함; 이 측면에서 precommit은 external synchrony~\cite{xsyncfs}와 유사.
\textbf{pre-commit interface} (from ED:VLDB25) \stx are willing to some failure risk in exchange for better performance. That is, it can perform additional work, including executing additional transactions, without waiting for the fast transaction to become durable.~\cite{ED:VLDB24}.

pre-committed TX는 자신의 direct owner TX에게 Redo log를 전달\reminder{혜진: 왜? 추가설명 부탁} \hjin{owner TX가 나의 redo log를 apply해줘야 하기 때문에. Lock은 전달할 필요 없고, dependency graph edge만 owner로 옮겨주면 됨.} 한 후, user에게 pre-commit ack를 전달. Upon receiving pre-commit ack, the session can issue new transaction. TPC-C Benchmark (Client) 코드 수정/실험 통한 TPS 개선 측정 필요

\module{Group commit protocol}  protocol for owner TX: depending TXs which are in pre-commit state들도 동시에 commit됨. When Ti commits, \stx commits together all Ti-dependant TXs in pre-committed status while traversing dependency graph starting from Ti in DFS/BFS(?). \reminder{1. commit record (Ti, dependantTXs id list) WAL에 atomic하게 write하면 group commit 성공?, 2. (all TX's redo logs --> redo log flush), lazy update of TXs, and commit record flushing 3가지 연산의 순서?)}

\module{Lazy update installation on group commit} 
By applying only the last update in an ordered group of conflicting writes to the data page, \stx reduces in-place updates, undo generation, and index maintenance by up to 5× for hotspot records, translating into substantial runtime and scalability gains in TPC-C workloads. 

\reminder{When a transaction is pre-committing, it flushes its redo log to the storage if its depending transactions have committed. Otherwise, it waits?}

{\bf Group commit protocol's property:} According to partial Strictness in 2PL~\cite{Partial-Strictness}, Owner TX Ti commit record를 log buffer에 insertion시에, (without waiting for Ti's commit to be flushed to disk) 모든 depending TX의 commit record도 같이 log buffer에 insert해도 무방;;

{\bf Group commit: Issues:} 1) Atomic group commit 보장 방안, 2) group commit시 REDO log flush 시점?,  3) Group commit시 deadlock detection 통해 (selective) cascade abort 수행 필요, 4) (Failure시 Precommited TX의 commit 해석:) group commit에도 불구하고, atomic하게 precommitted T0 경우 dependency graph 기록해두고 owner TXs 모두 commit하면 T0 commit으로 해석

\module{Commit Ordering Serializability} \stx는 First-Writer 중심 commit ordering에 따른 serializability 보장.

\module{TX's lazy updates atomicity} delayed update 관련, Ti가 two or more record updates를 해서 여러 개 record Oi update시, Naive하게 여러 Oi에 “delayed update로 ATOMIC”하게 반영 필요; (예: T1이 O1, O2를 update한 경우에, P1에 O1 redo log, P2에 O2 redo log apply하고, P1, P2의 페이지 상태를 TX-Consistent한 것으로 한꺼번에 mark;); 이런 TX-consistent pages들만 storage flush하면, DB storage는 TX-consistent status --> no undo recovery! (Haeder paper 참고); 

\module{Group commit에 따른 lazy update 최적화}  만일 T1 <-T2 <-T3가 모든 Ri에 대해 update를 순차적으로 수행했고, group commit하는 경우, T1, T2, T3 차례대로, Ri current version에 수정 내용을 3번 반영할 수도 있지만, 마지막 T3's redo log만 반영해도 OK. 

\module{DeWitt Group Commit에서 precommit과 유사/차이점(?)}
STX에서는 TX logic을 다 수행했고 logical abort없이 commit intention 역할 vs. DeWitt에서는 + visible point 역할;  Transaction model에 포함해서, default로 session에서 ack시 pre-commit; 나중 true-commit시 async하게 ack함; session은 다음 TX를 바로 이슈 TX concurrency 제고. 만일 app에서 true commit시 ack 원하는 경우, TX mode를 pre-commit vs. commit으로 구분; commit mode에서는 depending TX의 경우, owner TX commit시 TX commit ack return.
((Eventual Durability @ VLDB 25 참고, Rethink the sync와도 유사))
}


\subsubsection{Abort} 
Unlike conventional systems where the abort operation must undo in-place updates, \rcc simply discards the \emph{TLA} of each aborted transaction without page-level rollback, because of the lazy update installation.
While improving concurrency by allowing transactions to read speculative versions, \rcc must cascade abort for all successors in the dependency chain when a predecessor aborts. For example, in Figure~\ref{fig:stx-archi}, if {\tt T1} aborts, {\tt T2} must also abort because {\tt T2} has read {\tt T1}'s uncommitted update. 
Any transaction attempting to access a record under an active cascading abort must wait until the abort completes. In practice, long cascade chains are rare because abort rates remain below 5\% in OLTP benchmarks (Table~\ref{tab:oltp-characteristics}).

\comment{
\stx enables high concurrency by allowing transactions to access a speculative version. However, this introduces the potential for cascading aborts when a predecessor transaction fails.
In conventional RDBMSs, cascading aborts are expensive as it requires undoing in-place modifications to data pages and can lead to unrecoverable states. 
In contrast, \rcc can handle cascading aborts efficiently by maintaining speculative versions as redo logs in the isolated \emph{TLA} rather than modifying the data page directly. Aborting a transaction simply discards its \emph{TLA} without any page-level rollback.
In Figure~\ref{fig:stx-archi}, consider {\tt T2} in a pre-committed state. If its predecessor {\tt T1} aborts, {\tt T2} must cascading abort because {\tt T2} has read {\tt T1}'s uncommitted update.\comment{,which constitutes a dirty read.}
If its successor {\tt T3} aborts, only {\tt T3} needs to rollback since {\tt T2}'s execution is not affected, and \stx removes all dependency edges pointing to {\tt T3}.
Both cases discard only the \emph{TLA} of each aborted transaction.
\comment{In both cases, abort handling simply discards the redo log entries and speculative versions from each transaction's \emph{TLA}, requiring no page-level modifications.}
Furthermore, when a predecessor aborts, \stx eagerly aborts all successors in the dependency chain to avoid unnecessary execution.
To prevent cascading aborts, \stx enforces that any transaction attempting to access a record modified by a transaction that is aborting or involved in a cascading abort must wait until the abort completes.
}

\comment{
(Cascaded Abort in \stx) 기존 방식~\cite{BHG} 대비 \stx cascade abort 동작방식; The cascade abort in \stx is light-weight in that the abort process for a transaction can be carried out by simply discarding the redo logs in TLA for transactions to be aborted due to lazy update and TLA redo log 기반

{\bf (Abort for Precommited TXs)} Cascading abort 문제 (T1 -> T2시 T2가 먼저 commit시 기존에는 해결 불가)에 대해, precommit은 임시 commit이고, T2의 update가 current version에 미반영되어 때문 undo 불필요; recovery시에서 group commit 없으면 실패한 tx로 간주가능; 단, T2의 WW blocking 관계없는 다른 object에 대한 udpate에 대해 undo 필요 (따라서, Undo log와 WAL protocol 준수 여전히 필요?); a TX’s redo logs without (explicit?) commit record는 recovery시 무효화.--> lazy update와 결합해서, pre-committed depending TXs에 대해 abort record 남기거나 (또는 단순히 commit record를 남기지 않을으로써) light-weight cascade rollback 가능하게 해 줌.
}

\subsubsection{Read}

\stx supports two read modes.
A plain {\tt SELECT} follows the conventional MVCC snapshot read protocol, returning a committed version consistent with the transaction's snapshot timestamp by traversing the version chain without acquiring locks or creating dependency edges~\cite{mysql-mv,postgres-mv}.
Speculative versions are invisible to snapshot reads.
A {\tt SELECT FOR UPDATE}~\cite{tpccmysql,mysql-locking-read,sfu-1,sfu-2} acquires an exclusive lock and reads the latest speculative version from the predecessor's \emph{TLA}, or the committed version if no active predecessor exists.
This creates a dependency edge from the reader to the predecessor, as described in Section~\ref{sec:design:update}.
All read-modify-write operations in \stx require {\tt SELECT FOR UPDATE} to ensure updates are based on the most recent state in the dependency chain.
A transaction that mixes both modes on records updated by the same predecessor observes values from different points in time.
The snapshot read returns the committed version, while the speculative read returns the uncommitted version.
This is consistent with \vanillamy under Repeatable Read, where a plain {\tt SELECT} and a {\tt SELECT FOR UPDATE} within the same transaction can return different versions of the same record~\cite{mysql-locking-read}.
The choice between the two read modes determines how \rcc tracks dependencies and what isolation guarantees it provides, as described next.

\comment{
\stx supports two read semantics depending on the transaction's intent.
For read-only accesses, \stx follows the conventional MVCC snapshot read protocol. 
The transaction reads a version consistent with its snapshot timestamp by traversing the version chain, without acquiring exclusive locks or creating dependency edges~\cite{mysql-mv,postgres-mv}.
This preserves the concurrency benefits of traditional MVCC for read-heavy workloads.
In Figure~\ref{fig:stx-archi}, when {\tt T1} re-reads {\tt $R$} after its update, it retrieves its own speculative version {\tt $R^{\prime}$} from its own \emph{TLA}, regardless of speculative versions created by {\tt T2} and {\tt T3}.
When a transaction reads a record with the intent to update ({\tt select-for-update}), it acquires an exclusive lock on the current version and reads the latest state, which is either the committed version or the predecessor's speculative version if the predecessor is still active.
This approach is prevalent in OLTP workloads, where \texttt{select-for-update} is used to prevent lost updates in read-modify-write sequences~\cite{tpccmysql,mysql-locking-read,sfu-1,sfu-2}.
As described in Section~\ref{sec:design:update}, this mode establishes a dependency edge when reading from an active predecessor, ensuring that the transaction's update is based on the most recent state in the dependency chain.
}


\module{Isolation Levels.}
Under \textit{Repeatable Read}, \stx tracks only WW dependencies from speculative writes. Snapshot reads and read-only transactions behave identically to vanilla MVCC systems.
Thus, if a transaction uses plain SELECT, \rcc is also susceptible to write skews~\cite{SI:TODS09}, as in conventional vanilla MySQL.
Under \textit{Serializable}, \stx additionally tracks RW anti-dependencies to prevent write skews~\cite{ansi-isolation}. To this end, plain \texttt{SELECT} statements are treated as \texttt{SELECT FOR UPDATE}, acquiring locks and performing speculative reads~\cite{mysql-locking-read} to detect potential RW conflicts with concurrent transactions.
For example, in \autoref{fig:stx-archi}, consider a transaction {\tt T5} that starts before {\tt T1} commits and accesses record {\tt R}.
If {\tt T5} is read-only, it reads the committed version {\tt R} from the data page, as speculative versions remain invisible to snapshot reads.
If {\tt T5} performs {\tt SELECT FOR UPDATE} on {\tt R}, it reads the latest speculative version $R^{\prime\prime\prime}$ from {\tt T3}'s \emph{TLA}, and \stx records a dependency edge {\tt T5}$\rightarrow${\tt T3}. 
Under Serializable, if {\tt T3} subsequently modifies a record that {\tt T5} has speculatively read, the resulting RW anti-dependency edge ({\tt T3} $\rightarrow$ {\tt T5}) creates a cycle in the speculative dependency graph, and \stx aborts both transactions to prevent write skew~\cite{ansi-isolation}. 

\comment{
\stx supports multiple isolation levels through selective dependency tracking. For snapshot isolation, \stx tracks only WW dependencies, allowing transactions to read from their snapshots while serializing concurrent writes to the same record.
Read-only transactions behave identically to traditional MVCC systems, performing snapshot reads from their transaction start timestamp~\cite{mysql-mv,postgres-mv}.
For example, in \autoref{fig:stx-archi}, consider any transaction (say {\tt T4}) that starts before {\tt T1}'s commit and accesses record {\tt $R$}.
If {\tt T4} is a read-only transaction, it reads the current version {\tt $R$} from the data page, as speculative versions remain invisible.
However, if {\tt T4} operates under Serializable isolation or performs select-for-update on {\tt $R$}, it reads the latest speculative version $R^{\prime\prime\prime}$ from {\tt T3}'s \emph{TLA} and \stx records a dependency edge {\tt T4}$\rightarrow${\tt T3}. \jhpark{표현 수정!!!}
This read-write dependency tracking prevents write skew~\cite{ansi-isolation} and ensures serializability.
}


\comment{
\module{Two read semantics in \stx} (1) update처럼 Select-For-Update lock 획득 후 {\bf most recent version read}; 이 경우에는 current version일 수도 있고, owner tx redo version일 수도 있고, 자신이 update한 경우에는 자신의 redo log일 수도 있음. 따라서, repeatable read 보장, (2) timestamp 기준 old version 중 적당한 version 읽기;; \reminder{어떤 semantic인가에 따라 \stx 보장 isolation level, SR or SI, 결정? read/write skew 발생?} ---  (1)의 경우, read만 한 경우에도 T1-T2간 (T1이 active 상태이면) dependency edge 추가 필요함. 

\stx는 두 가지 모두 자신이 읽어야 할 version tracking 가능하고 따라서 repeatable read 보장함. 반면, 기존 MVCC에서는 current version만 읽게 하면, repeatable read 보장 못함. \reminder{기존 MVCC 등 처럼 timestamp 관리 필요?}

\module{기존 SI/MVCC와 차이점} 기존 MVCC서는 active transaction이 current version에 대해 immediate/direct update 하므로 TX Read를 위해 committed undo log를 읽음. 또한 단순 select는 TX timestamp snapshot으로 읽음 vs. \stx는 이런 snapshot read 방법도 사용가능하고 또한, update를 current version 대신 TLA에 redo log로 남기므로 current version or old version read, 또는 자신의 TLA redo log version을 읽어 repeatable read 달성.
\begin{itemize}
    \item \stx에서 hot object에 대해 각 TX가 lock 획득 순서대로 순차적으로 TLA에 future version 관리시 각 dependant는 {\bf 각 record별 current version에 대해 'SELECT-FOR-UPDATE' lock 획득 후}\hjin{Select for update 말고 exclusive lock을 쓰는 건 어떨까요?}, latest version에 대해 update 수행; recall that 이전 lock hold TX가 final W 후에 lock release함. 
    \item Select for update에서는 MVCC 최신 version 을 읽어야 함. \stx서도 최신 버전 (하지만 \stx서는 latest future version)을 읽어야 함. 
    \item Read-only transaction이 precommit 상태의 값을 읽어서, distributed tx로써 다른 site에 copy하면?(in CLV paper); \stx에서 어떻게 처리하면 되나? 
\end{itemize}

\module{SELECT vs SELECT for update}
select의 경우 내 스냅샷을 기준으로 repeatable read를 보장하기 위해 과거 값을 읽는다. select for update는 기존 select for update의 의미를 그대로 사용하여 '가장 최신의 버전'을 읽는다. 
(r: read, w: write, rw: read-and-write(select for update and update), c: commit, a: abort)
\begin{itemize}
\item S1: r1[x=100]w1[x=110]...r2[x=100]...(a1 or c1)...c2
\end{itemize}
이때 r2는 w1 이전의 x 값을 읽는다. r2는 w1의 값을 읽지 않았으므로 dependency를 형성하지 않았기 때문에 w1의 commit 여부와 상관 없이 commit할 수 있다.
\begin{itemize}
\item S2: r1[x=100]w1[x=110]...r2[x=110]w2[x=120]...(c1c2 or a1a2)
\end{itemize}
이때 r2는 w1 이후의 future version을 읽고, w2는 in-place update가 아닌 TLA에 future version으로 작성한다. r2는 w1을 읽었으므로 w1이 abort하면 cascading abort하게 된다. c2는 반드시 c1이 선행해야 한다.

write/read 관련해서, when dirty pages are flushed to the storage, 1) memory 상주 future logs들은 어떻게 관리; 반영후 write? or 그냥 반영된 내용만 write/read해도 추후 문제없을지? 2) system LCN 정보등의 변경에 따라, memory에 상주하는 future log apply시 any 문제점?

\module{Isolation Levels in \stx: Repeatable Read << \stx << Serializable}
Select For Update를 통한 Uncommitted version을 읽음으로써 기존 Repeatable Read에 존재하는 Anomaly가 해결이 가능함. 특히, 어떤 값을 Depending하여 Update하는 경우에는, Write Skew 해결이 가능함

1) Is the isolation level in \stx serializable?;  2) \stx는 RepeatableRead 보장하면서도 더 빠르게 동작? (확인 필요) 3) Different iso. level 지원, 4) Read/write skews? \reminder{아래 포함/논의?}

(SI or SSI) Unfortunately STX allows the write skew problem (cite), which ... , because the short read lock is immediately released immediately for higher concurrency. When two transactions read and update two objects interchangeable. For this reason, the default isolation level by STX is SI. STX can avoid the write skew problem and thus achieve the serializable isolation level in two ways. 1) by tracking rw dependency as in Postgres SSI, 2) holding the read long and thus blocking other transaction’s write till the commit. By swallowing the run time overhead or compromising the concurrency STX can support Serializable in addition to SI.

\begin{itemize}
\item  {\bf Write Skew in SI}: When T1 reads r1 and sets r2 while T2 reads r2 and set r1, SI results in write skew. In \stx, if read r1 and r2 as SFU, then will result in deadlock, thus avoiding WS. three alternatives: 1) every read in SFU mode, 2) RW dependency - deadlock 귀결 - detect/resolution? 
3) accept WS, 
\item  {\bf Lost update}: When T1 sets r1 and r2 as 100 while T2 sets r2 and r1 and 200, 2PL results in deadlock due to WW conflicts. In \stx, deadlock will be detected, but if we group commit either T1-T2 or T2-T1 order and apply each TX's blind redo logs, then the result will be consistent either in (100,100) or (200,200) 
\end{itemize}

\comment{ 
\begin{itemize}
    \item  Is the isolation level in \stx serializable?;  
Repeatable Read 기반?;;;  Final Write를 읽고 + Cascading Abort 지원하니까 Committed version을 읽는다고 볼 수 있음;  mysql도 동일하게 작동

\begin{verbatim}
T1: Select(a)           SelectForUpdate(a')
T2:          Update(a')
\end{verbatim}

\item \stx는 RepeatableRead 보장하면서도 더 빠르게 동작? (확인 필요)
\begin{itemize}
    \item MySQL repeatable read에 대해 SFU에 대해서는 RR과 동일한 semantic 보장
    \item What about Postgres? SI? 
    \item Truly Serializable level 지원: future work으로... challenge는 어떤 것이 있는지?
    \item Write skew 발생? 
    \item Serializable?: MVCC는 대부분 SI 지원. Can we make NV-IPL CC Serializable by leveraging IPL log?(\ie, write skew 해결 가능?)
\end{itemize}

\item Diff Isolation Levels in InnoDB: under R.R. mode(default in InnoDB), W-W conflict 도움; Serializable mode에서는 R-W conflict으로 인해 성능 감소하는데, \stx로 R.R 만큼 성능 개선 가능?; 

\item Read / Write  Skew~\cite{Berenson95} under \stx? 
\begin{itemize}
\item Write Skew in S.I. is a disguised version of deadlock in 2PL CC. Can we prevent it? How?
\item R.R. mode에서 read/write skew 문제 발생하는데, IPL log와 \stx로 해결 가능? 
\item Read Skew in S.I. 
\end{itemize}
\item  ShadowTX는 결국에는 dirty read/write라 write skew 문제는 해결할 수가 없음. 하지만 기존 PostgreSQL에서도 Serializable에서 Write skew를 해결하기 위해서 SSI (Serializable Snapshot Isolation)를 사용하여 해결하고 있음 \cite{SSIpostgres}. 따라서 ShadowTX도 SSI 처럼 별도의 Data Structure와 Lock을 도입하면 Serializable도 해결할 수 있을지?
\begin{verbatim}
r1[x], r2[y], w1[y], w2[x]: true deadlock  
\end{verbatim}

\begin{verbatim}
r1[x],w2[x],w2[y],c2,r1[y]:read skew(일종의 non-RR)
\end{verbatim}
\end{itemize}
}
\subsubsection{Checkpoint} CP in \stx? lazy update + commit에 따라 checkpoint가 덜 필요? 
}


\subsection{Columnar RCC (RCC-C)}\label{sec-columnar}
\rcc resolves WW conflicts by allowing update-conflicting transactions to be pipelined through speculative versioning with commit ordering. However, dependency tracking in \rcc operates at the record level. Even when two transactions update different columns of the same record, \rcc creates a dependency edge between them, treating them as conflicting.
For example, in TPC-C~\cite{tpcc}, \texttt{New-Order} updates \texttt{d\_next\_o\_id} of a \texttt{District} record, while \texttt{Payment} updates \texttt{d\_ytd} of the same record~\cite{columnar-locking}.  Since \rcc defers deadlock detection to commit time, \emph{false WW conflicts} between \texttt{New-Order} and \texttt{Payment} --- which are independent at the column level --- proliferate significantly under concurrent execution, becoming the dominant source of unnecessary overhead in TPC-C. 

Such false conflicts, which are inherent to record-level dependency tracking in any concurrency control scheme, lead to two performance problems that are particularly pronounced in \rcc due to its speculative pipelining and commit-time deadlock detection. First, short transactions (\eg, \texttt{Payment}) in pre-commit state must wait idle for long transactions (\eg, \texttt{New-Order}) to commit, prolonging commit wait time and thus transaction latency. Second, and more importantly, false conflicts at the record level can form false dependency cycles, \comment{between transactions that modify disjoint columns,} causing unnecessary aborts \comment{even though no true conflict exists at the column level,} and thus severely degrading both transaction latency and throughput.

Existing approaches to column-level concurrency control, such as column-level locking~\cite{PCLL}, column-level static analysis~\cite{IC3}, and static timestamp splitting~\cite{TSTO}, either impose prohibitive overhead or require prior knowledge of workload access patterns, making them impractical for production systems. Prior speculative versioning schemes~\cite{BAMBOO, rebirth_retire} also cannot support column-level concurrency control, as they maintain full-tuple snapshots as local copies, making column-disjoint install order-dependent and requiring serialization even for
non-conflicting column updates.

\rcc maintains redo logs as speculative versions in the \emph{TLA}. Since these redo logs employ physiological logging~\cite{mysql,oracle}, each entry naturally captures which columns a transaction modifies as (offset, length, value) deltas, requiring no static workload analysis. \rcc exploits this to refine WW dependency tracking to the column level: when two transactions modify disjoint columns of the same record, \rcc does not create a dependency edge between them. At commit time, each transaction installs its redo log deltas directly into the original data page at the corresponding column offsets, updating only the modified columns in place. Since column-level deltas on disjoint offsets are independent of each other, their installation order does not affect the final record state, allowing these transactions to commit in parallel without violating correctness. We call this columnar RCC as \emph{\rccc}. 

As illustrated in Figure~\ref{fig:stx-archi}, {\tt T1}--{\tt T3} sequentially update {\tt colA} of record {\tt $R$}, forming a dependency chain {\tt T3}$\rightarrow${\tt T2}$\rightarrow$ {\tt T1}. When {\tt T4} updates {\tt $R$}'s {\tt colB}, its delta does not overlap with those of {\tt T1}--{\tt T3}, so \rccc creates no dependency edge between {\tt T4} and {\tt T1}--{\tt T3}. {\tt T4} reads the current version of {\tt $R$} from the data page (since no speculative version of {\tt colB} exists in any \emph{TLA}), stores its own delta as a speculative version ({\tt $R^{*}$}) in its \emph{TLA}, and commits independently without waiting for {\tt T1}--{\tt T3}. 
If T4 were to read {\tt colA} speculatively, \rcc would create a dependency edge {\tt T4} $\rightarrow$ {\tt T3} as in base \rcc. When the modified columns of two transactions overlap, \rccc falls back to record-level dependency tracking, preserving the same correctness guarantees as base \rcc. 

\comment{ 
Note that columnar refinement applies only to WW dependencies. Under Serializable, RW anti-dependencies are tracked at the record level to ensure sound write skew detection.
Each transaction reads records at the row granularity, while version installation applies only the modified columns from the redo log.
}
\rcc's redo log-based design enables column-level fine-grained concurrency control without the prohibitive overhead of per-column lock management~\cite{PCLL} or the static workload analysis~\cite{IC3,TSTO}. By eliminating false WW dependencies between transactions accessing disjoint columns, \rccc avoids unnecessary pre-commit waits, false deadlocks, and the resulting aborts, improving both transaction latency and throughput over base \rcc.


\comment{
    \revised{WW conflict는 column별로 관리하고 스냅샷은 row 단위로 관리함. 스냅샷 시점 자체는 처음으로 row lock을 획득하는 시점임. + 같은 row에 다른 column을 업데이트하는 경우는 SFU를 다시해야함. T1: SELECT colA, 
    T2: SELECT colB, T3: SELECT colA, colB인 경우에는 T3가 디펜던시 엣지를 두 개 생성}
    \hjin{columnar CC의 장점+ - local version 불필요; tuple level에서는 predecessor가 다른 칼럼을 수정했을 때 해당 내용만 redo log에 남게 되므로, 다른 칼럼의 speculative version을 읽기 위해서는 모든 predecessors의 redo log를 확인해야 목적 column정보를 읽을 수 있었음. 하지만 columnar cc로 column단위로 dependency를 관리한다면 그냥 바로 앞 predecessor의 redo log만 즉시 apply하여 읽기만 가능 like MVCC(version 저장 불필요 - 실제 read 발생 시에만 apply) / 실험 진행 필요}
    \msj{T1(Col A) -> T2(Col B) -> T3(Col A) -> T4(Col B) -> T5(Col A,B)의 경우, dependency를 가지는 모든 칼럼에 대한 최신 버전만 redo log apply 가능(Col A: T3, Col B: T4)? 즉, apply 시간은 column 수 +  boudning }
}

\subsection{Recovery}\label{sec-recovery}
\comment{
\jhpark{251102 meeting: 
- recovery 기능 어떻게 서술? (lazy apply, pre-commit thread)
- precommit record에 트랜잭션간 dependency 정보를 넣으면서 최적화 방안 탐구
precommit은 async하게 동작; 
T1 - T2 - T3;   pc(T3)-pc(T2) 정보를 미리 알고 있으면 logging/recovery 관점에서 어떤 효과?
committed 로그에 대해서만 WAL로그에 남기는 net effect?

RecA: T1<-T3 / RecB: T2<-T3<-T4 의 conflict 상황이 있을 때
1. T4 commit 시도: T3와의 conflict가 해제되지 않았으므로 pre-commit 후 commit-wait
2. T3 commit 시도: T1과의 conflict가 해제되지 않았으므로 pre-commit 후 commit-wait
3, T1 commit 시도: 커밋 가능, commit record 남기고 lock 해제 -> wake_up(T3)
4. T3: T2와의 conflict 해제되지 않았으므로 다시 commit-wait
5. T2 commit 시도: commit record 남기고 lock 해제 -> wake_up(T3)
6. T3: 모든 conflict 해제되었으므로 future log apply 후 commit record 남기고 -> wake_up(T4)
7. T4: commit
위 시나리오 상 commit record는 (T1, T2, T3, T4) 순서로 남게 됩니다. (T1과 T2 끼리는 순서 상관 없음)
precommit 없으면, T1,T2,T3,T4 각각 커밋할 때 wal buffer에 있는 wal mutex, flush 없이도, 
}
}
Like conventional MVCC systems, \stx employs ARIES-style recovery with redo and undo logs to ensure atomicity and durability.
In MySQL, \rcc defers log materialization until all predecessor transactions have committed.
Once dependencies are resolved, \rcc copies redo logs from the \emph{TLA} to the global WAL buffer, assigns LSNs and applies changes to data pages.
Since the \emph{TLA} is volatile, crash recovery relies solely on persisted WAL records, following the same ARIES recovery.
In PostgreSQL, speculative versions are stored as separate tuples rather than \emph{TLA} entries. However, the same WAL protocol applies, as tuple creation and visibility metadata follow the same vanilla logging scheme~\cite{COWBOOK,postgres-recv,mysql-recv}.

\comment{
Like conventional MVCC systems, \stx employs ARIES-style recovery with redo and undo logs to ensure atomicity and durability.
While \vanillamy applies updates immediately and generates logs during execution, \stx defers log materialization until pre-commit.
While \vanillamy applies updates immediately and generates logs during execution, \stx defers log materialization until all predecessor transactions have committed.
Once dependencies are resolved, \stx copies redo logs from the \emph{TLA} to the global WAL buffer, assigns LSNs, and applies changes to data pages.
This approach maintains identical WAL content to vanilla systems, enabling standard ARIES recovery.
In PostgreSQL, speculative versions are stored as separate tuples rather than \emph{TLA} entries. However, the same WAL protocol applies, as tuple creation and visibility metadata follow the same vanilla logging scheme.

Upon crash recovery, \stx follows the same three-phase process as vanilla systems: Analysis, Redo, and Undo.
During Analysis, \stx scans the WAL to identify committed and uncommitted transactions. 
The Redo phase replays all committed logs to restore data pages. The Undo phase rolls back uncommitted transactions using undo logs. 
In MySQL, since the \emph{TLA} is volatile and lost upon crash, recovery relies solely on persisted WAL records.
In PostgreSQL, speculative tuples persist on disk, so it simply replays WAL records to restore visibility metadata.
As a result, \stx can integrate tightly with the existing recovery logic of vanilla MySQL and PostgreSQL, requiring no changes to the core recovery mechanisms. 
}
\stx also reduces I/O overhead as a side benefit of its design.
First, eliminating WW conflicts shortens transaction lifetimes. In vanilla systems, a transaction waiting for a write lock remains active for longer. During this period, its previously modified pages can be flushed to disk through checkpointing or buffer eviction, and each such flush requires WAL writes including full page images. \stx removes this blocking period, so transactions complete faster and fewer page flushes occur during transaction execution.
Second, \stx's deferred update policy keeps data pages clean from uncommitted modifications. Vanilla systems apply updates in place immediately, so flushing a page with uncommitted modifications requires persisting redo logs, undo logs, and associated page images. \stx stores speculative versions only in the \emph{TLA} and installs them at commit. These pages remain unmodified, which reduces page write traffic and the associated WAL overhead.

\comment{
Like conventional MVCC systems that supports ARIES-style recovery, \stx also employs redo and undo logs to ensure data atomicity and durability. 
Updates are applied immediately in both vanilla MySQL and PostgreSQL.
In contrast, \stx retains modifications in the \emph{TLA} and applies them to the data pages only when the corresponding transaction invokes a commit operation.
When a transaction pre-commits, \stx copies the corresponding redo logs in the \emph{TLA} to the global redo log buffer. At this point, \stx assigns LSNs, applies the changes to the current version of the data page, and updates the page\_lsn. 
Thus, similar to \vanillamy that employ per-page basis logging scheme, \stx adopts a lazy logging, where log records are materialized and persisted only when the transaction invokes a pre-commit operation.
This approach enables both efficient version construction and robust recovery.
While \stx, like conventional systems, employs WAL to ensure atomicity and durability, relying solely on WAL for version construction is inefficient, as the log is maintained at the transaction level rather than the page level. 
Constructing future versions of a data page in this manner requires scanning the entire WAL to extract relevant modifications, incurring significant overhead, particularly when log records must be retrieved from disk. 
To overcome this limitation, \stx introduces a dedicated \emph{TLA} structure that supports efficient page-level version construction without the need for costly WAL scans, thereby achieving rapid version build while preserving the reliability of traditional WAL-based systems. 
Furthermore, by deferring the propagation of updates until all transaction conflicts and dependencies are resolved at commit time, \stx guarantees that only consistent and finalized changes are recorded and applied to data pages. 
This design simplifies recovery, as only committed updates need to be replayed, and enables the batching of related updates during group commit, which amortizes log write costs and further streamlines the recovery process.

\subsubsection{Prototype Implementation}
We implement the recovery scheme on \vanillamy, which employs ARIES-style recovery~\cite{COWBOOK}.
When a transaction commits, \stx copies its redo log entries from the \emph{TLA} to the global WAL buffer, assigns LSNs, and applies the changes to data pages. 
Since the WAL contains identical information to \vanillamy, \stx leverages the \vanillamy design with minimal modifications.
The ARIES-like recovery module both MySQL and PostgreSQL involves three steps: Analysis, Redo, and Undo~\cite{innodb-recovery, postgres-recv}.
Here, we outline how \stx expedites each phase by introducing the \textbf{TLA}, enabling \textbf{lazy logging}.

\module{Analysis}
\stx scans the WAL log file from the last checkpoint to identify committed transactions for redo and uncommitted transactions for undo.
While scanning and parsing redo logs, the redo log entries are copied to a volatile buffer and indexed into a hash table with the page as the key for fast lookup.

\module{Redo}
Once redo logs are scanned and parsed, the recovery process replays the redo logs from the earliest point in the WAL file identified in the analysis phase.
Since \stx materializes \emph{TLA} contents to the WAL at commit time, the redo logs contain identical information to \vanillamy.
Upon completion of the redo phase, \stx restores the state of the system to the state it was in at the time of failure. 
In addition, \stx also restores on-disk organization structures (\ie, rollback segments, undo segments) for undo as in \vanillamy.

\module{Undo}
When the redo phase is complete, \stx rolls back all effects of uncommitted transactions to ensure atomicity.
This follows the same approach as \vanillamy, utilizing undo logs to revert all modifications made by transactions that were active at the time of the crash.
Asynchronous rollback threads handle this process, ensuring that the database returns to a consistent state.
\jhpark{However, because \stx applies updates only at commit time, transactions that were active but not yet committed at the time of failure have no page modifications to undo.}
}


\comment{
\reminder{NV-PPL 버전: SQLite/PPL, /SSL 도 참고해서 }
\jhpark{
1. \stx에서 apply를 하면 LSN 관리는 어떻게 하는지;  (-> Ti precommit시 Ti TLA에 있는 redo log들에 대해 gloabl redo log buffer에 copy하고 LSN 부여받고, 이를 각 data page current version에 반영함. page\_LSN으로 setting)    
2. atomicity durability 관점에서 문제가 없는지.
3. \stx redo log랑 WAL로그 별도로 관리하는 이유 설명 
성재: System Redo log는 sequential하게 저장된다. System Redo log는 TRX 별로 또는 Page 별로 저장되지 않기 때문에, 추후 미래 버전을 생성하는 과정에서 Sequential log를 전부 scan해야하는 (또는 log index를 활용해야하는) overhead가 있다. 따라서 트랜잭션 별로 자신이 업데이트한 내용에 대해서 저장하는 apply용 리두로그가 필요하다. (MySQL에서 구현되어있는 Undo log도 트랜잭션 단위로 저장하는 이유가 , rollback의 단위가 트랜잭션 단위이기 때문에 별도로 저장하는 것과 비슷하다고 말할 수 있다면? 좋을 것 같습니다. )
System Redo log는 I/O overhead가 발생한다. System Redo log의 경우 Redo log buffer -> Redo log file에 저장하게 된다. 커밋되지 않은 트랜잭션에 대한 redo log가 먼저 DIsk로 내려갔다면, 미래버전을 생성하기 위해서 I/O overhead가 발생하므로, apply용 redo log를 메모리에 별도로 관리하는 것이 필요하다.
}

System failure is inevitable in large-scale database servers, particularly in cloud environments where hardware failures are prevalent~\cite{azure-recv,instantrecovery}. Upon system restarts, databases typically undergo a multi-phase recovery process to ensure the atomicity and durability of transactions.

However, \stx simplifies the restart procedure compared to conventional systems. Updates are made durable in NVDIMM, clustered by page, and ordered chronologically. This structure aligns effectively with existing recovery schemes such as ARIES~\cite{Aries}, where pages serve as the recovery unit.

\module{일반적인 ARIES style recovery assume} 

\module{Precommit 관련 recovery} (초안)트랜잭션들이 Precommit을 진행한 것을 Depending TRX와 함께 Redo log에 남김. 예를들어 TX 1에 depending하고 있는 TX 2가 Precommit을 진행하면 {Precommit, TRX 2, TRX 1}상태를 남김. TRX 1에 커밋하고 System Crash가 나서 recovery를 진행하면 Redo Log를 읽어서 TRX 2도 commit처리함

\module{Group commit 관련 recovery}

The {\bf transactional recoverability} indicates the property that all of a committed transaction's writes are made durable, and that none of an aborted transaction's writes are made durable or observed by committed transaction~\cite{PWV,BHG} In this sense, \stx satisfies both of these conditions. 특히, 두 번째 관련해서, update는 durable하지 않고, 진행중인 transaction에 의해 observed write는 결국 owner transaction이 commit해야 eventually valid하기 때문. 

\module{Recoverable schedule~\cite{BHG,PWV}} This discipline prevents T1 from aborting after T2 (which read T1’s data) has already committed. Schedules that satisfy this property are called recoverable~\cite{BHG}. 이 관점에서 \stx는 recoverable schedule을 지원함. 

\module{TCC 관련 (포함?)}
\reminder{Recovery 구현 overview - 필요하면 아래 uncomment}
\comment{ 
\subsubsection{Prototype Implementation}\label{sec:recv}

\stx leverages the existing \vanillamy design, making it compatible with other DBMSs that use ARIES-style recovery.
By using durable per-page redo logs in NVDIMM, NV-PPL enables redo-less and undo-less recovery for most \revised{PPL} pages.
The ARIES-like recovery module in MySQL involves three steps: Analysis, Redo, and Undo~\cite{innodb-recovery}.
Here, we outline how \stx expedites each phase through the use of persistent PPLs in NVDIMM.

\module{Analysis}  \stx constructs a mapping table by scanning the PPL region in NVDIMM and parsing the redo log file. This process identifies the PPLs in NVDIMM, enabling instant access to them during page read operations. Additionally, \stx records the transaction IDs of active transactions at the time of failure, ensuring proper handling of PPL pages modified by these transactions.

\module{Redo phase}  \stx handles redo recovery differently for two types of pages: normal and \texttt{PPL} pages. For all normal pages, recovery follows the same procedure as in \vanillamy. However, for most \texttt{PPL} pages, \stx can achieve redo-less recovery. Since all updates for each \texttt{PPL} page are already durable as PPLs, \stx can skip applying redo logs for those pages. Thus, any changes made to \texttt{PPL} pages by committed transactions are guaranteed to persist even in the event of crashes.

When a page is de-PPLized, it ensures that the page has been safely flushed to disk before removing the PPLs of the page.
If a failure occurs during the flush process, it is crucial to retain updates made before the de-PPLization to ensure their persistence.
To do this, \stx orchestrates the interaction between PPLs and redo logs in the WAL file by introducing a \texttt{ppl\_lsn} that records the LSN of the latest PPL.
For a de-PPLized page where the \texttt{ppl\_lsn} is greater than the lsn in the page header, it indicates that the page has not been safely flushed to disk. In this case, \stx first applies PPL to the old disk page  and then replays the redo logs from the WAL file to rectify the situation.
Conversely, if the \texttt{ppl\_lsn} is smaller than the lsn in the page header, \stx skips applying redo logs and returns the PPL blocks of the page, since the de-PPLized page has already been persisted to disk.

\module{Undo phase}  \stx restores uncommitted updates for normal pages using undo logs while it skips replaying PPLs from incomplete transactions for PPL pages. This distinction arises because \stx keeps the original pages on disk unchanged, while PPLs are persisted in NVDIMM, enabling the identification of uncommitted transaction IDs within the PPL.
The undo process in \stx similarities with the logical revert mechanism used in \cite{azure-recv}, which restores the committed version to the data page. Unlike this undo-less scheme\cite{azure-recv}, \stx allows the system to simply skip PPLs associated with incomplete transactions.

The recovery process in \stx is {\em idempotent}. \stx determines the need for redo application by comparing the page's LSN with that of the log records. If the page's LSN exceeds that of the log records, it indicates that the page has not only been flushed but also that any applicable PPL records have been applied after the application of this record, ensuring idempotency within \stx.
}
}

\comment{
\subsection{Limitations} \reminder{Omit?}
While \stx effectively reduces lock contention and improves transaction throughput through speculative versioning, it introduces certain trade-offs that we discuss below.

\module{Page Split and Merge}
 \stx does not apply speculative versioning to page split and merge operations. Unlike tuple updates, these operations require immediate installation to maintain page consistency.
 \stx thus follows the vanilla protocol for such operations.
 Extending speculative versioning to structural modifications remains an interesting direction for future work.
 
\module{Deadlock and Cascade Abort} \stx may encounter more deadlocks in certain scenarios, as transactions holding speculative versions can block other transactions waiting for the same resources.
When predecessor transactions abort, all dependent transactions must also abort.
This overhead increases with longer dependency chains and interleaved update patterns. 
However, because \stx does not install speculative updates to data pages, aborting a transaction requires only discarding its \emph{TLA} without modifying the underlying pages. 
The concurrency benefits of \stx outweigh this overhead due to the lightweight abort scheme (detailed in Section~\ref{sec:perf:serializable}).
}

\comment{ 
\begin{itemize}
\item Scalable to Core \#?: if so, why?
\item Robust to access skewness?
\item Limitations? (Review crititics?) Concurrent short vs. long transaction interference? 
\end{itemize}

\module{On Final Write} \reminder{아래 포함?}
Until now, we made two assumptions. 1) transactions access each object only once, and 2) (read+update의 경우) 항상 SFU로 접근. 1)의 경우, 실제 많은 경우에 성립함. 그리고, \stx에서 read의 경우, 여러번 하더라도 동일 redo log version을 접근함으로써 repeatable read 보장해서 문제 없다. 다만, 특정 object를 owner TX가, depending TX가 존재하는 경우, 다시 update하면, all depending TXs should be cascadingly aborted. 이런 경우가 자주 발생하면, 과도한 abort가 발생하고, aborted tx의 execution을 waste시키게 됨. 논문서 다루지않지만, final access user hint나 auto-detect 결합하면 {\bf (LLM-assisted detect가능??)}, abort/waste 회피하면서도 동시성 최대화 가능.
\begin{itemize}
    \item Final access hint/auto-detect 가능~\cite{PWV,BAMBOO,tx-assumption}: Canned transactions에서 static analysis 통해~\cite{CLV}. Bamboo justification 참고. 
    \item 또한, better concurrency를 위한 user responsibility 부각됨 - 아래 URL은 CIDR'23~\cite{cidr23-cc} 제시한 weak isolation에서 CC 문제점 및 consequence 설명 \\ 
(\url{https://docs.google.com/spreadsheets/d/17Kt55T7dwadYnfZMYfMwm7bXJvn7v9GFBUnaOZNHrIY/edit#gid=2053744148})
\end{itemize}

단, final write의 경우, hint/detect를 통해 explicity하게 ensure하면 불필요한 수행 및 rollback을 막아서 \stx 장점 극대화. 

((Final Read/Write hint 불필요)) 지금까지 final W/R hint를 assume했지만, \stx can still work without such hints. When an owner transaction T1 update an object, \stx can simply abort all depending (with respect to the object) and not-yet-commmitted transactions  (<-T1 commit없이는 T2 commit 못함.) Of course, compared to the case with final W hint,  불필요하게 rollback을 하는 overhead, though very rare, 가 있음. In the case of final R, T1의 O1 last read 후 T2가 O1을 업데이트 해도, T2가 current version을 덮어쓰지 않고, T1은 자신의 redo log를 통해 자신의 version 값을 읽을 수 있어서 final R hint 없이도 repeatable read를 자연스럽게 보장함.
}

\comment{
\subsection{Optimistic Locking}
\label{sec:opt-locking}
Optimistic locking is a new locking protocol that leverages final
access patterns to allow eager lock releases to allow more concurrency. Updates are
made immediately available once the locks are released. This allows speculative
reads and necessatates transactions to commit in dependency order to guarantee
serializability. In this section, we formally define and analyze optimistic
locking,  followed by discussions.  Section~\ref{sec:impl} then provides a
concrete design for implementing optimistic locking in modern main-memory
database systems.





\module{Protocols}

\module{Correctness}
Now we formally define and prove optimistic locking's correctness properties.  We
begin defining our transaction mode, final accesses and schedules that
consist of them. 

Our proof follows a similar methodology to that of partially strict 2PL (early
lock release) by Soisalon-Soininen and Yl\"{o}nen~\cite{PS-in-2PL}.

{\bf Final Access}: Action $Wi[x]$ or $Ri[x]$ (Write or read of record X by transaction $Ti$) is {\em final}
if no read or write accesses from $Ti$ occur before $Ti$ attempts to commit. Final read and write actions are
respectively denoted as $FRi[x]$ and $FWi[x]$.
 
{\bf Final access history}: a history $H$ with the following conditions hold:
 
\begin{enumerate}[leftmargin=*]\setlength\itemsep{0em}
\item Whenever $H$ is in the form: \dots $Wj[x]$ \dots $Ri[x]$ \dots, where
$Ri[x]$ reads from $Wj[x]$, then $FWj[x]$ precedes $Ri[x]$ but occurs after
$Wj[x]$.

\item Whenever $H$ is of the form: \dots $Wj[x]$ \dots $Wi[x]$ \dots, then
$FWj[x]$ precedes $Wi[x]$ but occurs after $Wj[x]$.

\item Whenever $H$ is of the form: \dots $Rj[x]$ \dots $Wi[x]$ \dots
(anti-dependency), then $FRj[x]$ precedes $Wi[x]$ but occurs after $Rj[x]$.
\item Commit ($Ci$) or abort ($Ai$) actions always occur after final access actions.
\item For each case $1-3$: $Ci$ must proceed $Ci$ or $Ai$.
\item For cases $1$ and $2$: if $Tj$ aborts then $Ai$ must follow $Aj$ (cascading abort)
\end{enumerate}

Final access histories are recoverable.

\begin{proof}
Let $H$ be a final access history. Assume that $H$ is not recoverable.
Then, by definition, it is of the form: \dots $Oj[x]$ \dots $Oi[x]$ \dots $Ci$
\dots, where $Ok[x]$ can be $Rk[x]$ or $Wk[x]$, and $Oj[x]$ and $Oi[x]$ are
conflict operation; and $Cj$ does not occur in $H$. Because $H$ is a
final access history, by Definition 2 (conditions $1-3$), $FOj[x]$ must
precedes $Oj[x]$. Thus, by condition $4$ of Definition 2, $Cj$ msut precede $Ci$
in $H$ which contradicts with the assumption that $Cj$ does not occur in $H$.
\end{proof}

Final access histories are equivalent to read committed histories.

\begin{proof}
Let $H$ be a final access histoy. For the sake of contradiction, assume that
$H$ is not a read committed history. Then, by definition, $H$ must exhibit action
sequences in the following form: \dots $Wj[x]$ \dots $Ri[x]$ \dots $Wj[x]$ \dots
$Ci$ \dots, where $Ri[x]$ reads from $Wj[x]$ but another $Wj[x]$ occurs after
$Ri[x]$.  As $H$ is a final history, $FWj[x]$ must proceeds $Ri[x]$ (by
condition $i$ in Definition 2).  By Definition 1, no $Wj[x]$ action is allowed to
occur after $FWj[x]$, contradicting with previous assumption.
\end{proof}

A final access history H exhibits commit ordering guarantee if actions in
Condition 3 of Definition 1 are absent in $H$.

\begin{proof}
By Definition 2, $Cj$ always preceeds $Ci$ for conflicting operations and $Oj[x]$
always precedes $Oi[x]$. Therefore, final access histories give commit ordering.
\end{proof}

A final access history H that exhibits commit ordering is serializable
~\cite{raz:93}.

A more detailed description about commit ordering can be found
elsewhere~\cite{raz:93}.
}

\comment{ 
\reminder{Columnal CC and Discussion : 필요하면 아래 uncomment}
\subsection{Discussions}
a spectrum of protocols that unify the design space
\begin{itemize}
    \item \stx가 N2O랑 O2N 중 어떤 architecture에서 더 효과적인지.
    \item write skew issue: \stx에서 
    \item cascade-abort vs. optimistic transaction processing via resolving write-write conflict
    \item Cloud-Native DB와 Fit: Aurora, Socrates, Neon 등에서 redo log 기반 동작하는데 CC는 기존과 동일함. redo log와 commit에 따라 durability 보장하고 no checkpoint / replacement write 등과 잘 어울림. 
    
    \item  Page Split:  page split 이 일어나는 테이블에 대해서는 Shadow 불가능 (orderline, orders)
\end{itemize}

\subsection{Columnar Concurrency Control}
\stx에 의한 새로운 기회요인으로, Tuple-granule CC를 columnar로 확장해서, 
false dependency and false deadlock avoidance 완화를 통한 동시성 제고; 실제 TPC-C, LinkBench에서 column-level에서 contention이 아님에도 불구하고, tuple-granule CC에 의해 blocking 경우 빈번함({\bf 실제 stats 측정 후 제시 필요})
\begin{itemize}
    \item TPC-C Benchmark 수행중 DBMS에서 Read Column 식별/구분 가능? (for Columnar CC) 
    \item 같은 레코드의 다른 칼럼을 업데이트하는 경우, 롤백이 발생하면 롤백을 해야하는가?
    \item 같은 레코드의 다른 칼럼을 write하는 경우도, 충돌로 봐야하는가? -> 이 경우는 서로 다른 칼럼을 업데이트하는 것이기에, w-w blocking이 아니라고 봄
    \item 따라서 Sel for update도 대기하지 않아도 되며, pre-commit도 하지 않아도 되어 트랜잭션 동시성을 더 높일 수 있다.
    \item 일반적으로 T1이 O1.a' 수정하고, T2가 O1.b'를 수정하면 T1 abort시 T2는 cascade abort할 필요가 없을 수도 있음. vs. tuple-grained CC에서는 dependency 관계; but column-grained에서는 false dependency일 가능성 높음. {\bf BUT,} write는 redo log에서 칼럼별 구분 가능하지만, read 관련해서 {\bf 각각 어떤 column을 읽었는지 CC module서 구분 불가}. T1이 update한 O1.a' column값을 읽고, T2가 O1.b -> O1.b' 수정시, T1 abort하면 T2도 abort해야 함. 
    \item 본 논문에서는 MySQL의 경우 메모리에 추가적으로 남기는 Future version에 대한 Redo log를 활용하여 다른 칼럼에 대한 w-w blocking을 판별하고 회피할 수 있다.
    \item 진짜 w-w blocking(같은 column에 대한)인 경우를 판별을 해야하는 경우 오버헤드가 발생할 수 있는데, 추후 방법을 찾아야함.
    \item YogaByte DB의 경우,  이미 Column-level로 fine-grained locking을 사용하고 있음. -- \href{https://vladmihalcea.com/yugabytedb-column-level-locking/}{YugabyteDB Column-level Locking – Vlad Mihalcea}
    \item 보통 자기 칼럼 읽어서 자기 칼럼 쓰는 거 → 어보트 안 해도 됨 (tpcc) 하지만 다른 칼럼 읽어서 자기 칼럼 쓰면 어보트 해야 함 RW 디펜던시
\end{itemize}

\begin{itemize}
    \item Why revisit Column-level CC in \stx?? Redo log 기반이라 overhead 거의 없이 column-level CC 지원 가능(?) : Google Spanner (Becoming SQL SIGMOD '07 / F1 paper VLDB '13) 
    \item  ...
    
\item {\bf ChatGPT:} 기존 concurrency control 방법들 중 tuple granule이 대부분인데, column granule 이유? 

ChatGPT의 말: 대부분의 DBMS에서 tuple-level (row-level) granularity를 concurrency control(CC)의 기본 단위로 삼는 이유는 단순한 구현뿐 아니라 효율성과 충돌 가능성 최소화의 균형 때문입니다. 반면, **column-level CC(column-granular locking 또는 multiversioning)**이 거의 사용되지 않는 이유는 아래와 같은 이론적/실용적 제약 때문입니다.
\begin{enumerate}
\item {\bf 읽기/쓰기 충돌 경계가 불명확}: 예를 들어, T1이 column A를 읽고, 다른 T2가 같은 row의 column B를 갱신한다고 하더라도, 내부적으로는 같은 tuple을 다루는 것이므로 row-level isolation을 깨뜨릴 수 있습니다. 실질적으로 대부분의 DB 연산(예: WHERE, JOIN, UPDATE)은 여러 컬럼에 걸쳐서 동작하기 때문에, 실제로는 column granularity가 충돌을 줄이는 효과가 거의 없습니다.

\item  Locking/MVCC 구조가 복잡해짐: Row-level MVCC에서는 tuple마다 버전이 하나만 있으면 되지만, column-granular CC를 하려면 각 컬럼마다 버전/lock 정보 따로 저장 해야 함. 이는 **version vector (e.g., one per column)**를 유지해야 하며, 동기화, GC(garbage collection), undo/redo 등 모든 작업이 훨씬 복잡해집니다.

\item 성능 저하 (metadata overhead): row마다 lock을 거는 대신 column마다 걸게 되면, lock의 개수가 폭발적으로 증가하며, lock manager의 부하와 메모리 사용량이 급증합니다. 특히 OLTP 시스템에서 짧은 트랜잭션이 많은 경우, 이런 오버헤드는 감당하기 어렵습니다.

\item 일관성 보장 관점에서 의미가 모호함:  많은 경우 애플리케이션은 row 단위로 의미 있는 **불변 조건(invariants)**을 갖습니다 (e.g., balance >= 0).

컬럼 단위로 locking이나 versioning을 하게 되면, 이러한 불변조건이 violate될 가능성이 높아지고, 이를 체크하거나 보장하기 위해 application-level locking이 필요할 수도 있습니다.
\end{enumerate}

\item {\bf 추가 QnA:} 그런데, TPC-C의 transaction 중 new order와 payment의 경우 같은 tuple에 대해 다른 column을 update하는 경우가 많은데, 이 경우에도 column level CC로는 한계가 있을까? 

ChatGPT: 
\begin{enumerate}
\item Tuple-level consistency invariant:  TPC-C에는 **고객 상태에 대한 불변조건(invariant)**이 있습니다. 예: c\_balance = 총주문액 - 결제금액

이런 조건을 유지하려면 여러 컬럼을 동시에 보는 원자적 관찰이 필요합니다.
column-level CC는 column 단위로 버전이 나뉘면 불일치된 상태를 다른 트랜잭션이 관찰할 위험이 있습니다. 즉, column 단위 MVCC를 하게 되면 consistency 보장이 어려워지고, 이를 막기 위해 결국 row-level locking 수준의 동기화가 필요해짐.

\item Write skew 또는 update-conflict detection 어려움: 
예를 들어 Payment와 New-Order가 같은 c\_balance와 c\_payment\_cnt를 동시에 변경하면, column 단위로는 write-write 충돌이 감지되지 않거나 늦게 감지될 수 있습니다. 이는 예기치 않은 **논리적 충돌(logical conflict)**이나 dirty write로 이어질 수 있습니다.

\item 복잡한 lock management \& overhead:  실무에서는 단순한 row-level lock으로도 충분히 관리되는데, column-level lock을 적용하면: Lock granule 수가 증가 → Lock table이 폭증;  Deadlock detection 및 scheduling overhead 증가; 복잡한 lock escalation 정책이 필요;

\item 스토리지 구조와의 부조화: 대부분의 OLTP 시스템은 row-store 기반이며, 단일 tuple을 페이지 단위로 I/O합니다. column 하나만 업데이트해도 결국 전체 row를 I/O해야 하므로, I/O 효율성에서 column-level isolation이 무의미해질 수 있습니다.

\end{enumerate}

\end{itemize}
}
\section{Correctness}
\label{sec:correctness}

\smallskip\noindent\textbf{Definitions.}
Let $x = (r, \Gamma)$ denote a \emph{conflict unit}, where $r$ is a record and $\Gamma$ is a subset of columns of $r$.
Two conflict units $(r, \Gamma_i)$ and $(r, \Gamma_j)$ \emph{conflict} iff $\Gamma_i \cap \Gamma_j \neq \emptyset$.

A write is \emph{speculative} (denoted $SW_i[x]$) if $T_i$ stores the update in its \emph{TLA} and releases the lock before commit; the update is installed to $r$ only when $T_i$ commits.

A read $R_i[r]$ is \emph{speculative} (denoted $SR_i[r]$) if $T_i$ issues {\tt SELECT FOR UPDATE} and reads an uncommitted value written by another transaction $T_j$. A plain {\tt SELECT} returns the version visible under $T_i$'s snapshot and does not include any speculative dependency. \\

\noindent \textbf{Speculative Dependency Graph.}
The speculative dependency graph $SG(H)$ of a history $H$ contains a directed edge from $T_j$ to $T_i$ 
(denoting $C_i$ must precede $C_j$) in two cases:
(i)~\emph{WW dependency}: $T_i$ performs $SW_i[x_i]$, and $T_j$ subsequently performs $SW_j[x_j]$ or $SR_j[r]$, where $x_i$ and $x_j$ conflict.
(ii)~\emph{RW anti dependency}: $Ti$ has performed $SR_i[r]$ and $T_j$ subsequently performs $SW_j[x_j]$ on the same record $r$. \\

\noindent \textbf{Commit Protocol.}
\rcc enforces two invariants on every speculative history $H$. 
First, for every WW dependency edge $T_j$ $\rightarrow$ $T_i$, $C_i$ must precede $C_j$.
If $T_i$ aborts, $T_j$ must also abort.
Second, a transaction commits only after all its predecessors in $SG(H)$ have committed.

\begin{thm}
Speculative histories are recoverable.
\end{thm}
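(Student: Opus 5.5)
The proof goes by contradiction from the standard definition of recoverability in~\cite{BHG}. A history $H$ is recoverable iff, whenever $T_i$ reads from $T_j$ (with $j \neq i$) and $T_i$ commits, $C_j$ precedes $C_i$ in $H$. We must rule out a committed transaction having consumed a value produced by a transaction that had not yet committed.

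\textbf{Step 1: classify the reads-from pairs.} Suppose $T_i$ reads from $T_j$. The read is either a plain {\tt SELECT} or a {\tt SELECT FOR UPDATE}.

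A plain {\tt SELECT} returns the version visible under $T_i$'s snapshot. The lazy update policy installs a value into $r$ only at $C_j$, so any version visible to a snapshot read is already committed. Speculative versions live only in the TLA, so such a read can only come from a committed $T_j$, and $C_j$ precedes the read and hence $C_i$.

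A {\tt SELECT FOR UPDATE} that returns an uncommitted value is by definition $SR_i[r]$ reading a value from $SW_j[x_j]$ on a conflict unit of $r$. An update $SW_i[x_i]$ implicitly reads the predecessor's speculative version of a conflicting unit. In RCC-C, the non-conflicting case reads only the committed current version, so it reduces to the plain case above.

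\textbf{Step 2: show every speculative reads-from pair carries an edge.} Clause~(i) of the definition of $SG(H)$ puts a WW dependency edge $T_i \rightarrow T_j$ into $SG(H)$ for every speculative reads-from pair.

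\textbf{Step 3: apply the commit protocol.} The first invariant says that if $T_i$ commits, then $C_j$ precedes $C_i$. If instead $T_j$ aborts, the same invariant forces $T_i$ to abort, so $T_i$ never commits having read aborted data.

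One residual case remains: $T_j$ re-updates $r$ after $T_i$ read it, so the value $T_i$ read is not $T_j$'s final one. The design cascade-aborts $T_i$ through the lock table in that case, so $T_i$ does not commit on a stale value. This case needs no separate treatment for recoverability as long as $C_j$ precedes $C_i$; it matters only for avoiding dirty reads, which we note but do not need.

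\textbf{Step 4: conclude.} Assume $H$ is not recoverable. Then some committed $T_i$ read from some $T_j$ with $C_j$ absent from $H$ or occurring after $C_i$. Steps~1--3 show that this is impossible.

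The main obstacle is Step~1: arguing that snapshot reads can never observe a speculative value. That rests on the lazy-installation property, namely that data pages and version chains contain only committed values, rather than on any edge in $SG(H)$. It has to be stated explicitly, including for the PostgreSQL variant, where speculative tuples sit on pages but are invisible to snapshots through their {\tt xmin}/{\tt xmax} metadata. The columnar case also needs one sentence confirming that dropping edges between disjoint column sets never drops a genuine reads-from relation.
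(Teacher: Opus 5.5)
Your proof is correct and takes essentially the paper's approach: assume non-recoverability, observe that only speculative operations (\texttt{SELECT FOR UPDATE} or speculative writes) can read uncommitted values, and use the commit protocol, which forces the writer to commit before the reader, to reach a contradiction. Your Step~1 justifies from lazy installation (and from \texttt{xmin}/\texttt{xmax} visibility in PostgreSQL) the premise the paper simply asserts, and you also handle the aborted-writer case, which the paper's one-line proof leaves implicit.
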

\begin{proof}
Assume $H$ is not recoverable: $T_j$ reads an uncommitted value written by $T_i$ and $C_j$ occurs before $C_i$.
Since only speculative operations can access uncommitted values, the commit protocol requires $C_i$ to precede $C_j$, a contradiction.
\end{proof}

\begin{thm}
If $SG(H)$ is acyclic, speculative histories $H$ guarantee commit ordering.
\end{thm}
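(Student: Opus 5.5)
The plan is to unfold the definition of commit ordering (in the sense of Raz) and discharge it with the two invariants of the commit protocol together with acyclicity of $SG(H)$. Commit ordering requires that for any two committed transactions $T_i, T_j$ with conflicting operations $o_i[x]$ before $o_j[x]$ in $H$, we have $C_i$ before $C_j$. So I fix such a pair and show $C_i <_H C_j$.

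The first step is to show that every conflict between committed transactions in $H$ is captured by an edge $T_j \rightarrow T_i$ of $SG(H)$. I split on the kinds of the two accesses. If $o_i$ is a write, then in \rcc every update is a speculative write $SW_i[x_i]$. The later conflicting access $o_j$ is either $SW_j[x_j]$ or a {\tt SELECT FOR UPDATE} $SR_j[r]$ on the same record with conflicting columns. In either case clause (i) gives a WW edge, and access-time ordering through the lock table guarantees that $o_j$ is issued only after $T_i$ has finished its write. If $o_i$ is a speculative read $SR_i[r]$ and $o_j$ a later write on $r$, clause (ii) gives the anti-dependency edge. The read--read case is not a conflict. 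A plain snapshot {\tt SELECT} reads a committed version and never orders against an uncommitted writer, so it imposes no constraint beyond vanilla MVCC; I will note this case explicitly, since under Serializable plain reads are promoted to $SR$. Under \rccc, disjoint column sets are by definition non-conflicting units, so no edge is needed.

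The second step applies the commit protocol. By the second invariant, $T_j$ commits only after all its predecessors in $SG(H)$ have committed, so the edge $T_j \rightarrow T_i$ yields $C_i <_H C_j$. The first invariant additionally handles aborts: if $T_i$ aborts, $T_j$ aborts, so no committed transaction depends on an aborted one, which is consistent with Theorem~1.

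The main obstacle is justifying the role of acyclicity. The protocol's waiting rule is only satisfiable if $SG(H)$ restricted to committed transactions admits an ordering. If $SG(H)$ had a cycle, each transaction on it would wait forever in pre-commit, or the commit-time deadlock detector would abort them, so no commit order exists. Assuming acyclicity, I take a topological order of $SG(H)$ and argue by induction along it. The sources have no predecessors and can commit. Each subsequent transaction commits once its predecessors have committed, and every edge is respected. This shows that the constraints are simultaneously realizable and that the realized order is consistent with the conflict order. I also need care that edges are added at access time, so that the order ``$o_i$ precedes $o_j$'' in $H$ agrees with the direction of the edge. Re-updates are the only case where a later $SW_i$ could follow $o_j$, and those are excluded because they cascade-abort $T_j$.
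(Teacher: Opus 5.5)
Your argument is correct in substance but follows a different route from the paper. The paper's proof is a single contradiction. It takes a cycle $T_i \rightarrow T_j \rightarrow T_k \rightarrow T_i$ in $SG(H)$, reads off the commit-order constraints the protocol attaches to its edges ($C_j$ before $C_i$, $C_k$ before $C_j$, $C_i$ before $C_k$), and concludes that a history whose graph has a cycle cannot be commit-ordered. Read literally, that proves the converse of the stated implication: commit ordering forces acyclicity, which is why the commit-time detector must abort cycle members. It also never shows that every conflict between committed transactions appears as an edge of $SG(H)$. That is exactly what is needed to get from ``commits respect the edges'' to commit ordering in Raz's sense. Your completeness step does this: each conflicting pair $o_i[x] <_H o_j[x]$ of committed transactions yields an edge $T_j \rightarrow T_i$, with re-updates ruled out by cascading aborts. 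Together with the second invariant, it proves the stated direction directly, and your cycle remark reproduces the paper's argument as the liveness side.

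Note that your safety argument never uses acyclicity. For committed $T_i, T_j$ the invariant alone gives $C_i <_H C_j$. The subgraph on committed transactions is automatically acyclic, because commit order linearly extends it. The topological-order induction only shows the constraints are jointly satisfiable, so label it a progress claim or drop it.

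Two small points in your case split:
\begin{itemize}
\item Clause (ii) as stated requires $SR_i[r]$, i.e., a \texttt{SELECT FOR UPDATE} that read an \emph{uncommitted} value. So a locking read of a committed value, followed by another transaction's write, gets no edge from the definition. You need either the lock-table predecessor mechanism (the later locker records the retained earlier lock entry as its predecessor) or a widened clause (ii).
\item Plain snapshot reads under Repeatable Read are not a case you can discharge, since write skew is precisely a commit-ordering violation on them. State them as outside the scope of $H$; Serializable brings reads into scope by promoting them to \texttt{SELECT FOR UPDATE}.
\end{itemize}
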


\begin{proof}
Let $SG(H)$ contain a cycle: $T_i \rightarrow T_j \rightarrow T_k \rightarrow T_i$.
By the commit protocol, $C_j$ must precede $C_i$, $C_k$ must precede $C_j$, and $C_i$ must precede $C_k$, which is a contradiction.
Therefore, if $SG(H)$ contains a cycle, $H$ cannot satisfy commit ordering.
\end{proof}

\noindent A speculative history $H$ that satisfies commit ordering is serializable~\cite{raz:93}.
To corroborate these guarantees, all TPC-C runs passed BenchmarkSQL's built-in consistency checker~\cite{benchmarksql}.


\section{Implementation}\label{sec:impl}

We implemented \stx on MySQL (v8.4) and PostgreSQL (v16.2), representing N2O and O2N MVCC architectures, respectively.
\rcc leverages redo log as speculative version storage in N2O and materializes speculative versions as tuples in the version chain in O2N.
Both implementations modify only the lock manager and transaction management modules, with approximately 2K and 1K lines for MySQL and PostgreSQL, respectively.
Due to the space constraints, we focus on MySQL implementation in this section and present the PostgreSQL implementation details in ~\autoref{appendix}.

\comment{
While \stx leverages redo logs as speculative version storage in N2O systems, the same design principle applies to O2N systems by materializing speculative versions directly as tuples in the version chain.
Our implementations of \stx for both N2O and O2N require moderate changes only to the lock manager and transaction management modules: MySQL (about 2K lines) and PostgreSQL (about 1K lines).
As such, we believe that \stx can be readily integrated into commercial DBMSs, such as Oracle and Microsoft SQL Server.
Due to the space constraints, we focus on MySQL implementation in this section and present the PostgreSQL implementation details in ~\autoref{appendix}.
}

\module{Dependency Management.}
\stx needs to track dependencies among transactions to govern commit, aborts, and deadlocks and thus to guarantee serializability. 
\comment{Specifically, when a transaction $T_{j}$ accesses a speculative version created by an active transaction $T_{i}$, \stx records a dependency edge from $T_j$$\rightarrow$$T_i$. \hjin{디자인 장에서 설명했으니 이 문장 뺴고 될 거 같기도...}
}
In our implementation, a transaction enqueues a lock request, identifying the immediate lock holder preceding it in the lock table (\texttt{lock\_sys->rec\_hash}) as its predecessor. 
This dependency --- represented as a pointer to the predecessor and its conflicting lock object --- is then recorded in the transaction's private memory region (i.e., \emph{TLA}). For columnar dependency tracking, \rcc adds a {\tt column\_bitmap} field to {\tt lock\_t}. A dependency edge is created only when the bitmaps of two conflicting locks overlap~\cite{PCLL}.

\module{Speculative Version Construction.}
\stx maintains speculative versions as redo-log entries to store uncommitted updates in the \emph{TLA}.
When a transaction updates a record, it first copies the predecessor's speculative version to its own \emph{TLA}, performs its update, and stores the redo log in its \emph{TLA}. \stx initially allocates 4KB of heap memory for each transaction's \emph{TLA} and dynamically allocates additional memory as needed. In our TPC-C experiments, each speculative version consumes about 300 bytes on average, comparable to the corresponding record size. 

\module{Commit.}
Every dependent transaction must wait in the pre-commit state until all its predecessors have committed.
While in this state, cycle detection is performed on the dependency graph; \stx leverages the built-in background deadlock detection thread to asynchronously monitor cycles among waiting client threads. 
Once the commit is permitted, redo logs accumulated in the \emph{TLA} are parsed using \vanillamy's recovery function.
Following the WAL protocol, \stx writes undo logs, applies redo logs for speculative versions in the \emph{TLA} to data pages, and finally appends redo logs to the WAL buffer. Finally, \stx removes dependency edges while releasing locks. 


\module{Abort.}
When a transaction aborts, all its dependents must also abort to prevent dirty reads~\cite{ansi-isolation}.
Although the cascading aborts are inevitable in speculative write versioning, \stx can minimize its overhead because it does not update the current version in place. 
For cascading aborts, \stx traverses the lock table starting from the aborting transaction and identifies dependent transactions that read the uncommitted update and marks them for abort.
Once a cascading abort begins, it becomes visible through the lock table, and any transaction attempting to acquire a lock should wait until the cascading abort completes. 

\section{Performance Evaluation}\label{sec:eval}
\comment{
\begin{itemize}
    \item baseline performance: TPC-C vanilla와 성능 격차 잘나오는 configuration에서 long-term experiments (X축 시간 또는 \# of running trXs)
    \item warehouse/buffer size고정 후, client수 늘리면서 vanilla와 비교
    \item skewed workloads: synthetic workloads 상황에서 매우 skew높은 상황에서
    \item in\_depth analysis on performance gain: shoreMT/Leanstore 논문에 따르면 locking(CC)오버헤드가 30\% 정도로 나오는데 이 수치와 함께 비교 제시 
    + \stx 주요 동작에 따른 성능 상세 분석 (예: locking, version applying (commit직전), dependdency checking, abort)  
    \item worst case: 1) dependency 길이를 엄청 늘리는 상황에서 성능 측정 bound performance? 2) cascading abort (multiple hot spot from Bamboo)
-> neworder/payment 트랝개션 고쳐서 특정 확률로 fw가 아닌 상황 만들어서 성능평가
    \item N2O vs. O2N 비교; 필요한 경우 HTAP 또는 (long) read-only trxs 같이 수행;
    \item Serialization Anomaly가 있는 Benchmark에서의 실험 (e.g., TPC-C++, DBT-2++)
    \item isolation level 에 따른 성능평가?
    \item transaction latency CDF 그래프 제공
    \item CPU overhead 상세분석 pre-commit 단계에서 redo apply;
\end{itemize}
}

This section evaluates the performance of \stx using synthetic, TPC-C, and YCSB benchmarks. \stx is effective in reducing wait time due to WW conflict in both N2O and O2N multi-versioning DBMSs (\S\ref{sec:perf:synthetic}, \S\ref{sec:perf:tpcc}). We evaluate the effectiveness of commit-time deadlock detection and columnar \stx on TPC-C (\S\ref{sec:perf:rcc_variants}), and further analyze the scalability of existing protocols on DBx-1000 and \stx variants under high contention on YCSB (\S\ref{sec:perf:ycsb}).


\comment{
In this section, we evaluate \stx using synthetic benchmarks and the TPC-C benchmark. 
Our experiments demonstrate the following:
\begin{itemize}[
 align=left,
 leftmargin=1em,
 itemindent=10pt,
 labelsep=0pt,
 labelwidth=1em
]
    \item \stx is applicable to both N2O and O2N MVCC architectures, effectively eliminating wait time for WW conflicts across different version storage strategies (\S \ref{sec:perf:synthetic}, \S \ref{sec:perf:tpcc}).
    \item \stx achieves greater speedup as transaction length increases (up to 76$\times$ on MySQL and 59$\times$ on PostgreSQL), and maintains performance regardless of hotspot position through speculative write versioning (\S \ref{sec:perf:synthetic}).
    \item \stx is scalable with both the number of clients (up to 8.4$\times$ on MySQL and 7.6$\times$ on PostgreSQL) and the scale factor on TPC-C benchmark. \stx also significantly reduces tail latency (11$\times$ at 99th percentile for \texttt{Payment}) and maintains performance gains (4.1$\times$–5.9$\times$) even under I/O-bound conditions (\S \ref{sec:perf:tpcc}).
    \item Under \textit{Serializable}, \stx outperforms Bamboo (2.8$\times$) and \vanillamy (4.5$\times$) by minimizing wasted time through lightweight abort handling, and sustains performance gains even under high abort rate scenarios which incur cascading aborts (\S \ref{sec:perf:serializable}).
    \item \stx sustains higher throughput than OCC and \vanillamy under both high contention and increasing thread counts, by avoiding aborts through access-time conflict ordering with speculative write visibility (\S \ref{sec:perf:occ}).
\end{itemize}
}

\subsection{Experimental Setup}
We conduct all experiments on a Linux machine equipped with an AMD EPYC 9754 CPU (128 cores at 2.25GHz) and 256GB DRAM, running Ubuntu 22.04. For storage, we use two SK hynix PS1010 4TB NVMe SSDs: one for the data device and one for the log device. 

To isolate WW conflicts from I/O overhead~\cite{Quro}, we set the buffer large enough to keep the database entirely in memory except for one experiment (\S\ref{sec:perf:tpcc}). Unless otherwise specified, the isolation level is \textit{Repeatable Read}. Each experiment runs for 1 minute after a 1 minute warm-up.

\comment{
Since disk operations (e.g., checkpoint) can dominate the amount of time spent in processing each transaction~\cite{Quro}, we minimize I/O-related overhead in both \stx and Vanilla to isolate the effect of WW conflicts.
Except for one experiment (\S \ref{sec:perf:tpcc}),  we set the buffer size larger than the database size to fit the entire database in memory. Unless otherwise specified, we set the isolation level to \textit{Repeatable Read}. In all experiments, we run the benchmark for 1 minute after a warm-up time of 50 seconds.
}

\comment{
\module{Cost-equivalent comparison} To ensure a fair comparison, all systems are configured to be cost-equivalent. For \stx on MySQL, it utilizes an additional XXX MB of DRAM for the TLA region, while Vanilla MySQL receives the equivalent amount for its buffer cache. Similarly, \stx on PostgreSQL uses additional DRAM for the dependency tracking list.
}

\comment{
    We conduct all experiments on a Linux machine equipped with an AMD EPYC 9754 CPU (128 cores at 2.25GHz) and 256GB DRAM, running Ubuntu 22.04. For storage, we use two SK hynix PS1010 4TB NVMe SSDs: one for data device and one for log device, with the ext4 file system.
    Below are detailed descriptions of the database configurations used in the experiments:
    \begin{itemize}[
     align=left,
     leftmargin=1em,
     itemindent=10pt,
     labelsep=0pt,
     labelwidth=1em
    ]
        \item \textbf{Postgres}: page size 16KB ..., min\_wal\_size 2GB, max\_wal\_size 100GB, sync\_commit=off
        \item \textbf{MySQL}: We configure the redo log buffer size to 32MB and the DWB size to 2MB. DWB Off, 
        \item Except for Varying buffer size experiments, I/O 영향 및 I/O의 영향을 제거하고 To isolate the effect of record-level WW blocking을 개선함으로써 얻는 성능의 효과를 보기 위해서  buffer size를 항상 databas size보다 크게하여 in-memory인 상황으로 실험함.
        \item Isolation Level 별 experiments를 제외하고는 전부 Isolation level을 \textit{Repeatable Read} mode로 설정하여 실험함. synchronous commit=off로 실험을 진행함. 모든 실험들은 interactive mode로 진행함. 
        
    \end{itemize}
    
    All systems are cost-equivalent, \stx utilizes an additional \textcolor{red}{XXX MB} of DRAM for the \texttt{TLA} region, while \vanillamy receives more DRAM for the data buffer.
    PostgreSQL도 Depending TX list를 위한 additional DRAM을 사용함.
    
    \subsubsection{Benchmarks}
    We use both synthetic benchmark and standard TPC-C benchmark to evaluate the effeteness of the \stx.
    
    \module{Synthetic Benchmark} YCSB-like benchmark; 기존 in-memory CC 논문에서 사용한 것과 다른점 (commit 적용), 
    read-write ratio/length 조절, transaction dependency 길이를 조절할 수 있음.
    \stx의 성능 detailed 분석을 위함.
    \module{TPC-C} We use tpcc-mysql~\cite{tpccmysql} to run the TPC-C workload. TPC-C is a standard OLTP benchmark, exhibiting heavy random reads and writes. For all experiments, we set the initial database size to 50GB (\ie, 500 warehouses).
}


\begin{figure*}[t]
  \centering
  \begin{minipage}[t]{0.48\textwidth}
    \centering
    \subfloat[Varying Transaction Length]{%
      \includegraphics[width=.48\columnwidth]{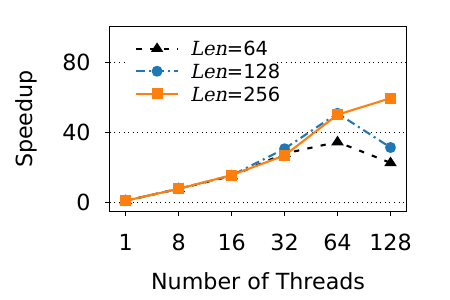}
      \label{fig:varying_length_pg}
    }
    \subfloat[Varying Hotspot Position]{%
      \includegraphics[width=.48\columnwidth]{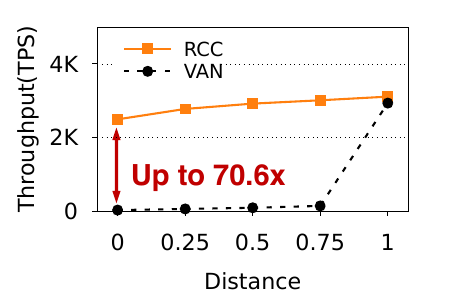}
      \label{fig:varying_distance_my}
    }
    \\[-2ex]
    \caption{Synthetic Benchmark: \rcc vs. Vanilla} 
    \label{fig:synthetic_benchmark}
  \end{minipage}
  \hfill
  \begin{minipage}[t]{0.48\textwidth}
    \centering
    \subfloat[Varying Client Threads]{%
      \includegraphics[width=.48\columnwidth]{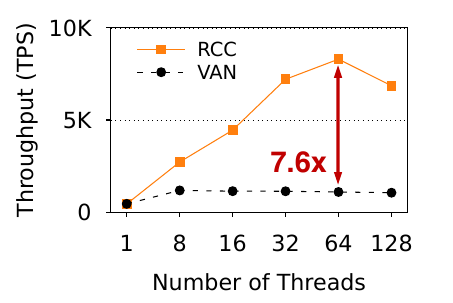}
      \label{fig:varying_client_threads_pg}
    }
    \subfloat[Varying Warehouse Count]{%
      \includegraphics[width=.48\columnwidth]{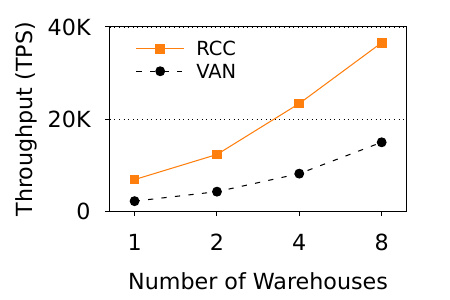}
      \label{fig:varying_wh_my}
    }
    \\[-2ex]
    \caption{TPC-C Throughput: \rcc vs. Vanilla} 
    \label{fig:tpcc_throughput}
  \end{minipage}
\end{figure*}

\subsection{Performance on Synthetic Benchmark}\label{sec:perf:synthetic}
We analyze the potential benefits of \rcc in an ideal setting with a single hotspot where all concurrent transactions have the same execution time.
The synthetic benchmark performs one read-modify-write ($RMW$) on a hotspot record and multiple random reads within a single table. All threads contend on a single hotspot record but access separate partitions for random reads, eliminating table locks and page-latch contention, thereby isolating the impact of write--write conflicts.  Since MySQL and PostgreSQL exhibit similar performance trends, we present results from one DBMS per experiment for brevity.

\module{Varying Transaction Length.} 
To evaluate the effect of transaction length on \stx's performance in PostgreSQL, we design a synthetic workload where each transaction performs one hotspot update at the beginning, followed by $Len$ random reads.  We vary the number of threads from 1 to 128 and $Len$ from 64 to 256. \autoref{fig:varying_length_pg} shows that the speedup of \stx over \vanillapg grows as $Len$ increases, reaching up to 59$\times$. In Vanilla, since the write lock on the hotspot is held until commit, longer transactions lead to longer WW blocking. In contrast, \stx allows subsequent transactions to access the speculative version immediately after each update, without waiting for the preceding transaction to commit.  For this reason, as shown in \autoref{fig:varying_length_pg}, \rcc's speedup over Vanilla scales as the number of threads increases up to 64. 
In the case of shorter transactions ($Len$=64, 128), the speedup decreases at higher contention (\eg, 128 threads) due to lock thrashing problem in 2PL~\cite{BAMBOO, lock_thrashing}.

\module{Varying Hotspot Position.}
To assess the effect of hotspot position on \stx's performance in MySQL, we design a synthetic workload where each transaction performs one hotspot update and 256 random reads. $Distance$ denotes the fraction of reads placed before the hotspot. We vary $Distance$ from 0 (hotspot at the beginning) to 1 (at the end) with 128 concurrent threads. As shown in \autoref{fig:varying_distance_my}, although increasing distance shortens the lock holding time, Vanilla improves throughput only at $Distance$=1. At all other positions ($Distance$ < 1), lock thrashing~\cite{lock_thrashing} prevents Vanilla from benefiting from the reduced lock duration.  In contrast, \stx sustains throughput irrespective of the hotspot position, as speculative unlock releases the lock immediately after updating the hotspot.

\subsection{Performance on TPC-C Benchmark}\label{sec:perf:tpcc}
To measure the performance of \rcc on the TPC-C benchmark, we use tpcc-mysql~\cite{tpccmysql} for MySQL and sysbench-tpcc~\cite{sysbenchtpcc} for PostgreSQL. In both benchmark settings, each read-modify-write sequence is configured to issue \texttt{SELECT FOR UPDATE} prior to \texttt{UPDATE}, as is standard practice for read-modify-write operations~\cite{mysql-locking-read, pg-locking-read}.
We run \texttt{New-Order} and \texttt{Payment} transactions in a 1:1 ratio and report throughput in transactions per second (TPS). Since both MySQL and PostgreSQL exhibit similar performance trends, we present results from one DBMS per experiment for brevity.

\module{Varying Client Threads.} 
To evaluate the scalability of \stx, we measure throughput while increasing the number of concurrent client threads from 1 to 128 with a single warehouse. \autoref{fig:varying_client_threads_pg} shows the results for PostgreSQL. As the number of clients increases, throughput in Vanilla quickly plateaus at eight threads due to WW blocking on the \texttt{Warehouse} and \texttt{District} tables. Since each warehouse has a single \texttt{Warehouse} row and ten \texttt{District} rows, concurrent updates on these records are serialized. In contrast, \stx scales up to 64 client threads, achieving up to 7.6$\times$ higher throughput than Vanilla (8.4$\times$ on MySQL). Since \stx allows subsequent transactions to access the speculative version immediately after the preceding transaction completes its update, WW-conflicting transactions are pipelined. 
Beyond 64 clients, \rcc exhibits a modest throughput decline, as contention concentrates on a single \texttt{Warehouse} record, amplifying the lock thrashing problem in 2PL~\cite{lock_thrashing, BAMBOO}.

\comment{
In PostgreSQL, \stx exhibits lower throughput than MySQL across all configurations. This is because PostgreSQL's \textit{Repeatable Read} isolation assumes that a tuple cannot be deleted by a transaction with a smaller ID than its creator. Therefore, reading future versions can violate this assumption, causing false-positive aborts. As the number of clients increases, such aborts become more frequent, diminishing the performance gains of \stx.
}

\module{Varying Warehouse Count.} 
To evaluate whether \rcc scales with database size, we vary the number of warehouses from 1 to 8 while fixing the number of concurrent clients per warehouse at 16. To isolate WW conflicts from page latch contention, we partition \texttt{Warehouse} and \texttt{District} tables according to \texttt{Warehouse ID} for both \stx and Vanilla. \autoref{fig:varying_wh_my} shows that \stx consistently outperforms Vanilla by up to 3.1$\times$ (3.3$\times$ on PostgreSQL) across all scale factors. The speedup remains stable because each warehouse introduces the same contention pattern, and thus \rcc eliminates WW blocking at each warehouse independently.

\comment{
    \module{Transaction Latency} Compared to Vanilla, \stx can significantly reduce transaction latency by alleviating the wait time for WW conflicts. To confirm this, we measure transaction latency while running the TPC-C benchmark with 1 warehouse and 64 clients using MySQL. 
    \stx reduces the average and 99th percentile ($P99$) latencies of \texttt{Payment} transactions by 15$\times$ (from 60.1 to 4.0 ms) and 11$\times$ (from 78.2 to 7.0 ms), respectively. The $P99$ latency improvement is lower than the average because transactions at the tail of long dependency chains incur an extended commit wait.
    Meanwhile, for \texttt{New-Order} transactions, \stx slightly increases the latency (from 2.6 to 3.6 ms on average). This unexpected result is because increased throughput leads to more concurrent \texttt{INSERT} operations, which intensifies page latch contention.
}

\comment{
    \label{sec:perf:other}
    \begin{table}[t]
    \small
    \centering
    \caption{Transaction latency on TPC-C (MySQL, 1 warehouse, 64 clients)}
    \label{tab:tail_latency_tpcc}
    \begin{tabular}{rrrrr}
    \toprule
    Latency & \multicolumn{2}{c}{\tt New Order} & \multicolumn{2}{c}{\tt Payment} \\
    \cmidrule(lr){2-3}\cmidrule(lr){4-5}
    (ms) & Vanilla & \stx & Vanilla & \stx \\
    \midrule
    P90 & 3.5 & 4.6 & 67.7 & 5.3 \\
    P95  & 4.4 & 5.8 & 70.3 & 5.8 \\
    P99   & 5.9 & 7.9  & 78.2 & 7.0 \\
    Average & 2.6 & 3.6 & 60.1 & 4.0 \\
    \bottomrule
    \end{tabular}
    \end{table}
}

\comment{
    \begin{table}[t]
    \small
    \centering
    \caption{IOPS and TPS with varying buffer size (MySQL, 1 warehouse, 64 clients)}
    \label{tab:diff-buffer-miss}
    \begin{tabular}{rrrrr}
    \toprule
    & {\tt In-Memory } & {\tt 30\%$^\dagger$} & {\tt 20\%$^\sharp$} & {\tt 10\%$^\ast$} \\
    \midrule
    Vanilla's IOPS(4KB) & - & 2K & 3K & 5K \\
    \stx's IOPS(4KB) & - & 32K & 35K & 37K \\
    Vanilla's TPS  & 2027 & 1824 & 1528 & 1370 \\
    \stx's TPS   & 17132 & 9408  & 8146  & 5955  \\
    TPS Improvement & 8.4$\times$ & 5.2$\times$ & 5.3$\times$ & 4.3$\times$ \\
    \bottomrule
    \end{tabular}
    \small{
    \begin{tablenotes}
    \item[*] Page Size: 4KB, $^\dagger$ 75MB, $^\sharp$ 51MB, $^\ast$ 25MB
    \end{tablenotes}
    }
    \end{table}
}

\module{Buffer Size.} The results presented so far assume that the entire database fits in memory. To evaluate the effect of \stx under the I/O-heavy configuration, we measure the transaction throughput while running the TPC-C benchmark with 1 warehouse and 64 clients. We set the buffer pool size to 30\%, 20\%, and 10\% of the initial data size (\ie, 250MB). \stx outperforms \vanillamy by 4.1$\times$ to 5.9$\times$ and \vanillapg by 3.8$\times$ to 4.1$\times$ across the three buffer configurations. These results indicate that the performance benefit of \stx still holds under I/O-bound conditions, although the gain slightly decreases as the buffer size shrinks and accordingly the I/O time becomes more dominant. 
\comment{ 추후 Revisit 필요
\msj{위에 언급된 내용이라면, In-memory -> 30\%로 줄어도 lock holding time이 증가하여 TPS 격차가 비슷하거나 상승해야함. 하지만 Vanilla는 10\%감소, RCC는 50\% 감소되므로 우리가 주장하는 바에 대해서 말하기 어려우므로 삭제?}
}


\comment{
    \begin{figure}[t]
      \centering
      \subfloat[Synthetic Workload]{%
        \includegraphics[width=.48\columnwidth]{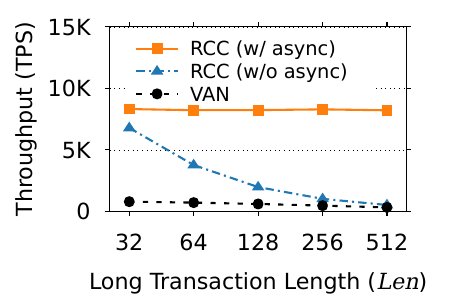}
        \label{fig:async_commit_hetero}
      }
      \centering
      \subfloat[TPC-C]{%
        \includegraphics[width=.48\columnwidth]{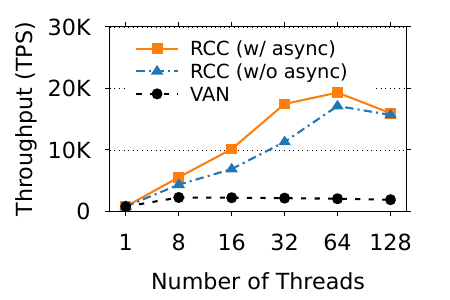}
        \label{fig:async_commit_tpcc}
      }
      \caption{Effect of Asynchronous Commit in \stx}
      \label{fig:async_commit}
    \end{figure}
    
    \subsection{Effect of Asynchronous Commit}
    As described in Section~\ref{sec:async-commit}, we introduce an asynchronous commit mechanism to realize the full potential of \stx. This mechanism allows a client to issue its next transaction as soon as the current transaction enters the pre-commit state, without waiting for blocking long transactions to complete.
    
    We evaluate this mechanism using a synthetic workload consisting of 15 short transactions and one long transaction, comparing vanilla MySQL with two RCC variants: one without asynchronous commit and one with asynchronous commit enabled. All transactions begin with a read-modify-write operation on the same hotspot record. Each short transaction then performs 16 random reads on non-hotspot records, while the long transaction performs a larger number of random reads, ranging from 32 to 512. As shown in Figure~\autoref{fig:async_commit_hetero}, \stx without asynchronous commit suffers a throughput decline as the length of the long transaction increases, since short transactions are forced to wait for the blocking long transaction to commit. In contrast, \stx with asynchronous commit sustains high throughput regardless of the long transaction length. This is because a client issuing short transactions can submit its next transaction immediately after the current one pre-commits, without waiting for the long transaction to complete.
    
    Figure~\autoref{fig:async_commit_tpcc} presents the effect of asynchronous commit on TPC-C. The performance gain is less pronounced than in the synthetic workload, which is expected given that the execution lengths of \texttt{NewOrder} and \texttt{Payment} transactions are similar.
    
    \comment{
    \stx replaces lock-wait with a commit-wait phase to ensure serialization order. \hjin{3.4.2의 async commit 내용 참고해서 추후 수정 필요} To mitigate overhead, we implement an asynchronous commit optimization allows that transactions to complete without commit-wait. To isolate the impact of asynchronous commit, we present results under conditions where deadlock-induced aborts are negligible.\hjin{without abort로 수정} \hjin{그럼 롤백은? -> hetero는 애초에 abort 없는 설계고 그리고 이 옵션은 durability나 consistency 보다 performance가 극도로 중요한 상황을 위한 최적화다} 
    
    We first evaluate the impact of this optimization using a heterogeneous workload \hjin{앞에서 쓴 데이터랑 똑같다 언급 필요?} consisting of 15 short transactions (1 RMW + 16 reads) and one long transactions whose read count varies from 32 to 512. As shown in \autoref{async_commit_tpcc}(a), the baseline \stx without asynchronous commit collapses as the long transaction extends: the long transaction increasingly dominates because short transactions spend more time waiting for it to commit, and this commit-wait amplification negates the benefit of early lock retirement. In contrast, \stx with asynchronous commit maintains stable throughput across all long transaction's lengths. By decoupling the client response from the background commit processing, it effectively isolates short transactions from the latency of long-running ones. 
    In the TPC-C, the performance gain is less pronounced compared to the heterogeneous scenario. Since NewOrder and Payment transactions exhibit similar latencies, the commit-wait time is relatively short\hjin{commit-wait skew?}. Nevertheless, it still contributes to sustained performance by relaxing the strict commit-wait phase. 
    }
}

\comment{
Pre-commit시 ack하고 connection에서 다음 transaction issue하게 함. (like Amazon Aurora asynch commit, eventual durability (Amazon도 결국 ED임) 처럼.) \stx의 concurrency 효과를 극대화하기 위해... 

\begin{itemize}
    \item 기대효과: t1 -- tn으로 dependency 관계에 있을 때, t1이 길고 depending tx들 (Ti) 이 짧은 경우, Ti precommit 후에 다음 Tx를 issue하면 concurrency 및 tps 향상 기대
    \item 예상 결과: PM only에서는 거의 동일하기 떄문에 효과 적음; As shown in Figure~\ref{fig:async_commit_tpcc}, NO+PM에서 긴 PM에 1/3정도 짧은 PM이 depneding할 때, TPS가 30\%(?) 추가 향상됨. 단, concurrent connection수가 적당히 작아서 CPU util이 여유가 있어야 함(?)
\end{itemize}
}


\comment{
    \revised{
    \subsection{Performance with Columnar RCC}\label{sec:perf:columnarCC}
    This section evaluates \stx with columnar CC in \textit{Repeatable read} and \textit{Serializable} isolation using the TPC-C benchmark.
    RCC-ww은 Wound-wait을 선택한 이유를 설명하고, Like Bamboo, Vanilla대비 성능 격차에 대한 성능 결과가 Bamboo 논문과 비슷하다. Like Bamboo라고 설명하고 Bamboo라고는 하지 말기.
    \module{Repeatable Read}
    \fix{R2D3}{CC protocol 효과와 speculative versioning 효과를 분리하여 보여줄 수 있는 추가 지표 언급}, 
    \msj{20230329: Bamboo(RCC with Wound-Wait)의 경우, Older가 Younger를 abort를 하는 경우, Lock을 대기하던 모든 Younger TX abort -> active TX를 intervention하므로 대기하여 lock wait 시간 증가 (Abort하면 더 악화)-> 연쇄적으로 Amortized time 증가. 하지만 RCC의 경우, Younger abort에 대한 대기 없이 pipeline되고 pre-commit 후 Async deadlock detection을 진행 -> 대부분의 TX가 slepp 상태이므로 Fast retry 가능}
    
    \module{Serializable}
    Enforcing serializability introduces additional overhead compared to \textit{Repeatable Read}, as transactions must track RW dependencies in addition to WW ones~\cite{PWV}. We compare \vanillamy, \stx, \stx with Wound-Wait and \stx with Columnar CC. \revised{+ Thread가 늘어날 때 Wound-wait의 False abort(+Cascading abort), RCC with Columnar CC의 Deadlock(+Cascading abort) 비교}
    }
}

\begin{figure*}[t]
  \centering
  \begin{minipage}[t]{0.48\textwidth}
    \centering
    \subfloat[Repeatable Read]{%
      \includegraphics[width=.48\columnwidth]{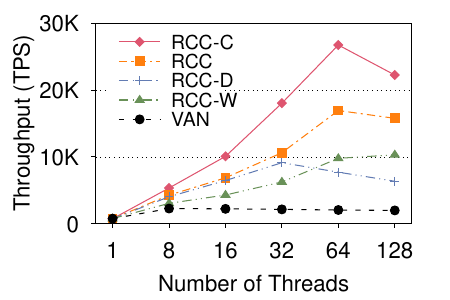}
      \label{fig:rr_bb_tps}
    }
    \subfloat[Serializable]{%
      \includegraphics[width=.48\columnwidth]{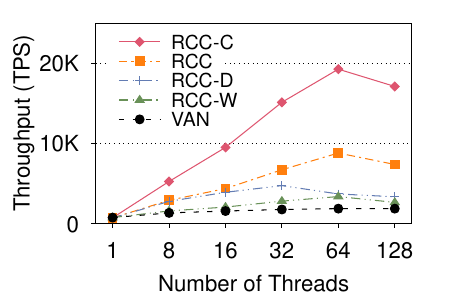}
      \label{fig:sr_bb_tps}
    }
    \\[-2ex]
    \caption{TPC-C Throughput: \stx Variants}
    \label{fig:sr_tps}
  \end{minipage}
  \hfill
  \begin{minipage}[t]{0.48\textwidth}
    \centering
    \subfloat[Runtime Breakdown]{%
      \includegraphics[width=.48\columnwidth]{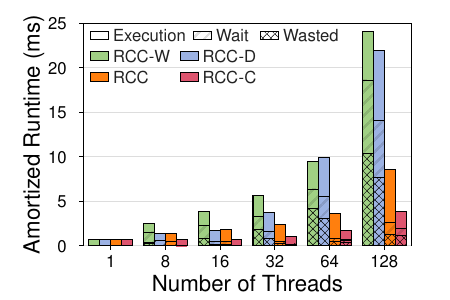}
      \label{fig:sr_time_breakdown}
    }
    \subfloat[Latency at 64 clients (ms)]{%
      \small  
        \setlength{\tabcolsep}{3pt}  
        \begin{tabular}{lcccc}
          \toprule
          & \multicolumn{2}{c}{\texttt{New-Order}} & \multicolumn{2}{c}{\texttt{Payment}} \\
          \cmidrule(lr){2-3} \cmidrule(lr){4-5}
                  & $avg$ & $p99$ & $avg$ & $p99$ \\
          \midrule
          \textit{Baseline} & \textit{2.2} & \textit{3.2} & \textit{0.5} & \textit{1.0} \\
          \midrule
          VAN & 25.2 & 45.3 & 43.1 & 54.8 \\
          \stx     & 5.3  & 12.1 & 5.4  & 12.1 \\
          \rccc   & \textbf{3.9} & \textbf{7.8} & \textbf{1.2} & \textbf{3.0} \\
          \bottomrule
        \end{tabular}
      \label{fig:sr_latency_64}
    }
    \\[-2ex]
    \caption{TPC-C under Serializable}
    \label{fig:sr_breakdown}
  \end{minipage}
\end{figure*}
\begin{figure*}[t]
  \centering
  \subfloat[Competing Schemes on DBx1000 Do Not Scale (due to Aborts) ]{
    \makebox[.48\textwidth]{%
      \includegraphics[width=.24\textwidth]{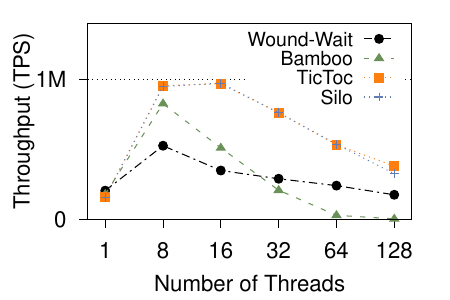}%
      \includegraphics[width=.24\textwidth]{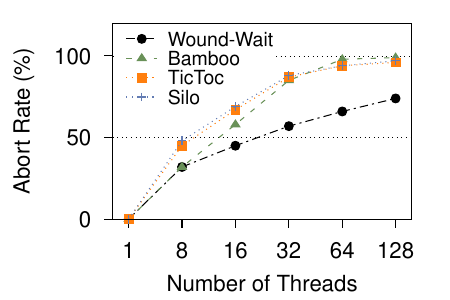}%
    }
    \label{fig:ycsb_dbx_99}
    
  }
  \hspace{1em}%
  \subfloat[\rcc on MySQL Scales to 128 Threads]{%
    \makebox[.48\textwidth]{%
      \includegraphics[width=.24\textwidth]{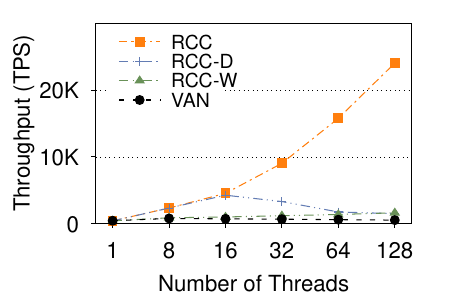}%
      \includegraphics[width=.24\textwidth]{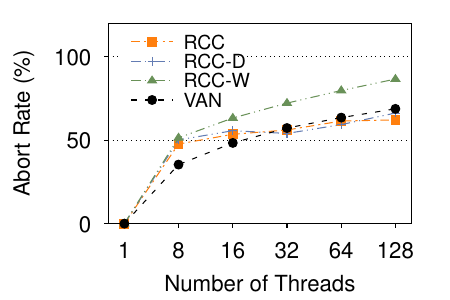}%
    }
    \label{fig:ycsb_mysql_99}
  }
  \vspace{-2ex}
  \caption{YCSB Performance: Varying Client Threads ($read\_ratio = 0.5, \ \theta=0.99$, 128 Cores)}
  \label{fig:ycsb_cc}
\end{figure*}

\subsection{Effects of Deadlock Detection and \rccc}\label{sec:perf:rcc_variants}
To measure the effects of optimistic commit-time deadlock detection in \stx and Columnar \rcc, we run the TPC-C benchmark with 1 warehouse under \textit{Repeatable Read} and \textit{Serializable} isolation levels, using five different configurations for deadlock handling and \rccc:
\begin{itemize}[leftmargin=10pt, topsep=5pt, partopsep=0pt]
 \item \textbf{Vanilla}: MySQL's exclusive write locking until commit.
 \item \textbf{\rcc}: Commit-time deadlock detection.
 \item \textbf{\rccw}: Wound-wait for deadlock prevention~\cite{BAMBOO}.
 \item \textbf{\rccd}: Deadlock detection on every lock acquisition. 
 \item \textbf{\rccc}: \rcc with column-level conflict detection (\S\ref{sec-columnar}).
\end{itemize}

To isolate the effect of \rcc's commit-time deadlock detection, we additionally implement \rcc's two variants in terms of deadlock prevention: \rccw, which avoids deadlocks via wound-wait~\cite{BAMBOO}, and \rccd, which detects deadlock synchronously upon every lock acquisition.

\module{Deadlock Detection.}
To evaluate the impact of deadlock handling on speculative write versioning, we compare \rccw, \rccd, and \stx under both isolation levels. 
Under \textit{Repeatable Read} (\autoref{fig:rr_bb_tps}), \rccw outperforms \vanillamy by up to $7\times$ through speculative write versioning, comparable to the gains reported by Bamboo~\cite{BAMBOO}. 
However, its wound-wait preemptively aborts younger transactions on every lock conflict, causing false aborts even in the absence of actual deadlock cycles. In MySQL, each rollback synchronously blocks subsequent transactions accessing the same record until it completes.
\rccd achieves up to $4.2\times$ speedup but degrades beyond 32 threads due to latch contention on the centralized lock table, as eager deadlock detection requires frequent cycle checks. 
\stx outperforms \rccw by $1.6\times$ by avoiding these false aborts, and \rccd by $2.1\times$ by eliminating per-conflict detection overhead. 
Under \textit{Serializable} (\autoref{fig:sr_bb_tps}), since \rcc must track RW and WR conflicts as well as WW conflicts, increasing the overhead of wound-wait and eager detection relative to commit-time detection, \rcc's gain over \rccw and \rccd increases to $2.7\times$ and $2.3\times$, respectively.
By avoiding preemptive aborts and eliminating per-conflict cycle checks, \stx reduces wasted time by $8.8\times$ over \rccw and wait time by $7.3\times$ over \rccd (\autoref{fig:sr_time_breakdown}, 64 clients).

\module{Columnar \stx (RCC-C).}
To assess columnar \stx, we compare \stx and \rccc under \textit{Serializable}. \autoref{fig:sr_bb_tps} shows that \rccc improves throughput by up to $2.3\times$ over \stx. \autoref{fig:sr_latency_64} presents the average ($avg$) and 99th percentile ($p99$) latencies at 64 clients against a \textit{Baseline} of a single transaction.
While \rcc improves the average \texttt{Payment} latency (denoted as $avg$) by 8$\times$ over Vanilla MySQL, the latencies of \texttt{New-Order} and \texttt{Payment} converge.
This is due to record-level dependency tracking, which creates false dependencies between transactions modifying disjoint columns.
For example, a \texttt{Payment} updating \texttt{d\_ytd} must wait for the commit of a long-running \texttt{New-Order} updating \texttt{d\_next\_o\_id} on the same \texttt{District} record.
Under \textit{Serializable}, RW and WR conflicts across other tables such as \texttt{Warehouse} and \texttt{Customer} further exacerbate these false dependencies, forming cycles that trigger unnecessary aborts. Thus, when moving from \textit{Repeatable Read} to \textit{Serializable}, all schemes except for \rccc in \autoref{fig:sr_bb_tps} experience considerable throughput degradation. In contrast, \rccc exhibits significantly less throughput degradation than the other schemes, as column-level dependency tracking avoids the false dependencies that would otherwise form cycles and trigger unnecessary aborts. In addition, as shown in the last row of \autoref{fig:sr_latency_64}, \rccc can significantly reduce the transaction latency. 



\comment{
To verify whether column-level dependency further improves concurrency, we compare \stx and \rccc. As shown in \autoref{fig:sr_bb_tps}, \rccc outperforms \stx by up to $2.2\times$ under \textit{Serializable}. 
Specifically, it reduces the average latency of \texttt{Payment} from 9ms to 5ms and \texttt{New-Order} from 6ms to 1ms. 
This improvement stems from resolving the execution time disparity between the two transactions; \texttt{New-Order} is inherently longer than \texttt{Payment}. 
In \stx, since both transactions update the same \texttt{District} and \texttt{Warehouse} record, the shorter \texttt{Payment} is frequently forced to wait for the completion of the longer \texttt{New-Order} or is subject to cascading aborts due to record-level conflicts. 
By leveraging the physical structure of redo logs to identify column-level offsets, \rccc enables speculative versioning that distinguishes between the different columns accessed by these transactions (e.g., \texttt{d\_next\_o\_id} and \texttt{d\_ytd} in \texttt{District}). This eliminates false dependencies on the \texttt{District} record, effectively removing the pre-commit waiting and abort overheads. \hjin{execution time 차이를 설명해줘야 할까..}
}
\comment{
To understand performance improvement in detail, we analyze the runtime breakdown in \autoref{fig:sr_breakdown}. \texttt{Payment} in \stx-col reduces wait time by x.xx\% and wasted time by x.xx\% compared to \stx, because \texttt{Payment} no longer forms false dependencies on \texttt{New-Order} transactions on the \texttt{District} record, reducing pre-commit waiting and cascading aborts.
In contrast, as shown in \autoref{fig:sr_no_time_breakdown}, wasted time of \texttt{New-Order} in \stx-col increases. Although columnar \stx eliminates false WW conflicts, the higher concurrency intensifies true WW conflicts among concurrent \texttt{New-Order} transactions on the \texttt{Stock} table, forming deadlock cycles that increase wasted time.
}

\comment{
    \module{Effect of deadlock detection} 
    Under \textit{Repeatable Read}, \stx-ww outperforms \vanillamy by up to 7$\times$ by eliminating WW blocking through speculative write versioning, which is consistent with results reported in Bamboo~\cite{BAMBOO}. \stx outperforms \stx-ww by up to 1.6$\times$, because wound-wait forces the older transaction to abort the younger transaction and wait for its synchronous rollback to complete, whereas \stx relies on deadlock detection to resolve only actual dependency cycles. Under Serializable isolation, tracking RW dependencies in addition to WW ones increases the lock conflict points where wound-wait triggers preemptive aborts, widening the throughput gap between \stx and \stx-ww from x.xx$\times$ to 2.7$\times$.
    
    \textbf{Effect of columnar \stx}
    As shown in \autoref{fig:sr_bb_tps} and \autoref{fig:sr_bb}, \stx-col outperforms \stx by up to x.xx$\times$ under \texttt{Serializable} isolation level, and reduces the average latency of both \texttt{New-Order} and \texttt{Payment} by x.xx$\times$ and x.xx$\times$, respectively.
    We attribute this to column-level dependency detection, which eliminates the false WW conflicts on the \texttt{District} record. To understand this improvement, we breakdown the runtime in \autoref{fig:sr_pm_time_breakdown}. \texttt{Payment} in \stx-col reduces commit wait time by x.xx\% and wasted time by x.xx\% compared to \stx, because the false dependency on the \texttt{New-Order} transactions no longer causes pre-commit waiting and cascading aborts. In contrast, as shown in \autoref{fig:sr_no_time_breakdown}, wasted time of \texttt{New-Order} in \stx-col increases. Although columnar \stx reduces commit wait time and cascading aborts by eliminating false conflicts, the higher concurrency intensifies true WW conflicts among concurrent \texttt{New-Order} transactions on the \texttt{Stock} table, forming deadlock cycles that increase wasted time and cascading aborts.
}

\subsection{Performance on YCSB Benchmark}\label{sec:perf:ycsb}
To evaluate \rcc's performance on workloads beyond the synthetic and TPC-C experiments, and in particular to demonstrate its scalability advantage over competing concurrency control schemes under high contention, we run YCSB~\cite{ycsb} on MySQL and DBx1000~\cite{dbx1000} with a single table of 10 million records. Each record has a primary key and 10 columns of 100 bytes each. Each transaction accesses 16 records drawn from a Zipfian distribution ($\theta = 0.99$, $read\_ratio = 0.5$) under \textit{Serializable} isolation.


\module{DBx-1000.} To investigate how existing concurrency control protocols behave under high contention, we compare four protocols on DBx1000~\cite{dbx1000}: Wound-Wait~\cite{wound-wait}, Silo~\cite{silo}, TicToc~\cite{tictoc}, Bamboo~\cite{BAMBOO}.
\footnote{Rebirth-Retire~\cite{rebirth_retire} is excluded because its release implementation~\cite{rebirth-retire-code} did not produce stable results in our environment.}
As \autoref{fig:ycsb_dbx_99} shows, all four protocols degrade beyond 32 threads, each limited by a different bottleneck.
Wound-Wait suffers from prolonged lock holding that serializes concurrent writers.
Silo and TicToc avoid lock waits entirely, but validation fails for nearly every transaction, with abort rates close to 100\%.
Bamboo retires locks early but inherits wound-wait's preemptive aborts, which cascade frequently under high contention~\cite{rebirth_retire}.
These results confirm that per-conflict runtime overhead, preemptive aborts, or synchronous deadlock handling limits the scalability of existing concurrency protocols, as the number of concurrent threads goes beyond 32. 

\module{MySQL.} We evaluate \stx and its variants on MySQL using the same YCSB workload. \rccw and \rccd exhibit performance degradation beyond 16 threads, which is consistent with the TPC-C results in Section~\ref{sec:perf:rcc_variants}.
In contrast, \stx aborts transactions only upon true deadlocks at commit time,
keeping the abort rate low and scaling to 128 threads. This demonstrates that commit-time deadlock detection works well across different high contention workloads. Unlike TPC-C where all threads contend on a single \texttt{Warehouse} row, YCSB distributes contention across multiple hot records of the Zipfian distribution ($\theta=0.99$). This mitigates the lock thrashing observed in TPC-C at high thread counts, allowing \rcc to scale to 128 threads.

\comment{
    교수님 초안: While direct comparison between RCC and Bamboo-family protocols (Bamboo, Rebirth-Retire) in MySQL is challenging due to their implementation complexity in production systems, an indirect comparison can be drawn. On DBx1000 with 128 cores, both Bamboo and Rebirth-Retire exhibit throughput degradation beyond 32 threads — Bamboo due to cascading false aborts from wound-wait, and Rebirth-Retire due to the O(V+E) topological sort overhead on every lock conflict. In contrast, RCC on MySQL consistently scales to 128 threads by eliminating blocking entirely through speculative unlock and confining deadlock detection to the commit phase, achieving up to 3× higher throughput than Bamboo at 128 threads.
    }

\comment{ 
    \revised{
    \begin{figure}[t]
      \centering
      \subfloat[Varying Clients]{%
        \includegraphics[width=.48\columnwidth]{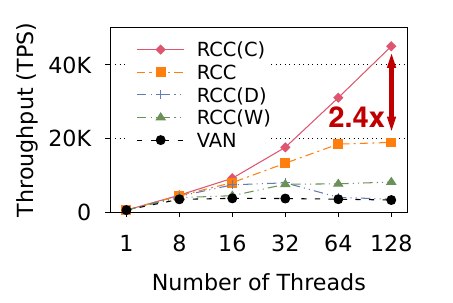}
        \label{fig:ycsb_varying_clients_threads}
        \vspace{-0.2cm}
      }
      \centering
      \subfloat[Varying Updatable Columns]{%
        \includegraphics[width=.48\columnwidth]{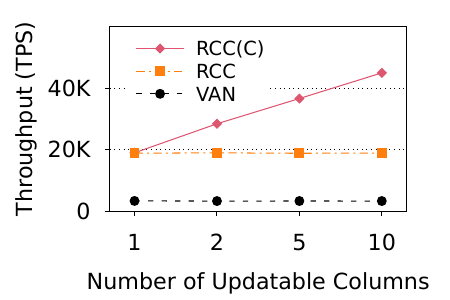}
        \label{fig:ycsb_varying_colrange}
        \vspace{-0.2cm}
      }
      \\[-2ex] 
      \caption{\revised{YCSB Throughput ($read\_ratio= 0.5,\ \theta = 0.99$)}}
      \label{fig:sr_tps}
    \end{figure}
    
    We evaluate \stx using the Yahoo! Cloud Serving Benchmark (YCSB)~\cite{ycsb}. We use a single table with 1 million records. Each YCSB tuple has a single primary key column and 10 additional columns, each with 100 bytes of randomly generated string data. The DBMS creates a B-tree index for the primary key. Each transaction performs 8 \emph{RMW} operations on records chosen from a Zipfian distribution, where each \emph{RMW} issues a \texttt{SELECT FOR UPDATE} followed by an \texttt{UPDATE} on a single column chosen at random. Unless otherwise specified, we set $\theta = 0.99$, use 128 concurrent client threads, and set the isolation level to \emph{Serializable}.
    
    \module{Varying Clients} To evaluate the effect of \stx on other workloads than TPC-C, we compare all five configurations on YCSB while varying the number of concurrent client threads from 1 to 128. As shown in \autoref{fig:ycsb_varying_clients_threads}, \stx-col and \stx consistently outperform the other configurations, achieving up to $13.8\times$ and $5.8\times$ higher throughput than \vanillamy, respectively. The gap between \stx-col and \stx widens as concurrency increases, because higher concurrency raises the probability that concurrent transactions access disjoint columns of the same record, exposing the overhead of record-level false WW dependencies that \stx-col eliminates.
    
    \module{Varying Updatable Columns} To evaluate the effectiveness of columnar dependency detection, we restrict the candidate columns for update to $N$ out of the 10 total columns, and vary $N$ from 1 to 10. As shown in \autoref{fig:ycsb_varying_colrange}, at $N$=1, \stx-col and \stx achieve the same throughput because all WW conflicts target the same column and are true WW conflicts. As the number of candidate columns increases, concurrent transactions are more likely to modify disjoint columns, and \stx creates record-level dependency edges even for these non-conflicting accesses. \stx-col eliminates such false WW dependencies through columnar dependency detection, and the gap widens accordingly: at $N$=10, \stx-col outperforms \stx by up to $2.4\times$.
    }
}


\comment{
\subsection{Comparison with Bamboo and Rebirth}\label{sec:perf:bamboo}

    \swlee{DBX 100 2PL bamboo rebirth 128 core TPC-C TPS graph}
    
    While direct comparison between RCC and Bamboo-family
    protocols (Bamboo, Rebirth-Retire) in MySQL is
    challenging due to their implementation complexity
    in production systems, an indirect comparison
    can be drawn. On DBx1000 with 128 cores, both
    Bamboo and Rebirth-Retire exhibit throughput
    degradation beyond 32 threads — Bamboo due to
    cascading false aborts from wound-wait, and
    Rebirth-Retire due to the O(V+E) topological sort
    overhead on every lock conflict. In contrast,
    RCC on MySQL consistently scales to 128 threads
    by eliminating blocking entirely through
    speculative unlock and confining deadlock
    detection to the commit phase, achieving up to
    3× higher throughput than Bamboo at 128 threads.
}

\comment{
\subsection{Performance under Serializable Isolation}\label{sec:perf:serializable}
This section evaluates \stx under \textit{Serializable} isolation. Enforcing serializability introduces additional overhead compared to \textit{Repeatable Read}, as transactions must track RW dependencies in addition to WW ones~\cite{PWV}. 
We compare \stx against \vanillamy and Bamboo, a 2PL-based protocol that reduces contention of hotspots by violating two-phase locking. Bamboo allows transactions to retire locks after their final update, and preserves serializability through dependency tracking and cascade aborts.

To compare Bamboo and RCC under a unified codebase, we implement Bamboo’s protocol in MySQL. Our implementation includes Bamboo’s Wound–Wait policy, the lock-retire operation, and in-place updates. In addition, we implement Bamboo’s optimization that avoids aborting lower-priority writers on read-after-write conflicts by reading from local copies. We intentionally omit other optimizations and focus the evaluation strictly on the concurrency control logic. Note that Bamboo was originally evaluated on DBx1000~\cite{BAMBOO}, an in-memory DBMS. When Bamboo is applied to a disk-based DBMS such as MySQL, overheads from buffer management and latch contention can limit its performance gains~\cite{looking-glass}.


\module{Varying Client Threads}
As shown in \autoref{fig:sr_bb}a, \stx outperforms Bamboo and \vanillamy by up to 2.8$\times$ and 4.5$\times$, respectively. 
To understand why \stx still achieves higher throughput, we analyze the execution time breakdown in \autoref{fig:sr_bb}b. Both \stx and Bamboo reduce wait time caused by WW conflicts, which contributes to their performance gains over \vanillamy. Bamboo incurs significant wasted time because Bamboo's wound-wait strategy leads to false aborts (\ie~aborts that could have been avoided)~\cite{rebirth_retire}, and its in-place updates force blocked transactions to wait for the rollback process. In contrast, \stx allows immediate access to speculative versions and relies on deadlock detection to avoid false aborts. Although deadlock detection generally incurs high rollback costs~\cite{deadlock-ov}, \stx achieves lightweight abort by simply discarding speculative versions in \emph{TLA}. This combination of deadlock detection and lightweight abort significantly reduces wasted time, which is the primary factor behind the performance gap between Bamboo and \stx.

\comment{
Since Bamboo relies on explicit read dependency tracking to ensure correctness, we conduct this experiment under Serializable mode for all systems. \stx also natively supports a serializable mode by acquiring shared-locks, enabling a fair comparison against Bamboo under identical consistency guarantees.

\autoref{fig:sr_bb} illustrates the throughput in Serializable mode, where tracking shared lock dependencies inherently degrades performance across all systems compared to MVCC-based \textit{Repeatable Read}. Specially, \stx experiences increased wasted time due to cascading aborts, as every read operation establishes dependency chain. While Bamboo employs a wound-wait strategy to preempt deadlocks, this approach incurs significant overheads: it triggers frequent false aborts \cite{rebirth_retire} and blocked transactions to wait for the UNDO process of aborted transactions to preserve consistency, thereby inflating wait time. In contrast, \stx demonstrates efficiency by relying on deadlock detection to eliminate false aborts and leveraging lightweight TLA discards for rapid rollbacks, which significantly minimizes the blocking duration for concurrent transactions.
}

\begin{figure}[t]
  \centering
  \subfloat[Throughput]{%
    \includegraphics[width=.48\columnwidth]{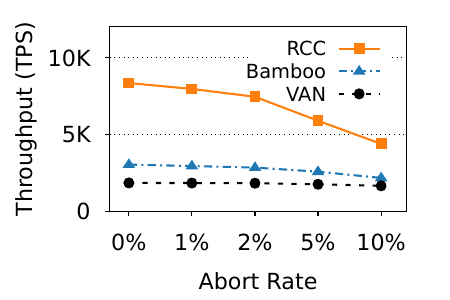}
    \label{fig:sr_bb_tps}
    \vspace{-0.2cm}
  }
  \centering
  \subfloat[Execution Time Breakdown]{%
    \includegraphics[width=.48\columnwidth]{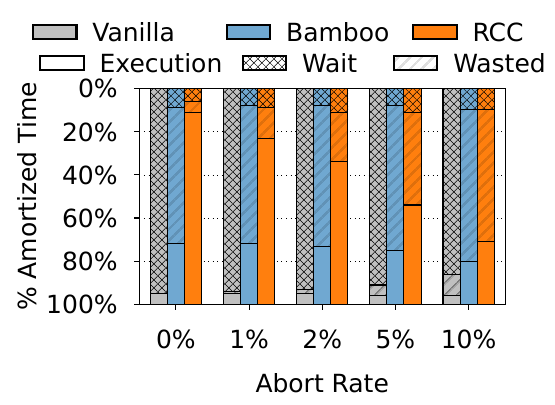}
    \label{fig:sr_bb_time_breakdown}
    \vspace{-0.2cm}
  }
  \\[-2ex] 
  \caption{TPC-C: Varying Abort Rate (Serializable)}
  \label{fig:abort}
\end{figure}

\module{Varying Abort Rate}
Although \stx achieves lightweight abort, cascading aborts can still amplify wasted work. To evaluate the effect of cascading aborts, 
we measure the throughput by varying the default abort rate of 1\% in the TPC-C benchmark from 0\% to 10\% for both \texttt{New-Order} and \texttt{Payment} transactions. 
As shown in \autoref{fig:abort}a, the throughput of all systems decreases as the abort rate increases. At a 10\% abort rate, throughput drops by 11\% for \vanillamy, 29\% for Bamboo, and 47\% for \stx. Nevertheless, \stx outperforms both Bamboo and \vanillamy across all abort rates. To understand this trend, we analyze the breakdown of execution time in \autoref{fig:abort}b. \vanillamy spends most of its time waiting for locks, so aborts incur little wasted work. In Bamboo, though cascading aborts still reduce throughput due to retries, Wound-Wait limits the length of dependency chains and keeps wasted time stable. In contrast, 
Although \stx experiences more cascading aborts, its lightweight abort enables fast retry and allows \stx to maintain 2$\times$ higher throughput than Bamboo even at a 10\% abort rate. Since most OLTP transactions commit successfully (\comment{as discussed in }Section~\ref{sec:oppor}), the impact of cascading aborts is expected to be limited in practice.

\fix{R2D3}{CC protocol 효과와 speculative versioning 효과를 분리하여 보여줄 수 있는 추가 지표 언급}
\hjin{Wound-Wait vs Detection}
\comment{
In \autoref{fig:abort}, we inject probabilistic commit or abort decisions into TPC-C, instead of standard 1\% rate, to evaluate the robustness of \stx. Bamboo employs a Wound-Wait and aborts lower-priority transactions immediately upon conflicts, so aborts occur early and incur relatively little wasted work. In contrast, \stx continues optimistically while tracking dependencies, which can amplify wasted time due to cascading aborts, especially when aborts occur late. Consequently, at a 10\% abort rate, throughput drops by 29\% for Bamboo, whereas \stx degrades by up to 47\%. Nevertheless, \stx maintains 2$\times$ higher throughput than Bamboo even at a 10\% abort rate, suggesting that \stx's light-weight rollback remains effective at high abort rates and helps mitigate the impact of cascading aborts.
}
}

\comment{
\begin{figure}[t]
  \centering
  \subfloat[Varying Contention]{%
    \includegraphics[width=.48\columnwidth]{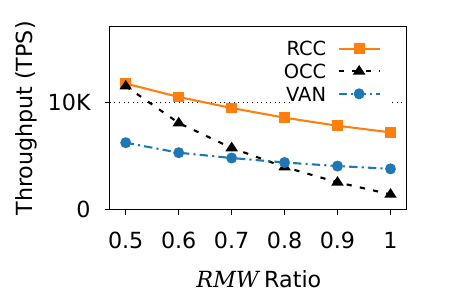}
    \label{fig:varying_ro_occ}
    \vspace{-0.2cm}
    }
  \centering
  \subfloat[Varying Threads]{%
    \includegraphics[width=.48\columnwidth]{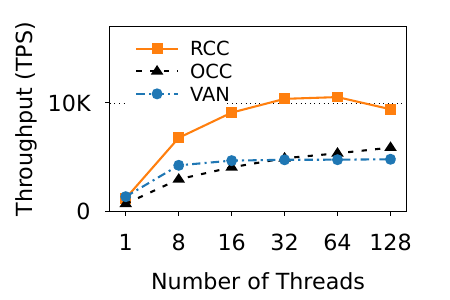}
    \label{fig:varying_threads_occ}
    \vspace{-0.2cm}
  }
  \\[-2ex] 
  \caption{Synthetic Benchmark: OCC vs \stx}
  \label{fig:occ}
\end{figure}
\subsection{Comparison with OCC}\label{sec:perf:occ}
Optimistic concurrency control (OCC)~\cite{OCC} allows transactions to execute without acquiring locks and defers conflict detection to the validation phase at commit time. If validation detects that another transaction has modified the same data, the transaction aborts.\comment{and restarts.}
We compare OCC with \vanillamy and \stx, both under \textit{Serializable} isolation.
However, implementing OCC in a disk-based DBMS requires tracking read and write sets for each transaction and performing set intersection checks during validation~\cite{OCC}.
For a fair comparison under a unified codebase,
we implement a \comment{client-side} synthetic benchmark on MySQL that emulates OCC's core phases (\textit{read}, \textit{validation}, \textit{write}).
The workload consists of 12.8M records with a timestamp column for validation, 128 of which serve as hotspot. 
Each transaction access 16 records (8 from hotspot and 8 elsewhere) without blocking, validate by checking whether timestamps have changed. We vary the number of read-modify-write ($RMW$) operation to control the contention.
On success, the transaction updates timestamps with its transaction ID and installs modifications.


\comment{Each transaction performs an atomic read-modify-write operation on 16 records. We partition the data into a small hot region (100 tuples) and a large cold region, and control contention through $P_{hot}$, where $P_{hot}$=0 means all accesses target the cold region and $P_{hot}$=1 means all accesses target the hot region.}

\comment{
To evaluate the performance of OCC without modifying the underlying database kernel, we implemented the OCC baseline \hjin{OCC-like benchmark? or Simulated OCC baseline?} as a client-side benchmark. This benchmark emulates the cored phases of mordern OCC protocols (e.g. Silo)\hjin{뭘 따라했는지 설명 필요? ex. including the read phase, validation phase, and write phase, running on top of MySQL’s Read Committed isolation.}

Each transaction executes an atomic RMW operation on a fixed set of 16 records on synthetic workloads. We explicitly regulate the workload contention level through $P_{hot}$, a parameter representing the Hostpot Access Probability. We utilize a bimodal distribution by partitioning the data into a small fixed-size hot region(consisting of 100 tuples) and a large cold region. $P_{hot}$ determines the probability that a transaction access the hot region. For example, when $P_{hot}$=0.8, 80\%  of accesses are directed exclusively hot region. This setup allows for precise control over the conflict rate to evaluate how each protocol handles contention.
}

\comment{
a transaction first reads the versions and values of records (Read Phase) and then attempts to increment the values (Commit Phase). This configuration simulates a high-contention environment where every data access results in a state modification, putting significant pressure on the locking and validation logic compared to typical read-heavy workloads.
}
\module{Varying Contention} 
OCC exhibits the largest degradation due to increasing validation failures, resulting in an 88\% throughput drop, whereas the 2PL-based protocols decline by 39\% due to waiting for record lock. \stx avoids such aborts through access-time conflict ordering while achieving higher concurrency than \vanillamy through speculative write versioning.

\module{Varying Client Threads}
In \autoref{fig:occ}b, we fix the $RMW$ ratio at 0.7 and vary the number of threads. OCC saturates at 16 threads due to validation conflicts, while \stx scales up to 64 threads. As contention increases with higher $RMW$ ratios or more threads, OCC suffers from excessive aborts due to validation failures. \stx avoids such aborts through access-time conflict ordering and achieves higher throughput by reading speculative versions.
\swlee{While OCC and its variants such as SILO eliminate lock contention, its abort ratio in the validation phase grows super-linearly with thread count, causing throughput collapse beyond 32 cores. \stx addresses this fundamental limitation by combining speculative write versioning with commit-time deadlock detection, ensuring transactions are never blocked or unnecessarily aborted, achieving consistent scalability up to 128 cores.
}
\comment{
\module{Varying Hotspot Access Probability}
\autoref{fig:occ}(a) presents the throughput as we vary $P_{hot}$ from 0.2 to 0.8 with 128 threads. In low-contention scenarios ($P_{hot}$ = 0.2), OCC outperforms both Vanilla and \stx because it incurs minimal overhead when conflicts are rare. However, as contention intensifies, OCC throughput collapses due to high validation failure rates~\cite{Brook-2PL, IC3}. \stx maintains stable throughput because it resolves conflicts at access time.
}

\comment{
\autoref{fig:occ}(a) reports the throughput as we increase the contention level by varying $P_{hot}$ from 0.2 to 0.8 with 128 threads. In low-contention scenarios ($P_{hot}$ = 0.2), OCC outperforms both Vanilla and \stx because it incur minimal overhead when conflicts are rare, as they execute optimistically without locks. However, as contention intensifies, the performance of OCC collapses drastically due to high rate of validation failures. \stx demonstrates robustness.
}

\comment{\autoref{fig:occ}(b) presents the TPC-C throughput with 1 warehouse. \msj{hj, 아래 내용 한글로 추가부탁해. 1. 실험 설정 ($P_{hot}$), 2. RCC도 성능이 떨어지는 이유 설명 필요 + OCC 대비 성능이 높은 이유, 3. clients threads Module도 앞 module과 동일하게 high contention scenario에서 OCC보다 \stx가 더 낫다는 것을 보여주기 위함인지?}}
}


\comment{

Below are descriptions of the three workloads used in the experiments:
\module{Orderline table partitioning} connection 수가 많아졌을 때, OL 테이블의 activie page에 과도한 tuple insertion 때문에, bottleneck이 되어서, 이를 완화하기 위해, OL table을 Wh-id 기준으로 partition했음. 그렇지 않은 경우, Vanilla보다 성능이 낮아져서 1달 간 고생했음 :(
뮌헨서 베를린 가는 기차 안에서 해결... -> 현재 MySQL 8.4는 OL partitioning하지 않은 버전으로 진행

\module{Warehouse, District table Partitioning} MySQL 8.0부터 Lock Mutex(Sharding, 5.7에서는 global mutex 변수 하나로 사용) 및 Page Latch 등을 page_id 단위로 설정하고 있음. 이 경우 다른 레코드임에도 작은 테이블일 경우 한 페이지에 있게 되어 latch 경합이 심해짐. 그 중 페이지 접근이 많고 레코드가 적은 WH(w_id), DT(w_id, d_id)로 파티셔닝 적용

\subsection{Benchmark Issue}
\module{Common}
\begin{itemize}
    \item New-order TRX: Stock update의 경우, item ID가 중복되어서 (비복원 + Non-uniform) 동일한 TRX 내에서 동일한 Stock Record가 두 번 업데이트 될 수 있음. {\bf 따라서 New-order TRX에서는 모든 Write가 Final Write가 아님.}
    \item Delievery TRX: SELECT min(no o id) FROM new orders 에서 과거 값을 읽어 다음 DELETE FROM new orders에서 X LOCK 대기로 인한 병목, 256conn 이상에서 성능이 하락하는 주된 요인 -> 실험은 NO, PM만 진행
\end{itemize}
\module{Postgres}
\begin{itemize}
    \item New-order TR
        \begin{itemize}
            \item Batch로 X-Lock을 잡아버림으로써 ShadowTX에서는 Deadlock이 발생할 확률이 높아짐. 따라서 ShadowTX에 맞게 Batch S4U + Update를 S4U + Update Loop로 변경
            \item S4U Stock -> Insert Order Line -> Update Stock에서 S4U Stock -> Update Stock -> Insert Order Line로 변경
            \item Deadlock 상황을 줄이기 위해서 ItemID를 오름차순으로 정렬 (percona tpc-c도 동일하게 구현되어있음)
        \end{itemize}
    \comment{
        Batch S4U + Batch Update: Benchbase TPC-C 벤치마크의 New-order TRX에서 Stock 테이블을 업데이트할 때, Batch로 Lock을 잡고, Batch로 Update를 수행함. (Percona TPC-C MySQL의 경우 Lock -> Update를 반복적으로 수행함.) 예를 들어, 10개의 Record를 업데이트해야한다면, Benchbase TPC-C는 10개의 Record에 대해서 S4U(X-Lock)을 전부 잡아두고 Batch로 10개의 Record를 업데이트하고, Percona TPC-C는 1. Record에 대해서 S4U, 2. Record Update를 반복적/1으로 10번 수행함. Benchbase TPC-C의 경우, Batch로 X-Lock을 잡아버림으로써 ShadowTX에서는 Deadlock이 발생할 확률이 높아짐. 따라서 ShadowTX의 성능을 높이기 위해서는 Batch S4U + Update를 S4U + Update Loop로 바꾸는게 중요함.
    }
    \item Payment TRX
        \begin{itemize}
            \item Percona 에서는 바닐라의 성능 개선을 위해 Update WH, DT -> Select 순으로 쿼리 진행(다른 칼럼을 접근하기 때문에 과거 값을 읽어도 괜찮기 때문). STX에서는 이렇게 하면 Select에서 과거 버전 또는 미래 버전을 읽기 위한 오버헤드가 크기 때문에 성능이 감소하여 S4U->Update로 수정하였음. 
        \end{itemize}
    
\end{itemize}

\begin{figure}[t]
\centering
\includegraphics[clip,width=0.4\textwidth]{fig/shadow_tpcc_tps_per_connection.png}
\caption{TPC-C Throughput: Varying Client Threads (1WH)}
\Description{TPC-C Throughput: Varying Client Threads}
\label{fig:tpcc_tps_mysql}
\end{figure}

\begin{figure}[t]
\centering
\includegraphics[clip,width=0.4\textwidth]{fig/shadow_tpcc_cpu_per_connection.png}
\caption{CPU Utilization: Varying Client Threads}
\Description{CPU Utilization: Varying Client Threads}
\label{fig:tpcc_cpu_util_mysql}
\end{figure}

\module{in-memory vs. disk-based version에서 \stx 효과}

\subsection{Basic Performance Analysis with TPC-C}
To compare the cost-effectiveness of three systems – DRAM-only {\tt Vanilla} (5GB DRAM), {\tt NV-SQL(2:1)}, and {\tt NV-PPL(2:1)} (2GB DRAM and 1GB NVDIMM) – we measure TPS and relevant metrics while running the TPC-C benchmark on each system.

\module{Transaction throughput and latency}

\module{Runtime Analysis} 
like Bamboo 논문, \stx lock wait 시간, commit wait 시간 (commit dependecy 처리 시간..), abort 시간? stakced graph...

\subsection{Effect of \stx on Other Parameters}\label{sec:other-param-eval}

\module{Effect of the number of Warehouses}  Question: 8 conn. per 1 warehouse 동시 접속 트랜잭션 수행 vs.  64 conn per 8 warehouses: 동시성 관점에서 각 warehouse별로 비슷하게 충돌할까? 아니면, 64개가 서로 충돌을 일으켜서 동시성이 훨씬 저하될까?  Answer: TPC-C에서 Warehouse \#와 Client \# 관계는 concurrency conflict 패턴에 큰 영향을 줍니다.  TPC-C에서 Warehouse는 일종의 데이터 파티션입니다 대부분 트랜잭션은 특정 warehouse ID (W\_ID)에 국한되어 수행됩니다. {\bf 예외적으로, New-Order 트랜잭션은} 일부 확률(약 10\%)로 다른 warehouse의 stock을 참조하면서 cross-warehouse 트랜잭션을 발생시킵니다.

{\bf NewOrder - TX cross-warehouse stock order: TPC-C new order에서 10개 제품중 하나는(즉, 10\%) 다른 warehouse 제고를 주문하도록 설계되어 있다네.} 따라서, wh를 늘리면, 서로 다른 wh간 tx들이 서로 충돌을 유발하게 되어서, \stx 입장에서 dependency length가 늘어남. 

https://chatgpt.com/share/683a8fa5-ddf0-8005-82fb-a6e741097c89

이런 cross-warehouse item 주문 일반적? : 아주 핵심을 찌르는 질문입니다. 결론부터 말하자면: 일반적인 OLTP 애플리케이션에서는 TPC-C처럼 cross-warehouse item 주문이 흔하지 않습니다. 하지만 예외나 유사한 구조가 있는 특정 도메인에서는 유사한 패턴이 일어날 수는 있습니다. (위 내용 참고)

따라서, cross-wh new order 제거하고 실험; 또는 every 10th new order tx에 대해 cross-wh 추가할 경우 성능 평가 포함. 

\module{Effect of SSD Performance} I/O 수반시 \stx trade-off - In-memory의 경우, cpu와 blocking이 tps에 영향을 미치는 주요 factor임. 그러면, buf size가 작아서 i/o 발생시 (즉, Tx별 buf mis에 따른 io latency가 도입될 때, \stx 효과 확인 필요 
\reminder{design 장에 장점 간단 언급?}
\begin{itemize}
\item  (장점) vanilla는 blocking T1이 miss에 따른 IO 유발시 T2는 더 긴 시간 대기 (CC와 cpu util 더 낮아짐.)  blocking time 증대 Vs. Stx는 no blocking 및 그에 따라 tx latency 감소 및 io 병렬성 증대
\item  SSD slow write 때문에 read stall 증대에 따른 read I/O 개선 limit될수 있음; 이를 숨기기 위해  (SSD write 성능이 bottleneck이 안되도록) high end ssd 사용? SSD 성능에 따른 \stx 개선 정도 
\item  동일 ssd에서도 buffer size별 개선 정도 및 분석 제시
\item \stx: t2 계속 진행해서 동시성 높아짐. 하지만, 한 conn서 t2 pre-commit후 다음 tx를 issue하지 않으면 \stx 효과 별로 없을 수도... 
\end{itemize}

{\bf 주요 예상 결과: cpu/lock bound 대신 io bound가 더 dominant해질 때}
\begin{itemize}
    \item buffer size가 줄어서 상대적으로 io가 더 많이 유발될 때 (즉, 모든 tx가 read miss 시) 격차 경향? buf size 늘어서 tx별 buf miss가 줄어들때 경향? 
    \item SSD iops가 빨라질수로 \stx 성능 향상 증가 
\end{itemize}

To verify the effect of \stx on different SSDs, we measure the transaction throughput while running the TPC-C benchmark using two MySQL versions on three other commercial SSDs.
The results are summarized in \autoref{tab:diff-ssd}.
The first two rows of the table illustrate varying random read and write IOPS capabilities across the three SSDs tested. Across all SSDs, \stx consistently outperforms Vanilla, though the degree of improvement varies depending on the SSD. Specifically, for {\tt SSD-A}, which shares a vendor and similar internal architecture with the SSD, \stx achieves a performance improvement over Vanilla of 6.9$\times$.

Meanwhile, on {\tt SSD-B} and {\tt SSD-C}, \stx outperforms Vanilla by more than 40 times.
Both SSDs exhibit lower random write IOPS than {\tt SSD-A}, resulting in Vanilla achieving only 22 TPS or fewer (approximately one-tenth of {\tt SSD-A}'s TPS).
Under the \stx scheme, which transforms the write-heavy I/O pattern into a read-heavy one, the random read performance of SSDs becomes more critical than the random write performance, provided that a certain level of random write performance is maintained.

\begin{table}[t]
\small
\centering
\caption{Effect of \stx on Different SSDs (TPC-C)}
\label{tab:diff-ssd}
\begin{tabular}{rrrr}
\toprule
& {\tt SSD-A$^\dagger$} & {\tt SSD-B$^\sharp$} & {\tt SSD-C$^\ast$} \\
\midrule
Read IOPS(4KB) & 370K & 300K & 95K \\
Write IOPS(4KB) & 500K& 100K & 90K \\
Vanilla (TPS)  & 174 & 22  & 17 \\
\stx (TPS)   & 1205  & 896  & 776  \\
TPS Improvement  & 6.9$\times$ & 40.7$\times$ & 45.6$\times$ \\
Write Reduction Ratio &77\% & 86\% & 89\% \\
\bottomrule
\end{tabular}
\small{
\begin{tablenotes}
\item[*] $^\dagger$ Samsung 970 Pro NVMe SSD 512GB, $^\sharp$ Samsung SM951 256GB, $^\ast$ Micron Crucial MX500 250GB
\end{tablenotes}
}
\end{table}

\module{Effect of Log Device Performance}  One benefit of \stx is that with faster log device, tx logic 수행이 TPS bottleneck이 되므로, \stx 장점 극대화(?); how to simulate the log device performance, NVDRAM? ( Log devices: HDD, SSD, NVRAM )
\begin{itemize}
    \item 특히, NVRAM의 경우, T1가 commit record를 쓰는 순간, 모든 depending TXs cascade abort 필요성 사라짐. 
    \item Do we need to limit the number of depending TXs before T1 commits (i.e., before its commit record is made durable in log device), so as to limit the penalty of cascaded abort? If T1 commits, remove the limitations? cascaded group commit?
\end{itemize}

\module{CPU Core 수에 따른 \stx 효과} 1WH 128 connectsions에서 core수 증가에 따른 vanilla vs. \stx TPS graph

\module{1024 core에서 WH 수 및 connection 수에 따른 성능 차이}
\comment{

\textcolor{red}{\module{Transaction throughput}} Figure~\ref{fig:throughput-128}: 
(Figure~\ref{fig:throughput-128} 128 core 기준 1 wh 32 connection에서 성능 폭이 커서, wh 증가 실험에서도 32conn/wh로 실험 (1~4); 4wh 이상에서는 CPU utilization으로 tps saturated!

Figure~\ref{fig:throughput-32}: 32 core 기준 - 128 core보다 성능 향상폭이 consistently 큼 because the transaction execution time gets longer, thus the effect of pipelining by \stx increases and concurrency increases. 
\reminder{혜진 - 128 vs. 32 core machines에서 단일 NO/PM 수행시간 측정 제시.} 
\hjin{128core에서 NO(2ms), PM(0.5ms) 소요}

In the remainder, all the experiments are conducted using 128 cores.  

\textcolor{red}{{\bf Effect of Concurrency}} \reminder{아래 commented 부분 포함? 성재 작성?} 
\module{Runtime Overhead of \stx} connection수가 extreme하게 늘어날 수록, stx에서 도입한 자료구조와 dependency tracking/deadlock detection algorithm에 의해 부담이 생김. 이러한 overhead가 얼마나 되고 언제쯤 들어나는 지 investigate하기 위해, 1warehouse 128 connection에서 어떤 어떤 값들을 measure/present했음. overhead가 marginal??

\textcolor{red}{\module{Effect of Transaction Length}} The main reason why \stx outperforms vanilla is due to the pipelining. To check whether the effect of pipilining of conflicting transactions scales with longer transactions, we run the TPC-C only using PM. extend PM's length by calling the {\em sleep()} function with 0.0005 sec and 0.001prior to commit. 0.0005 is the double of single PM execution at 128 cores.  

\begin{itemize}
    \item Figure~\ref{fig:tx-length}: PM only로 PM의 길이를 {\em sleep()} 함수를 끝에 추가해서 TX 길이를 길게했을 때 pipeline에 따른 concurrency의 증가 정도
    \item As expected, with longer PM, \stx outperform vanilla. 12x and  18x, respectively. 
\end{itemize}

\textcolor{red}{\module{Effect of Hotspot position}} \reminder{혜진: Bamboo 논문 관련 section/subsection 참고해서.. 관련 문구 copy해두고, 가상의 결과를 Table or Figure로 일단 추가 부탁} 

트랜잭션 간 충돌이 발생하는 지점을 hotspot으로 정의함. transaction 길이 수정한 것과 유사하게 PM을 이용해서, WH/DISTRICT hotspot conflict 위치를 수정해서 .... following condition에 대해 \stx의 효과를 검증하였음.
1) Hotspot position이 트랜잭션 앞부분에 존재할 때 vs. 끝에 존재할 때
2) Hotspot 개수 - 단일 트랜잭션 내에서 다른 트랜잭션과 충돌이 발생하는 지점의 개수를 1개, 2개, 3개 ... 증가시킬 때 성능 (아니면 2개로 제한하고?)

\textcolor{red}{\module{Effect of Logging Performance}}
\reminder{MySQL 8.0의 경우, logging 기능을 OFF 할 수 있음! NO-logging 버전에서 성능평가 필요!}
\hjin{1wh/64conn에서 실험해본 결과 TPS가 van: 2033->2121, stx: 16411->17727, 성능: 8.1x->8.4x 로 증가하였습니다. (바닐라의 경우 mysql이 제공하는 logging off 기능을 이용, stx에서는 이 config로 off하면 shadowing이 적용되지 않아서 하드코딩하여 redo log를 쓰는 부분을 주석처리 하여 간단히 확인해보았습니다. 제대로 구현하면 stx의 성능이 좀 더 올라갈 것 같습니다)}
One benefit of \stx is that with faster log device, tx logic 수행이 TPS bottleneck이 되므로, \stx 장점 극대화(?); how to simulate the log device performance, NVDRAM? ( Log devices: HDD, SSD, NVRAM )

\textcolor{red}{\module{Effect of Hybrid Deadlock Detection or Adaptive \stx}} Cascade abort에 따른 폐해를 줄이기 위해 dependency chain이 길어지면 adaptive하게 동작.. 및 그 효과. 
}

}
\section{Related Work}\label{sec:related}

\comment{\stx improves concurrency by maintaining multiple speculative write versions per object and enabling early write visibility  under the optimistic assumption that most transactions commit. In addition, it guarantees correctness using lock-based conflict ordering at record access time. In this regard, \rcc is closely related to four categories of concurrency schemes: optimistic concurrency control~\cite{OCC,hekaton,hyper}, relaxed pessimistic concurrency control~\cite{ELR,CLV,Altruistic,BAMBOO}, early write visibility through analysis and chopping~\cite{PWV,chopping,IC3,MCC}, and multi-version concurrency control~\cite{mvcc-eval,mv2pl-Carey:92,mv2pl-chen:82}.
\rcc is closely related to four categories of concurrency schemes: optimistic concurrency control, relaxed pessimistic concurrency control, early write visibility through analysis and chopping, and multi-version concurrency control. 
}
\comment{
vs. Optimistic concurrency control (OCC)~\cite{OCC} - speculative write version but no access-time conflict ordering, leading to excessive aborts and repeated retries

vs. Bamboo~\cite{BAMBOO} - early write visibility via program analysis but immediate update, thus incurring cascade abort. repeatable read 시까지 lock hold (local read copy로 해결?)vs. \stx: repeatable read 문제 없음. skip in-place updates at normal time by group commit. and no undo overhead at cascade abort.
}

\module{Optimistic Concurrency Control.}
Assuming that conflicts are rare, optimistic concurrency control (OCC)~\cite{OCC} allows transactions to execute without blocking, validating against committed versions only at commit time. Silo~\cite{silo} optimizes OCC for multi-core systems by decentralizing the validation phase. TicToc~\cite{tictoc} improves scalability through data-driven timestamps. Like OCC and its variants, \rcc maintains transaction-private versions during execution and wastes work on aborted transactions. Unlike optimistic schemes, however, \rcc minimizes aborts through lock-based access-time conflict ordering, avoiding the excessive abort rates that they suffer under high contention~\cite{Brook-2PL, IC3}, as illustrated in Figure~\ref{fig:ycsb_dbx_99}.

\module{Relaxed Pessimistic Concurrency Control.} Pessimistic concurrency control (PCC) schemes, such as strict two-phase locking (S2PL) protocol~\cite{COWBOOK}, guarantee serializability but limit the concurrency due to the strict locking protocol. Numerous techniques have been proposed to improve concurrency in PCC by relaxing the conservative locking protocol duration, while preserving the correctness guarantee. Based on the observation that pre-committed transactions can safely expose their updates while waiting for log flush~\cite{Partial-Strictness}, early lock release (ELR)~\cite{ELR} and controlled lock violation (CLV)~\cite{CLV} allow transactions to release locks before commit. 
However, since ongoing updates remain hidden, conflicting transactions are still serialized.
In contrast, in \rcc, each transaction speculatively releases its lock immediately after updating a record, pipelining subsequent transactions on the record.

\module{Early Visibility via Transaction Analysis and Chopping.} 
To further improve concurrency by advancing the write visibility prior to transaction commit, several techniques decompose transactions into smaller pieces to avoid conflicts~\cite{chopping,IC3,MCC, PWV}. Transaction Chopping~\cite{chopping} decomposes transactions into sub-transactions to enable early update visibility, but requires static analysis of transaction access patterns.
Runtime Pipelining~\cite{MCC} and IC3~\cite{IC3} relax chopping's SC-cycle constraints through runtime enforcement, allowing transactions to read uncommitted data at the risk of cascading aborts.
PWV~\cite{PWV} avoids cascading aborts via early commit points and deterministic execution, while requiring 
transaction decomposition. In contrast, \stx does not require static analysis or transaction decomposition.
\comment{
To further improve concurrency by advancing the write visibility prior to transaction commit, several techniques decompose transactions into smaller pieces to avoid the conflicts~\cite{chopping,IC3,MCC}. Transaction Chopping~\cite{chopping} decompose transactions into sub-transactions to enable early visibility of updates. A valid chopping must satisfy two conditions: rollback-safe, where any rollback statement must lie in the first subtransaction, and absence of SC-cycles in the chopping graph. Runtime Pipelining~\cite{MCC} allows concurrency beyond SC-cycle constraints by enforcing a fixed table access order and permits dirty reads. IC3~\cite{IC3} achieves fine-grained chopping through column-level static analysis and enforces piece execution order at runtime. IC3 also allows dirty reads and merges deadlock-prone SC-cycles into larger atomic pieces to prevent deadlocks. These approaches require full knowledge of workload access pattern and impose restrictions to avoid deadlocks. Piece-wise Visibility (PWV)~\cite{PWV} extends pipelining to handle transactions with unknown access sets but still requires access information. Among these techniques, Runtime Pipelining and IC3 incur cascading aborts when a predecessor aborts because they allow dirty reads, while PWV avoids cascading aborts via early commit points that make writes visible only after all abortable operations have finished. In contrast, based on the optimistic assumption that transactions will commit and update data objects just once, \stx does not require either static analysis or transaction decomposition. In case the assumption does not hold, it can efficiently abort the depending transactions because of its out-of-place update. 
}

\comment{ 
Bamboo~\cite{BAMBOO} allows transactions to retire locks after their final write to a record, enabling subsequent transactions to read uncommitted data during execution. In this regard, Bamboo is closest to \stx in that it provides speculative write visibility and supports cascade aborts through dependency tracking among conflicting transactions. However, to achieve high performance, Bamboo performs control and data flow analysis to determine the last write position. In addition, Bamboo operates on single-version storage, where transactions update the current version in place, incurring high undo overhead upon cascade aborts. To mitigate this overhead, Bamboo adopts wound-wait, which can cause false aborts~\cite{rebirth_retire}. In contrast, \stx confines speculative updates to redo logs, enabling lightweight rollback by simply discarding speculative versions.\comment{without touching data pages.} This low abort overhead allows \stx to employ deadlock detection instead of prevention, reducing false aborts (Section~\ref{sec:perf:serializable}).
}

Bamboo~\cite{BAMBOO} is the closest work to \rcc, as it allows transactions to retire locks after their final writes to a record, enabling subsequent transactions to read uncommitted data speculatively. Like \rcc, it also supports lazy update installation, dependency tracking among conflicting transactions, and lightweight cascading aborts.
However, \rcc differs from Bamboo in three ways: 
First, for deadlock handling, \stx adopts \emph{commit-time deadlock detection} while Bamboo relies on the wound-retire protocol. 
Unlike Bamboo which triggers a high rate of false aborts caused by wound-wait's conservative nature~\cite{rebirth_retire}, \stx defers detection entirely to commit time and aborts transactions only upon true deadlocks. Thus, \rcc achieves a lower abort rate and higher throughput than Bamboo.
In particular, as illustrated in \autoref{fig:ycsb_dbx_99}, this leads to a dramatic difference in scalability: \stx scales to 128 cores and threads whereas Bamboo degrades beyond 32.
Second, \rcc naturally supports column-level concurrency control by leveraging redo log deltas, whereas Bamboo cannot, as it maintains full-tuple snapshots as local copies, making column-disjoint install order-dependent.
Lastly, while Bamboo is implemented only on DBx1000~\cite{dbx1000}, a research prototype, \stx is fully integrated into two production DBMSs, MySQL and PostgreSQL, 
passing BenchmarkSQL's consistency checker. This demonstrates that \stx's redo log-based design generalizes across different DBMS architectures.

\comment{
\swlee{Claud.ai: Rebirth-Retire~\cite{rebirth_retire} acknowledges that wound-wait causes unnecessary aborts and proposes Rebirth to avoid aborting transactions unless a deadlock is imminent. However, Rebirth still performs deadlock checking synchronously during execution via topological sorting, keeping detection overhead on the execution critical path. \stx's commit-time deadlock detection goes further: by deferring all deadlock handling to the commit phase, in-flight transactions are never interrupted by deadlock resolution, eliminating both false aborts and execution-time detection overhead entirely.}
\msj{Rebirth-Retire는 conflict 시점마다 synchronous하게 sort + cycle check + timestamp 재부여를 수행하면서 TX가 block됨. 그러나 RCC는 conflict 시점에 non-blocking하게 진행하고, cycle detection을 1번만 commit phase로 defer하는 것이 차별점임}
\swlee{Rebirth-Retire는 timestamp 변경에 따라, repeatable read 지원 불가!!}

\revised{
\jhpark{삭제? 본문 Columnar 설명 중복}
\module{Column-Level Conflict Detection}
\comment{
Column-level conflict detection has been explored to reduce unnecessary serialization among transactions 
accessing disjoint columns of the same record.
PCLL~\cite{PCLL} tracks modified columns in the lock table to elimniate false conflicts under row-level locking.
IC3~\cite{IC3} leverages column-level static analysis to constrain parallel execution at commit time.
However both approaches operate on single-version storage and do not consider speculative versioning in MVCC systems.
\rccc extends speculative versioning to the column granularity by exploiting physiological logging, which captures column-level modifications in redo log entries, combined with a column bitmap in the lock table.
This enables runtime column-level dependency detection without static analysis, eliminating unnecessary pre-commit waiting between transactions that modify disjoint columns of the same record. 
}
}
}

\comment{
\fix{R2D4}{오해를 줄 수 있는 표현 수정하기 (아래 Bamboo 글 인용)
\begin{itemize}
    \item "For better performance, a lock can be retired after the transaction’s last write to the tuple if the tuple may be updated more than once by the same transaction. To determine where the last write is, Bamboo can rely on programmer annotation or program analysis to find the last write and insert lock\_retire() after it. In this section, we discuss the latter approach."
    \item Bamboo can be extended to any variants of 2PL but some variants fit better than others. Wait-Die, for example, allows only older transactions to wait for younger transactions. When applying retiring and dirty reads to this setting, the older transactions are subject to cascading aborts, meaning an unlucky old transaction may starve and never commit. Such problems do not exist in Wound-Wait.
\end{itemize}}
}

\comment{
    Instead of decomposing transactions, Bamboo~\cite{BAMBOO} allows transactions to retire locks after their final write to a record, enabling subsequent transactions to read uncommitted data during execution. Therefore, Bamboo improves concurrency without prior workload analysis. 
    In this regard, Bamboo is closest to \stx in that it enables reading uncommitted writes and supports cascade aborts through dependency tracking among conflicting transactions.
    \comment{Bamboo~\cite{BAMBOO} shares our observation that transaction chopping has severe limitations due to its reliance on full workload knowledge. 
    However, the timing of lock retirement affects Bamboo's performance. To maximize concurrency, Bamboo performs control and data flow analysis to determine the last write position and synthesizes retire conditions for dynamic decisions. In addition, though reducing lock blocking time, Bamboo operates on single version storage where all transaction update the current version in place, incurring high undo overhead upon cascade aborts. In contrast, \stx leverages redo logs as speculative versions, allowing each transaction to maintain its own version without in-place modification of the current committed record.
    By detouring W-W conflicts away from shared state, \stx achieves scalable concurrency without relaxing lock discipline or incurring heavy rollback overhead. Our evaluation confirms that the combination of deadlock detection and lightweight abort handling is the primary factor behind \stx's performance advantage over Bamboo (Section~\ref{sec:perf:serializable}). In summary, Bamboo improves scalability by relaxing lock release rules, whereas \stx improves scalability by eliminating immediate in-place updates through speculative write versioning~\cite{rebirth_retire}.
}
}

\comment{
Bamboo~\cite{BAMBOO} shares our observation that transaction chopping requires restrictive assumptions on workload knowledge.
Instead of decomposing transactions, Bamboo~\cite{BAMBOO} allows transactions to retire locks after their final write to a record, enabling subsequent transactions to read uncommitted data during execution. Therefore, Bamboo improves concurrency without prior workload analysis. 
In this regard, Bamboo is closest to \stx in that it enables reading uncommitted writes and supports cascade aborts through dependency tracking among conflicting transactions. However, the timing of lock retirement affects Bamboo's performance. To maximize concurrency, Bamboo performs control and data flow analysis to determine the last write position and synthesizes retire conditions for dynamic decisions. In addition, though reducing lock blocking time, Bamboo operates on single version storage where all transaction update the current version in place, incurring high undo overhead upon cascade aborts. In contrast, \stx leverages redo logs as speculative versions, allowing each transaction to maintain its own version without in-place modification of the current committed record.
By detouring W-W conflicts away from shared state, \stx achieves scalable concurrency without relaxing lock discipline or incurring heavy rollback overhead. Our evaluation confirms that the combination of deadlock detection and lightweight abort handling is the primary factor behind \stx's performance advantage over Bamboo (Section~\ref{sec:perf:serializable}). In summary, Bamboo improves scalability by relaxing lock release rules, whereas \stx improves scalability by eliminating immediate in-place updates through speculative write versioning~\cite{rebirth_retire}.
}

\comment{
Bamboo~\cite{BAMBOO} also supports early write visibility using static program analysis like the above ones. \reminder{Bamboo의 TX chopping, IC3, PWV와 차이점?} Bamboo allows transaction to retire lock after their final write to a record, enabling subsequent transactions to read uncommitted data during execution. 
In this regard, Bamboo is closest to \stx. in that it provides early write visibility and supports cascade aborts by managing dependencies among conflicting transactions. 
However, Bamboo requires prior workload analysis to identify the final writes. 
In addition, though reducing lock blocking time, Bamboo operates on single version storage where all transaction update the current version in place, incurring high undo overhead upon cascade aborts. In contrast, \stx leverages redo logs as future versions, allowing each transaction to maintain its own version without in-place modification of the current committed record.
By detouring write–write conflicts away from shared state, \stx achieves scalable concurrency without relaxing lock discipline or incurring heavy rollback overhead. In summary, Bamboo improves scalability by relaxing lock release rules, whereas \stx improves scalability by eliminating immediate in-place updates through speculative write versioning.  \reminder{혜진, wound-wait Bamboo 개선 논문 cite 추가}
}

\module{Multi-Version Concurrency Control.}
MVCC systems maintain multiple versions of the same logical object to increase concurrency~\cite{mvcc-eval,mv2pl-Carey:92,mv2pl-chen:82}.
Depending on the version storage architecture, MVCC implementations adopt either Newest-to-Oldest (N2O) or Oldest-to-Newest (O2N) ordering for the version chain. 
MySQL adopts N2O ordering with undo logs for past version reconstruction, while PostgreSQL uses O2N with a tuple-level version chain. Both architectures manage past committed versions but still block on WW conflicts for the current version.
\stx extends these architectures by introducing redo logs as speculative version storage, enabling concurrent updates while guaranteeing that serialization order follows lock acquisition order.


\comment{
}

\comment{
    Redo log as future version, Lazy Update, Select-For-Update 문, (Final Write) 사용 WW 해결하는 점에서 차별점!!!
    
    SSI, PG SSI 등에서 RW-antidependency 개념, SIREAD Lock 등 개념 기반; 관련해서, PG SSI 논문 포함 비교(?)
    
    \begin{itemize}
        \item 2PL, SI/MV, OCC와 비교: 기존 기법 대비 \stx에서 허용하는 장점들; false deadlock  측면 등. 
        \item Partial strictness in 2PL~\cite{PS-in-2PL}: 'commit request' op를 도입해서, tx logic 다 수행한 경우, 다른 TX가 update해도  무방함을 보임; final R/W hint의 도움을 통해 logic 수행 안으로 시점을 더 앞당김; 또한, \stx는 자신의 모든 update를 redo log로 남기기 때문에, final write 끝난 객체를 읽어도 OK이고, 그 update도 redo log로 남기기 때문에, 다른 tx의 repeatable read도 방해하지 않음.  cascade rollback도 쉽게(어떤 의미? need to refine) 달성가능함. 
        \item O2PL서 차용한 commit ordering 논문 
        \item TicToc~\cite{tictoc}: time travel storage engine 과의 차별성 언급; 
        기존 occ에서 사용하는 global ts가 아니라 데이터 기반 ts (튜플 레벨에서 read/write ts 별도 관리), lazy timestamp allocation 으로 확장성 개선
        \stx는 database system에서 사용하고 있는 undo/redo log를 활용하여 트랜잭션 레벨에서 consistent한 데이터 접근 가능;
    \end{itemize}
    
    \module{Concurrency Control scheme} 
    CC scheme 비교; Locking / OCC Validation / MV 카테고라이즈화해서 
    
    (vs. Field call and OCC) FC or OCC는 current version 값을 동시에 읽고 다 동시 진행 후, 나중 commit시에 validation 진행; 실패 시 abort (hot object에 대해 abort 과도한 것으로 알려져 있음.); VS. STX는 Lock 기반 최신 값에 대해 serialized하게 읽어 진행해서 validation 불필요; Only when depending TXs abort, T is cascade-rollbacked. 
    (유사점) FC처럼, precommit시 redo log LSN 할당, lazy update, lazy commit 등 유사한 점 많음. (포함?) FC는 repeatable read 미지원 vs. \stx는 기존 read view construction으로 repeatable read 지원
    
    \module{vs. OCC:} A crucial distinction from OCC lies in how RLV manages concurrent updates. OCC allows transactions to update the same object without any coordination and relies on commit-time validation to detect conflicts. This leads to wasted work and frequent aborts under contention. By contrast, Redo Logs as Versions (RLV) requires each updating transaction to briefly acquire a write lock on the latest version of the object. However, unlike traditional locking, the lock is held only for the duration of the update and is immediately released thereafter. This short-duration locking prevents destructive write–write conflicts upfront while avoiding long blocking times. As a result, RLV achieves two goals simultaneously: 1) no commit-time validation is necessary, since conflicts are already prevented at update time, and 2) high concurrency is preserved, as locks are held only briefly and never until commit.
    
    \begin{itemize}
        \item  OCC와 \stx가 udpate를 current version 대상으로 하지 않고 local copy나 redo log에 temporary하게 저장하는 측면에서 유사점
        \item  OCC에서는 여러 update를 동시 허용하고 나중에 validation함. vs. \stx는 SFU lock 획득 transaction들이 순차적으로 controlled된 방식으로 수행하기 때문에, 특히 hot spot의 경우 retry가 훨씬 줄게 됨(?)
        \item  논리적으로 current version W lock을 획득한 하나의 TX만 update 허용하는 controlled 방식 --> OCC와 차이
        \item memory overhead 입장에서 Redo log를 활용하기 때문에, OCC private record memory 불필요(?)(또는 유사?)
        \item page local copy vs. local redo log; 무조건적 parallel execution에 의해 복잡한 validation 및 과도한 rollback vs. controlled WW 허용과 상대적으로 단순한 group commit 
        \item  page local copy vs. local redo log; 무조건적 parallel execution에 의해 복잡한 validation 및 과도한 rollback vs. controlled WW 허용과 상대적으로 단순한 group commit 
        \item OCC는 특정시점 t0의 current version 바탕으로 N개 update 동시 진행. VS. \stx에서는 future version 을 controlled way로 읽어 진행
        \item OCC는 충돌이 안 날걸로 보고 validation을 하는데, WW/WR/RW 관점에서 충돌하지만, final read/write 여부 확인하고, commit 할 것으로 보고, no cascade abort 부담없이 진행하는 기법. OCC는 timestamp로 하기 때문에, false conflict 존재; \stx는 no false!!
    \end{itemize}
    
    (GPT: While our approach delays in-place updates, it fundamentally differs from optimistic concurrency control (OCC). OCC treats updates as speculative and resolves conflicts through validation at commit time, aborting transactions that violate serializability. In contrast, we treat redo logs as explicit version storage, where redo records represent future versions that are managed for visibility and conflict resolution throughout transaction execution.
    
    Rather than validating conflicts retrospectively, our mechanism proactively resolves update conflicts by allowing concurrent writers to operate on separate future versions without forcing serialization on the latest state. This decoupling of version creation from physical updates enables conflict resolution without relying on speculative execution or late-stage aborts.)
    
    \module{TRX reordering} Trx 내 쿼리 순서를 reordering하더라도 WW blocking을 완전히 해결하는 것은 불가능하지만, \stx 구조를 활용한다면 WW Blocking 문제를 해결가능하다는 장점이 있다.
    따라서 Final Write가 확인되면 후행트랜잭션이 계속 업데이트가 가능해서 병렬성이 더 높아짐.
    
    \module{MVOCC} Hekaton, HYRISE, MemSQL, Hyper MVOCC 차용;
    read, write phase 2단계로 구분됨; 
    두개 이상의 트랜잭션이 동일한 데이터 쓰기 시도 후, 한 트랜잭션이라도 먼저 commit 하면 다른 트랜잭션들은 validation phase에서 자신이 읽은 데이터 (read set)
    write-write conflit가 발생함;
    \stx는 유사하게 optimistic한 transaction processing을 수행하지만, 
    redo log를 별도로 남기기 때문에 WW conflict를 피할 수 있다는 점;
    While MVOCC abort \reminder{conservatively? --> blindly?} and thus waste CPU cycles (useful work), \stx can resolve most WW conflict in a controlled way and thus save CPU cycles.

    \module{Bamboo~\cite{BAMBOO}} 
    \begin{itemize}
        \item hotspot에 집중(warehouse, district); 이외에도 \stx는 해결하는 문제?
        \item 가장 큰 차이: future log 기반 
        \item single version vs. multiversion:  최신 single version에 모두 update함. 따라서, T1, ..., Tn의 update overwrite함.; \stx서는 최신 tuple 대신 future version store에 update 유지한 후, 모두 commit후에 redo log를 tuple에 최종 반영; 따라서, abort, +? 등이 light-weight해짐. 
        \item 충돌 난 레코드에 대해 no undo 할 필요가 없다 (future log만 삭제)
        \item With Bamboo, {\bf cascading Abort overhead 때문에}, 업데이트가 끝나더라도 후행 트랜잭션이 sequential 하게 업데이트를 진행하지 못함. VS \stx의 경우, MySQL의 경우 Redo Log를 메모리에 별도로 유지하고, PostgreSQL의 경우 Tuple레벨로 MVCC를 진행하기 때문에 Cascading Abort로 인한 overhead 최소화.
        \item pre-commit / group commit 개념?; abort/recovery 차이점? 
        \item columnar CC: 칼럼 레벨로 dependency가 없어, cascaded abort 필요하지 않는 경우도 있음. 
        \item Bamboo는 deadlock-free (deadlock avoidance <- old tx 우선 순위, tx timestamp 기반으로 commit ordering) VS> 우리는 deadlock 허용/resolution (자세히 확인 필요) 
        \item vs. Lightweight cascaded abort; Bamboo서는 관련 tuple에 대해 lock을 지고 있어야 하는데, 우리는 abort가 빨라서 lock hold time이 작다(혜진). 
    \end{itemize}
    
    \subsection{TicToc} Andy Pavlo 비교; storage 관점 version storage를 time travel 형태로 관리; vs. \stx는 transaction lavel의 time travel로서  ww conflict 해결 
    
    \subsection{Controlled Lock Violation} 
    
    Early Lock Release:
    \begin{itemize}
        \item `` In essence, one transaction must not publish or persist another transaction’s update until the update is durable''(from controlled lock violation paper)
        \item [4] `` The principal idea is to remove the locks from the  lock manager early but to retain “tags” to convey commit dependencies among transactions.'' -> \stx와 유사? 
        \item Altruistic Locking~\cite{altruistic} 유사점 설명 필요
    \end{itemize}

    Controlled Lock Violation~\cite{CLV}:
    \begin{itemize}
        \item Single Version 가정-yes. CLV paper related work section 마지막; read only tx 문제 등 다루지만 별시리... 
        \item It eschews the early lock release and the commit dependency management. 
        \item  With respect to concurrency, controlled lock  violation matches early lock release as it permits subsequent  transactions to violate or ignore lock conflicts, but only in very  controlled situations. More specifically, a subsequent transaction  may acquire a lock that violates an existing lock if the transaction  holding the lock is already in its commit process, i.e., \underline{if it has  allocated a commit record in the log buffer}. Thus, controlled lock  violation enables concurrency in all situations in which early lock  release (corrected with tags) enables concurrency.
        \item Commit 시간 해결에 주안점 (flash, nvram log device로 이 부분 문제 완화됨) vs. application logic시 WW lock에 의한 waiting 방지 
        \item T0's commit log 가 log buffer에 쓰여졌다는 것은, 결국 commit하는 것이기 때문에, crash 만 아니면 성공적으로 수행했다는 보증. \stx에서도 이점 활용 가능!! T0 commit log 쓸 때, depending but already precommitted Ti들에 대해서도 Commit log를 같이 쓰도 됨 (group commit))
    \end{itemize}
    
    Controlled Lock Violation:
    \begin{itemize}
    \item t0가 lock을 끝까지 holding;  
    \item for long t0 vs. short t1, t1 should wait long? 
    \item 두 시스템에서 commit dependency 개념/구현 차이?
    \item group commit, precommit 개념 차별화?
    \item group commit commit dependency vs. precommit한 tx
    \item Single version으로 current copy에 WW overwrite함. 
    \item read-only tx 이슈
    \item what about WW conflict? For ``canned transactions'' coded as stored procedures, static code analysis can enable controlled lock violation even {\bf before a transaction’s commit request.} - {\bf The same is true with \stx resorting to final write!!!} -- 이 경우, 둘 다 cascaded rollback 필요 -- 어쩌면 여기에 더 가까움. 이때 두 기법의 차이가 극명; redo log만 discard하면 됨. 
    \item CLV - eager apply vs. \stx - lazy apply at (pre)commit --> less cascade abort overhead!! 
    \item In CLV, {\bf long T0 \& short T1 경우}, T1 cannot commit until T0 LSN high water mark is made durable. 하지만, \stx의 경우, T1는 precommit하고, 해당 session에서 다음 TX 시작 가능하고, T0 commit시 T1도 group commit 가능; 이 측면에서 `Rethink the sync' (external synchrony)에 더 충실함. 
    \end{itemize}
    
    Field Call/Escrow TX 유사성/차이점 
    
    \module{Piece-Wise Visibility (PWV)~\cite{BHG}:} 
    \begin{itemize}
    \item high transaction throughput from delayed to early write visibility~\cite{PWV}: To enable early write visibility, PWV decomposes transactions into a set of sub-transactions or pieces, such that each piece consists of one or more transaction statements. PWV then schedules pieces such that their corresponding transactions execute in a serializable order. PWV makes a piece’s writes visible as soon as its transaction commits, even if one or more pieces of the same transaction have not yet executed. 
    \item WW/WR conflict 관점 write visibility가 CC의 significant impediment임을 보임. 
    \item 아주 제한된 상황에서 abort하는 경우, 해당 logic만 수행하고 난 후에 write visibility를 early 허용함. (그럼에도 불구하고 abort할 수 도 있지 않나??) VSVS. final write에 대해 early write visility 측면에서 유사함. 하지만, \stx는 아무런 제한 없이 (또는 final write 이후) final write visibility를 허용함. 
    \item PWV는 deterministic CC 가정, deadlock 등에 의한 abort 없는 것으로 가정. 
    \item TX간 commit dependency는 어떻게 유지?? 차이점 기술 필요.
    \item Single Versioning 가정? serializable isolation protocols such as 2PL and OCC 가정 
    \end{itemize}
    
    \subsection{Serializable Snapshot Isolation} 
    \begin{itemize}
        \item S2PL 방식
        \item PostgreSQL 방식
        \item PSSI
    \end{itemize}

    \comment{
    \module{CTR} MS Azure SQL Database has recently introduced a novel database recovery mechanism, called Constant Time Recovery (CTR), which combines AIRIES recovery with multi-version concurrency control to achieve  \comment{leverages the row versions generated for MVCC to support} database recovery in constant time~\cite{CTR}. 이는 \textbf{CC 기능을 활용해서 Recovery acceleration하는 최초 방법임}. {\bf 반면, \stx}는 \textbf{recovery용 Redo buffer를 (multi-)version store로 활용함으로써, concurrency control acceleration에 활용}하는 최초의 방법임. 
    \module{Aurora} Log is database. --> \stx는 Log can play the role of (future, not past) Version Store!
    }
    
    \comment{
    In that the durable log in \nvipl is leveraged/helpful to 1) reduce writes to the storage,  2) reduce reduce stall and enable parallel I/Os, 3) increase concurrency by treating log as 가상의 쓰기 and thus reduce WW conflicts, 4) enable instant recovery, there are several closely related works. 
    Recent advances in flash memory SSD technology have increased the I/O bandwidth and throughput of storage devices based on flash memory significantly, and it has been demonstrated that this new storage medium can be used to improve the throughput of large scale OLTP systems by orders of magnitude with much less energy consumption~\cite{Intel09,LM09}. This impressive development has been realized by various research and development endeavors for better hardware and software such as multi-channel architecture, over-provisioned capacity, large DRAM buffer, flash translation layers, wear leveling algorithms, and so forth. In particular, flash translation layers play a significant role in overcoming the limitations of flash memory such as erase-before-write and the limited number of erase cycles~\cite{DFTL,intel:ftl,FAST07}.
    Nonetheless, if flash memory SSDs are used as a simple replacement of disk drives, a database system may not be able to fully exploit the advantages of flash memory nor fully overcome its disadvantages. Rather, as is demonstrated in previous work~\cite{LM07,Tsirogiannis09}, some of the key components of database server design may well be revisited so that they become {\em aware} of the distinct characteristics of flash memory. The \tipl scheme presented in this paper has been inspired by the same spirit and developed as a flash-aware design for transaction
    management.
    }
    
    \comment{
    Not much work has been reported in the literature about transactional support for flash memory based storage and database systems so far. In this section, we briefly review research results related to this work in the area of transactional file system, multiversion-based recovery, and more traditional recovery techniques.
    \parag{Transactional IPL} No wal, undo-less recovery 등은 \tipl~\cite{icde11:tipl}에서 빌려옴. 하지만, flash-based \tipl은 -memory log sector의 durability를 위해 IO가 필요하고, 따라서, logging 및 recovery {\em three phase commit} and {\em page-check}와 같은 new log type을 도입할 필요가 있음. 또한, redo-only logging을 필요함. 반면, in-NVDIMM log는 log의 instant durability로 인해 이런 단점이 전형 없음. 
    \parag{IPL-P~\cite{vldb11:iplp}, SQLite/PPL~\cite{PPL:2015} 등 - 1) tranditional wal log기반 rec x
    2) multi-versioning 지원} 
    }
    
    \comment{ 
    \parag{Transactional Flash File Systems}
    There have been a few recent attempts to provide transactional support
    for file systems, for example, {\em transactional flash}~\cite{Prabhakaran08}
    and {\em Light-weight Time shift FTL (LTFTL)}~\cite{LTFTL08}. 
    In order to support various useful features such as atomic writes, 
    rollback, undelete, and time travel operations in a file system,
    they exploit the existence of multiple versions of a page, which are produced
    by out-of-place updates in flash memory, much like what we do in \tipl. 
    In addition, unlike pure FTL-based approach, they assume more functional
    interfaces other than the usual read and write in order to provide
    tighter interaction between the file system and flash memory. 
    Unlike \tipl, however, these approaches write a whole page as a unit of
    update propagation instead of writing change logs,
    and thus a force commit policy is mandatory for them to ensure
    atomicity and consistency.
    The main drawback of the approaches is the storage overhead for maintaining
    old version pages and the overhead of garbage collection
    (performed in the background) to clean up old version pages. 
    }
    
    \comment{
    \parag{Multiversion Based Database Recovery}
    The Constant Time Recovery Technique in MS's Azure SQL Database~\cite{CTR}: {\bf Similarity} It demonstrates how to leverage multiple versions in reducing recovery time. (undo-less recovery는 tIPL insight borrow?) MV (in-row + off-row Versions) 활용 redo와 undo 빠르게 해서 constant time recovery 달성 - 이 점에서 \nvipl 유사;  
    {\bf Difference:} write reduction, \stx로써 version store 활용 못함.  
    Several multiversion based database recovery techniques had been proposed
    in the early 1980s such as the {\em atomic actions} by Reed~\cite{Reed83} 
    and its variants~\cite{Bober92,Chan82}.
    Some of the recent multiversion based snapshot isolation
    approaches\cite{Berenson95,Oracle95} can be traced back to these pieces of work.
    The existence of multiple versions makes it easier to
    implement multiversion concurrency control and
    guarantee the atomicity of transactions.
    Like the transactional flash file systems, the multiversion based recovery approaches use a whole page as a unit of version and adopt a force policy for commit protocol. In addition, these approaches are often criticized for its coarse lock granularity, space overhead from  multiple versions, and write bandwidth~\cite{Gray78}.  This criticism is also shared by the transactional flash file systems mentioned above.  It is interesting to note that the transactional flash file systems and the multiversion based database recovery techniques attempt to achieve a similar goal at different layers of the system hierarchy.
    }
    
    \comment{
    \parag{Other Recovery Techniques}
    Besides the popular ARIES-style database recovery, several important recovery techniques have been proposed. We review two of them that are closely related to \tipl.
    The shadow page mechanism has been proposed as a straightforward way of achieving database recovery~\cite{Lorie77}. The main advantage is that neither undo nor redo recovery is necessary under this approach. To the best of our knowledge, there is at least one commercial database product marketed for embedded systems with flash memory, called Polyhedra FlashLite~\cite{FlashLite09}, that uses this approach. However, the shadow paging mechanism has not been widely adopted by major database systems, because of its limited scalability due to garbage collection overhead, storage space overhead, heavy fragmentation of a table object, and inflexible concurrency control~\cite{Bernstein87}. The same criticism will be still valid even when the shadow paging is used for flash memory. 
    There is another recovery strategy that has been proposed based on {\em undo at read time} approach~\cite{Rappaport75}. Under this recovery method, undo is not done in the recovery phase, but is done at read time after restart. On fetching a data item, it is checked whether the data item is created by a successfully committed transaction. If not, it is ignored and then its previous version is retrieved. The {\em no undo} recovery by \tipl is quite similar to this approach. To some extent, this recovery method similar to the deferred rollback approaches~\cite{Mohan93,Oracle01}.
    }
}
\section{Conclusion}\label{sec:conc}

In this paper, we observe that write–write conflicts limit concurrency severely, and argue that the long-standing practice of in-place updates to records is the primary culprit.
We then proposed \rcc, which is, to our best knowledge, the first approach to leverage redo logs as speculative versions. \rcc aims to improve transaction latency and throughput by pipelining conflicting transactions, not serializing them, while simultaneously achieving serializability through lock-based access-time ordering and dependency management. 
To harness the full potential of speculative versioning, \stx proposed two novel techniques: commit-time deadlock detection and columnar concurrency control. Both are critical for improving transaction latency, throughput, and scalability. 
We show that \stx can integrate naturally with both N2O and O2N MVCC schemes, and experimental results confirm that \rcc improves concurrency significantly and scales up to 128 concurrent threads.

\bibliographystyle{ACM-Reference-Format}
\bibliography{ref}

\newpage
\appendix
\section{Appendix}\label{appendix}

We implemented \stx on MySQL (v8.4) and PostgreSQL (v16.2), which represent two prevalent MVCC architectures: newest-to-oldest (N2O) and oldest-to-newest (O2N), respectively. While \stx leverages redo logs as speculative version storage in N2O systems, the same design principle applies to O2N systems by materializing speculative versions directly as tuples in the version chain.
Our implementations of \stx for both N2O and O2N require moderate changes only to the lock manager and transaction management modules: MySQL (about 2K lines) and PostgreSQL (about 1K lines).
As such, we believe that \stx can be readily integrated into commercial DBMSs, such as Oracle and Microsoft SQL Server.

\comment{
We implemented \stx on MySQL (version 8.4) and PostgreSQL (version 16.2), representing the two predominant MVCC architectures: Newest-to-Oldest (N2O) and Oldest-to-Newest (O2N), respectively. 
\stx leverages redo logs as speculative version storage in N2O systems like MySQL, while in O2N systems like PostgreSQL, speculative versions are materialized directly as tuples in the version chain.
Despite this difference, the core \stx mechanisms can seamlessly apply to both architecture.
\stx extends only the lock manager and transaction management modules, and can be readily incorporated into other commercial DBMSs such as Oracle and MS SQL Server.
}

\subsection{Speculative Version Construction}
\stx needs to store uncommitted updates as speculative versions. Its MySQL implementation maintains speculative versions as redo-log entries in a dedicated memory region (i.e., \emph{TLA}), while its PostgreSQL implementation materializes them as new speculative tuple versions directly in the heap pages themselves (i.e., in the MVCC version chain).

\comment{
\vspace{-0.1cm}
\paragraph{MySQL} 
When a transaction updates a record, it first copies the predecessor's speculative version to its own \emph{TLA}, performs its update, and stores the redo log in its \emph{TLA}. \stx initially allocates 4KB of heap memory for each transaction's \emph{TLA} and dynamically allocates additional memory as needed. In our TPC-C experiments, each speculative version consumes about 300 bytes on average, comparable to the corresponding record size. 
}

\comment{
The \emph{TLA} is allocated as a 4KB memory heap within each transaction's context. speculative version을 TLA에 저장. TPCC 실험에서, 각 future version은 300B 정도를 소모하고 이는 레코드 크기와 동일함. If the TLA exceeds 4KB, RCC extends dynamically in the same manner as Vanilla MySQL. When a depending transaction reads a speculative version, it first copies the predecessor’s version entry to its own TLA. when a depending tx updates applies the log to reconstruct the version. This local copy enables to avoid repeated access to the predecessor's \emph{TLA} and also eliminates contention on shared memory structures.

The \emph{TLA} is allocated as a 4KB memory heap within each transaction's context, storing redo log entries in the same log format used by \vanillamy with additional metadata, which is similar to PPL header in NV-PPL \cite{nvppl}. \hjin{NVPPL 언급? 아니면 그냥 헤더 붙였다고?}

\reminder{storing redo log entries in the same log format used by \vanillamy: HJ, 어떤 의미?}\hjin{mysql에서 사용하는 redo log와 동일 format의 log로 저장. 즉 mysql의 Recovery에서 사용하는 parsing 함수 사용 가능. 아래 sj 코멘트처럼 ppl header + redo log payload 방식}.
In our experiments with the TPC-C, each transaction consumes 700B on average (1.3KB maximum). If the \emph{TLA} exceeds 4KB, \stx extends dynamically in the same manner as \vanillamy.
When a dependent transaction reads a speculative version. it first copies the predecessor's \emph{TLA} entry to its own local buffer, then applies the log to reconstruct the version.\reminder{HJ, 읽을 때도 copy하나? 읽을 때 copy해서 만들고, 수정할 떄, log apply하겠지? when a dependent tx updates ... 라고 기술하는 것이 더 낫지않을까?}\hjin{네 읽을 때 copy하고 upate할 때 해당 version에 update 합니다. }
This local copy enables to avoid repeated access to the predecessor's \emph{TLA} and also eliminates contention on shared memory structures. \msj{TLA에 저장하는 redo log가 기존 MySQL의 redo log를 확장한 PPL의 per-page log를 활용했음을 언급? PPL's\_header  + original redo log's payload}
}

\vspace{-0.1cm}
\paragraph{PostgreSQL} 
\stx creates speculative tuples following the same procedure as \vanillapg. To read the speculative version and create a new speculative tuple version, \stx utilizes three existing fields in the tuple header: \texttt{xmin}, \texttt{xmax}, and \texttt{t\_ctid}, which stores the pointer to the next version in the chain. \stx identifies the latest speculative version by checking the status of \texttt{xmin} and \texttt{xmax} via \texttt{pg\_xact} (transaction status log)~\cite{postgres-xact}, where \texttt{xmin} is in progress, and \texttt{xmax} is invalid or \texttt{t\_ctid} points to itself. 
\comment{
\stx utilizes three existing fields in the tuple header of \vanillapg without adding extra fields: \texttt{xmin} stores the creating transaction ID, \texttt{xmax} stores the deleting transaction ID, and \texttt{t\_ctid} stores the pointer to the next version in the chain. Specifically, when a transaction performs an update or lock acquisition, \stx checks the the status of \texttt{xmin} and \texttt{xmax} via \texttt{pg\_xact} which stores the transaction status. If \texttt{xmin} is in-progress, \texttt{t\_ctid} points to a newer version, and \texttt{xmax} is not aborted, \stx follows the version chain to update the latest speculative version. 

\stx appends speculative versions as new tuples to the version chain, storing them directly in heap page. \stx uses the same tuple metadata from \vanillapg: \texttt{xmin} records the creating transaction ID, and \texttt{xmax} records the deleting transaction ID. \stx extends the visibility check rule to identify tuples whose \texttt{xmin} corresponds to an active predecessor transaction, allowing dependent transactions to read speculative versions by traversing the version chain. Unlike MySQL which reconstructs versions from redo logs, PostgreSQL accesses speculative versions of tuples directly as tuples stored in heap pages. In our TPC-C experiments, each speculative tuple consumes approximately 120 bytes of shared buffer space, including tuple header (23 bytes) and data payload (97 bytes on average). 
}

\comment{
\stx with PostgreSQL traverses the version chain and determines tuple visibility by examining the \texttt{xmin} and \texttt{xmax} metadata which record the creating and deleting transaction IDs, respectively.
\stx leverages this mechanism by appending speculative tuples to the version chain and extending the visibility check to identify tuples whose \texttt{xmin} corresponds to an active predecessor transaction.
Unlike MySQL which reconstructs speculative versions from redo logs, PostgreSQL allows direct access by traversing the version chain.
\msj{특히 미래버전을 읽게 하기 위해서 PostgreSQL의 Visibility Rule x를 수정함. Visibility rule이 기존: A였는데 미래버전을 읽게하기 위해서 B로 수정했다고 추가할 예정입니다. 이 부분은 제가 코드내용 확인 후에 정리해서 영어로 추가해두도록 하겠습니다.}

\paragraph{PostgreSQL simplifies speculative access.} PostgreSQL's O2N versioning naturally aligns with \stx's speculative versioning: the version chain grows from old to new, so a \texttt{UPDATE} operation inherently produces a speculative version at the chain's tail without requiring a separate pathway. Furthermore, O2N incurs no rollback overhead upon abort—tuples created by aborted transactions are simply marked invisible and reclaimed asynchronously by vacuum, unlike N2O systems that must undo in-place modifications. These properties allow R-MVCC implementation in PostgreSQL with only the addition of dependency graph management.
}

\comment{
\msj{PostgreSQL에서의 구현 장점}
\paragraph{PostgreSQL simplifies speculative access.}
PostgreSQL's O2N versioning offers two advantages for R-MVCC implementation.
First, speculative versions align with the native chain direction.
In O2N, the version chain grows from old to new, and newly created versions are appended at the end of the chain.
Hence, a regular \texttt{UPDATE} operation inherently produces a speculative version without any additional mechanism.
In contrast, N2O maintains the latest version in-place and links older versions through undo logs.
Since speculative versions represent future states, they run counter to the N2O chain direction, necessitating a separate pathway to store and apply redo logs upon commit.
In contrast, PostgreSQL requires \textbf{only a modification to the visibility rules}.
Second, leaving physical versions on pages incurs no rollback overhead.
Although N2O must rollback in-place update upon abort, O2N does not require rollback.
Instead, tuples created by aborted transactions are simply treated as invisible and reclaimed asynchronously by vacuum.
These properties allow \stx implementation in PostgreSQL with only the addition of dependency list management.
}

\subsection{Dependency Management}\label{appendix:dep-man}
\stx needs to track dependencies between transactions to guarantee a serializable schedule equivalent to the strict two-phase locking protocol and to handle cascading aborts.
When transaction $Tj$ accesses a speculative version created by an active transaction $Ti$, \stx records a dependency edge from $Tj$ to $Ti$.

\comment{
\vspace{-0.1cm}
\paragraph{MySQL} 
The order of lock objects within the lock table serves as a dependency graph per tuple.
Each transaction maintains its \texttt{pred\_list} within its private memory. 
When a transaction enqueues a lock request, it identifies the immediate lock holder preceding it in the \texttt{lock\_sys->rec\_hash} as its predecessor. 
Then, it appends a pointer to the predecessor and a pointer to its conflicting lock object in the lock table to its \texttt{pred\_list}.
For columnar dependency detection, \rcc adds {\tt column\_bitmap} to the {\tt lock\_t}, adopting~\cite{PCLL}. A dependency edge is created only when the bitmaps of two locks overlap~\cite{PCLL}.
}


\comment{
\reminder{기존 MySQL의 어떤 자료 구조 \texttt{XXXX}를 활용해서 dependency chain or graph를 표현했는지 언급}
\stx employs a dependency graph where each node stores the blocking transaction pointer(8 bytes), conflict lock information (XX Byte), and speculative version (XX Byte), consuming approximately 20 bytes per dependency edge. \jhpark{@hj, 모두 포인터 정보인지?}
\stx leverages \vanillamy's dependency management for deadlock detection, which enforces lock acquisition order.}
\comment{
1. pointer of blocker trx, conflict lock, future vers 등을 node로 하는 conflict dependency graph 따로 관리. (lock acquisition 순서가 depending 순서인 것도 맞음)(why? Vanilla MySQL도 waiting dependency를 이렇게 pointer를 활용해 graph로 관리)
2. per trx 관리 vs. per data 관리 -> avoids contention on global data structure; 충돌이 발생한 데이터에서는  >> design 장으로
\reminder{혜진: Chain from current to active version - active version chain 어떻게 구현?}  \hjin{lock 생성 시 찾은 blocker\_trx를 각 transaction lock instance에서 포인터로 관리.(MySQL 8.0에서 waiting dependency를 관리하는 방식과 동일) 이후 future read 시 blocker의 future version을 memcpy하여 나의 TLA에 future version 저장. 추후 update도 여기에 진행하고 future log를 남김}
}

\vspace{-0.1cm}
\paragraph{PostgreSQL}
Unlike MySQL, where lock acquisition order in the central lock table serves as a dependency graph, PostgreSQL stores row locks in tuple headers~\cite{postgres-tuplock}, providing no such structure. To track commit dependencies, \stx extends PostgreSQL's transaction manager with a bidirectional dependency graph. \stx maintains the \texttt{pred\_list} in each transaction's local memory and the \texttt{dep\_list} in shared memory using a partitioned hash table to reduce mutex overhead. Both lists store transaction IDs, and their sizes scale with the number of predecessors and dependents, respectively. When a transaction creates a speculative version, \stx extracts the predecessor ID from the tuple's \texttt{xmin}, adds it to its \texttt{pred\_list}, and inserts itself into the predecessor's \texttt{dep\_list}. Note that \stx does not support columnar dependency detection on PostgreSQL, because both its heap pages and WAL store full tuples rather than column-level deltas.
\comment{
    Since \vanillapg stores the locking information in tuple headers instead of a centralized lock table~\cite{postgres-tuplock}, tracking commit ordering across dependency chains is challenging. To address this, \stx extends PostgreSQL's transaction manager to maintain a bidirectional dependency graph for tracking transaction dependencies. \stx maintains the \texttt{pred\_list} in each transaction's local memory and the \texttt{dep\_list} in shared memory using a partitioned hash table to reduce mutex overhead. Both the \texttt{pred\_list} and the \texttt{dep\_list} store transaction IDs, and the size scales with the number of predecessors and dependents, respectively.
    When a transaction updates a speculative version, \stx extracts the predecessor ID from the tuple's \texttt{xmin}, adds it to its \texttt{pred\_list}, and inserts itself into the predecessor's \texttt{dep\_list}.
    \revised{\stx does not support columnar dependency detection on PostgreSQL, as PostgreSQL materializes speculative versions as full tuples in the heap pages and its WAL captures full tuple copies.}
}


\comment{
To do so, \stx extracts predecessor transaction IDs (8 bytes) from each tuple's \texttt{xmin} field, which records the creating transaction's ID, and stores them in a shared dependency table.
\jhpark{txid 관리 mutex 병목?}\msj{TX ID 관리 mutex overhead를 줄이기 위해 local dependency(내가 depending 하는 TX ID list)와 global reverse dependency (나를 depending하는 TX ID list)를 분리하여 저장하고 global reverse dependency는 partitioning(e.g., TX\_ID \% 64)하여 contention을 최소화하고 있습니다. }
To detect deadlocks under \stx, the dependency graph maintains bidirectional edges between transactions.
}

\comment{
Similarly, \stx with PostgreSQL maintains a conflict dependency graph in shared memory. \jhpark{@sj, 얼만큼 할당하는지?}
However, it tracks dependencies at transaction granularity, as PostgreSQL stores row-level locking information in the tuple header. 
\stx extracts predecessor's transaction IDs from the tuple's \texttt{xmin} field, which records the creating transaction's ID.
}

\comment{
\begin{itemize}
    \item multi-threaded인 MySQL과 달리 PostgreSQL은 multi-process 구조로, 각 backend process의 프로세스 로컬 메모리 (process-private memory)에 최대 128개의 의존 트랜잭션 ID를 저장하는 array Depending TX\_ID List를 할당합니다. (128개는 임의로 잡은거여서 더 줄여도 됩니다.)
    \item 아직 커밋되지 않은(in-progress) 트랜잭션이 생성한 tuple \texttt{t}에 Lock을 걸거나(SELECT FOR UPDATE) 업데이트하는 경우(UPDATE), 해당 tuple의 \texttt{xmin}을 Depending TX\_ID List에 추가함. 
    \item 단, 동일 트랜잭션에 대한 의존성은 한 번만 기록됩니다.
    \item 2026.01.03 추가 deadlock 체크를 위한 reverse dependency graph도 만들어서 관리함.
\end{itemize}
\msj{PostgreSQL에서의 STX 적용 한계에 대한 claude의 정리: "PostgreSQL's tuple-level locking architecture prevents direct adoption of record-level dependency tracking. Instead, we track dependencies at transaction granularity using xmin-based identification. While this coarser granularity may induce broader cascading aborts compared to MySQL's record-level approach, it demonstrates that R-MVCC's core idea—speculative versioning with access-time conflict ordering—remains applicable across different MVCC architectures."}
}
\subsection{Commit}
When a transaction commits in \stx, it waits until all predecessors commit to guarantee serializability. When a transaction enters pre-commit, it performs cycle detection on the dependency graph. To implement this, \stx extends the existing commit protocol in both MySQL and PostgreSQL.


\comment{
\vspace{-0.1cm}
\paragraph{MySQL}
In pre-commit state, \stx leverages the built-in background deadlock detection thread to asynchronously monitor cycles among waiting client threads. 
Once the commit is permitted, redo logs accumulated in the \emph{TLA} are parsed using \vanillamy's recovery function.
Following the WAL protocol,  \stx writes undo logs, applies redo logs for speculative versions in the \emph{TLA} to data pages, and finally appends redo logs to the WAL buffer. 
\stx then removes dependency edges while releasing locks. 
}

\comment{
hj: predecessor의 trasnction's satus로 확인하고 모든 predecessors가 성공적으로 commit했을 때만 커밋 가능. 그 전까지는 pre-commit 상태에서 commit-wait. pre-commit 상태에서 TLA를 순회하면서 redo log apply(undo log 작성, record update, WAL log 작성). waiting threads 대상으로 background thread가 cycle detection in a wait-for graph (DFS-style)(MySQL 8.0에 이미 구현되어 있는 deadlock detection 방식. 데드락일 경우 시 트랜잭션이 서로 기다리게 되므로 자동으로 걸리게 됨. 추가적으로 lock 대상 detection 구현하지 않음)
}

\vspace{-0.1cm}
\paragraph{PostgreSQL} \stx checks the status of all predecessors in the \texttt{pred\_list} and detects deadlocks by performing the depth-first search on the \texttt{dep\_list}. Since WAL records are already written when speculative versions are stored in heap pages, no additional operations are required at commit.

\comment{
In PostgreSQL, \stx checks the status of all predecessor transactions in the dependency graph and performs depth-first search to detect cycles, identical to \vanillapg.
When a deadlock is detected, \stx aborts the dependent transaction. Once all predecessors commit, \stx traverses the version chain to locate the appropriate tuple and updates its visibility metadata.
}
\comment{
~Before committing, \stx with PostgreSQL first performs cycle detection via depth-first search (DFS) on the dependency graph. It then checks each predecessor's transaction status in \texttt{pg\_xact}~\cite{postgres-xact} and proceeds to commit only after all predecessors have successfully committed.
}
\comment{
커밋 시점에 Depending TX\_ID List를 순회하며 각 predecessor의 상태를 확인한다. \vanillapg와 동일하게, 모든 트랜잭션의 상태(in-progress, committed, aborted)를 기록하는 Commit Log (CLOG)를 활용하여 상태를 조회한다. 모든 predecessor가 committed 상태이면 commit을 진행하며, cycle이 생겼는지 DFS를 통해 확인한다. 
}
\subsection{Abort} 
When a transaction aborts, all its dependents must also abort to prevent dirty reads~\cite{ansi-isolation}.
Although the cascading abort is inevitable in speculative write versioning, \stx can minimize its overhead because it does not update the current version in place. 

\comment{
\vspace{-0.1cm}
\paragraph{MySQL}
For cascading aborts, \stx traverses the lock table starting from the aborting transaction and identifies dependent transactions that read the uncommitted update and marks them for abort.
Once a cascading abort begins, it becomes visible through the lock table, and any transaction attempting to acquire a lock should wait until the cascading abort completes. 
}

\comment{
\reminder{HJ, MySQL의 \texttt{XX} module을 수정/extend했다든지 조금 더 구현관련 내용으로 수정 필요}

\jhpark{@hj 위 내용 확인 부탁:)}\hjin{내용 모두 맞습니다. 다만 충돌난 레코드에 대해서만 log 작성하고 아닌 레코드는 in-place update/insert 해서 얘네들은 undo 진행해야 하는데 이 점 언급해야 할까요? +) lock acquire 시 해당 record에서 predecessor 중 abort가 발생했다면/발생 중이라면, conflict 허용하지 않고 기다려서 dirty read를 하지 않게 되어 cascading abort chain을 줄인 것이 abort를 많이 줄여 througput 향상에 큰 도움이 되었습니다. cascading abort chain을 확정할 때도 dirty read를 하지 않은 트랜잭션의 경우는 abort 하지 않습니다.
이 부분은 update design 장에 추가;}
}

\vspace{-0.1cm}
\paragraph{PostgreSQL}
When a transaction aborts, \vanillapg marks it as aborted in \texttt{pg\_xact}, making those tuples invisible until vacuum reclaims them~\cite{Stonebraker87}. \stx leverages this mechanism. In the case of a cascading abort, each transaction detects a predecessor abort by checking \texttt{pg\_xact} and aborts itself. Although cascading aborts can increase dead tuples, our separate TPC-C experiment shows that \stx incurs only 1.7\% additional storage overhead due to dead tuples~\cite{postgres-vacuum} from speculative versions of aborted transactions, as the vacuum process continuously reclaims them in the background.
\comment{
(Cascade Abort part)... 
\jhpark{@sj, cascade abort 처리하는 로직도 추가 부탁! mysql버전과 동일하게 dependency tracking?}
\msj{MySQL과 다르게, Abroted transaction이 자신을 depending하는 TX들을 cascading abort 시키지 않고, 트랜잭션 실행 중에 주기적으로 자기가 depending하는 TX들이 abort 했는지 확인하여 스스로 abort를 진행합니다.} 
\begin{itemize}
    \item R-MVCC는 트랜잭션 실행 중에도 주기적으로 predecessor의 abort를 감지하여, 불필요한 연산을 조기에 중단합니다.
    \item Predecessor가 abort되었으면 커밋 시점까지 기다리지 않고 즉시 abort를 수행하게 됩니다.
    \item PostgreSQL은 MySQL과 달리 별도의 undo log 없이 O2N 방식으로 모든 버전을 heap에 직접 저장하므로, abort 시 별도의 rollback 연산이 필요 없습니다.
    \item Abort된 tuple에 대해서는 별도의 cleanup 로직을 구현하지 않고, PostgreSQL의 기존 autovacuum 및 on-access pruning 메커니즘이 dead tuple을 회수하고 회수된 공간을 재활용하도록 합니다.
    \item 다만, abort가 빈번한 워크로드에서는 dead tuple 증가로 인한 space amplification, autovacuum 빈도 증가, 그리고 txid 소비 가속으로 인한 freeze processing 빈도 증가가 발생할 수 있다.
\end{itemize}
}


\comment{
In this section we describe how two representative DBMSs, MySQL (version 8.4) and PostgreSQL (version 16.2), supporting multiversioning schemes: N2O and O2N~\cite{mvcc-eval} can be extended to incorporate the design of \stx. 특히, we will explain that 기존 DBMS의 page layout, multiversioning, redo and recovery scheme 등과 자연스럽게 또한 minimal changes를 통해 extended to support \stx.  지금 conventional RDBMS architecture를 최소 변경으로 ShadowTX지원; slotted page structure, undo-based MV, redo log 관리 등 거의 그대로 유지하면서, \stx 지원을 위한 최소 확장을 목표로 함.
The core functionality of \stx requires only XXX lines of code (LOC) for MySQL and YYY LOC for PostgreSQL, respectively.
We believe that \stx can be generally and easily incorporated into other Commercial DBMSs such as Oracle and MS SQL Server.

\reminder{@혜진,성재: 각 구현 버전에 대해 \stx를 위해 추가로 사용한 resource 계산 필요!}
\module{PostgreSQL}
\begin{itemize}
    \item memory: Each transaction maintains an array containing the transactions it is depending on. 각 트랜잭션당 평균적으로 x개의 TRX를 depending하기 때문에 추가로 사용한 memory resource는 xKB 정도이다.(추후 측정하여 구체적으로 작성하겠습니다.). 
\end{itemize}

\module{MySQL}

\begin{itemize}
    \item future logs in TLA: 트랜잭션 당 1KB, 모자라면 추가로 할당 가능 (1 warehouse 기준으로 평균 100 Bytes 사용)
    \item conflict dependency 관리를 위한 conflict node: 노드 당 48 Bytes, 평균 dependency graph 길이 1.5 (최대 4)(128 clients 1 warehouse) -> 평균적으로 트랜잭션 당 약 80 Bytes 사용
\end{itemize}

\subsection{MySQL}

Undo 기반 N2O MV store에서 구현

\module{MySQL 구현 이슈: 5.7 -> 8.0 -- Scalability Bottlenecks in MySQL's Locking Mechanism}
MySQL InnoDB 스토리지 엔진은 트랜잭션 동시성 보장을 위해 Lock(e.g., record lock, table lock)을 유지하며, 동일한 리소스에 대한 다수의 lock request는 접근 순서에 따라 lock queue에 등록되어 순차적으로 처리됨 ~\cite{mysqllockinfo}. MySQL InnoDB 8.0.21 이전 버전에서는 single global mutex인 \textit{lock\_sys->mutex}로 모든 lock queue 연산 (e.g., Enqueue, dequeue)을 보호함.

따라서 concurrent transaction이 많은 환경에서는 single mutex contention으로 인해, CPU core가 충분하더라도 동시 트랜잭션 수 증가에 따라 Transaction throughput이 scalable하게 향상되지 않는 문제가 발생함 ~\cite{mysqlshardlock}. \stx 역시 W-W conflict를 해결하여 concurrent transaction을 증가시키지만, lock queue 접근이 병목이 되어 throguhput이 concurrent transaction에 scalable하지 않는 문제가 발생함. MySQL InnoDB 8.0.21 이후 버전에서는 락 요청을 해시 기반으로 여러 shard에 분산하고, 각 shard에 개별 mutex를 적용하는 sharded lock management 기법을 도입하여 lock contention으로 인한 병목을 효과적으로 완화함. 이에 따라 \stx 또한 MySQL 8.4 기반에서 재구현되어, 앞서 언급한 확장성 문제를 해결함.

8.0에서 도입된 CATS알고리즘은 \stx의 락 관리 기법과 맞지 않아 사용하지 않음. conflict lock을 우선적으로 해제하고 그 다음으로는 FIFO로 처리.

future version 생성할 때 잡게 되는 lock mutex에선 contention이 과열되었음. 특히 1wh에서는 대부분의 트랜잭션이 하나의 mutex에 몰리게 되어 성능이 32connection 이후로 증가하지 않음. 그래서 메모리 heap 공간을 추가로 할당하여 자신의 future version을 local copy로서 TLA에 저장하고, 이후 트랜잭션은 해당 future version을 바로 memcpy하여 읽게 하였더니 32 connection 이후로도 3.5x 에서 8x까지 성능이 향상하였음. 

\module{\stx를 위한 W Lock 획득 및 관리 protocol in MySQL} 

\module{Cascade abort를 위한 구현 이슈 in MySQL: process/thread/shared memory/signal?}

\module{secondary index를 수정했을 때 redo log apply 가능할지?}

\module{Deadlock avoidance vs. detection}
cycle 검사 시점을 1. lock 획득 전(avoidance) vs 2. commit-wait 이전(detection) vs 3. commit-wait 이후(background) 로 나눠서 실험 결과, 1wh 기준 32conn에서는 2번이 가장 우수했지만 64conn에서 가장 우수(2는 7.2배, 3은 9.9배) 심지어 CPU 사용률은 더 낮은 것으로 보아 cycle detection overhead가 병목으로 생각됨. scalability 실험에서는 1,2,3 모두 큰 차이 없었으며 단지 conflict가 2회 제한이냐 무제한이냐 차이가 있었음(128conn 기준 2회 제한은 1.2배, 무제한은 1.4배 성능)

\subsection{PostgreSQL}
하지만 O2N 기반의 PostgreSQL의 경우, Redo log를 활용하지 않는 이유.
\begin{itemize}
    \item PostgreSQL의 Locking Mechanism: 여러 Tuple version으로 Multi-versioing을 구현하므로 하나의 락이 아닌 여러개의 Tuple lock으로 이루어짐. 따라서 MySQL과 같은 O2N 방식으로 Lock을 활용하여 구현이 불가능함.
    \item PostgreSQL의 Vacuum: MySQL의 경우, RaV를 적용할 수 있는 record는 physical 위치가 업데이트되지 않는 record에 한함. 이는 MySQL의 redo log가 Physiological logging이기 때문이며 PostgreSQL도 동일함. 특히나 PostgreSQL의 경우, 계속해서 새로운 버전들이 만들어지고 오래된 tuple들을 정리하는 Vacuum을 진행하기에 최신 버전이 지속적으로 바뀜. 따라서 Redo log를 남기지 않고 Future tuple을 활용해 구현함.
\end{itemize}

V2. PostgreSQL의 경우, redo log가 아닌 Future tuple을 활용하는 이유
\begin{itemize}
    \item PostgreSQL의 경우, xmin: 튜플을 생성한 TXID, xmax: 튜플을 지운 TXID를 활용하여 Visibility를 결정함. 그리고 W-W conflict가 발생하고 내가 보고있는 최신 튜플이 내 TX\_
\end{itemize}
\module{PostgreSQL's Locking Mechanism}
PostgreSQL의 경우, 여러 트랜잭션이 동일한 Tuple을 접근하는 상황을 효율적으로 처리하기 위해서 두가지 단계로 구분함. \href{https://blog.ex-em.com/1954}{PostgreSQL - Low level lock}
\begin{itemize}
    \item Lock information write: 대기자가 아무도 없는 경우에는 Physical Tuple에 Lock을 건 TRX\_ID, Lock 정보(e.g., X-lock, S-lock)을 기록함.
    \item Wait queue manage: Wait queue에 physical tuple에 lock을 걸고자하는 트랜잭션들을 대기시킴.
\end{itemize}
따라서 첫번째로 대기중인 트랜잭션 (또는 서로 간의 충돌이 없는 트랜잭션 S-S)은 선행 트랜잭션이 종료되면 바로 Lock을 쥐고 작업을 수행할 수 있다는 점에서 위 구조는 장점이 있음.

하지만 현재 lock을 걸고자하는 Tuple이 업데이트가 되면 대기하는 트랜잭션들은 전부 Rollback을 진행함.(S4U의 경우에 수행하며 그 이유는, 반드시 최신 튜플을 보아야하기 때문임. Update, Delete의 경우에는 Retry를 진행함)
결국 먼저온 트랜잭션이 있더라도 현재 대기하고 있는 Tuple이 업데이트되면 모든 트랜잭션이 롤백하게 되므로 대기 순서가 의미가 없어짐.
\textbf{따라서 Lock을 걸고자하는 튜플이 업데이트 되었더라도 첫번째로 대기중인 트랜잭션이 rollback하지 않고 바로 업데이트된 Tuple을 따라가도록 하는 작업이 추후 구현이 필요함.}
redo 기반 O2N MV store에서 구현
\module{\stx를 위한 W Lock 획득 및 관리 protocol in Postgres} 
\module{Cascade abort를 위한 구현 이슈 in Postgres: process/thread/shared memory/signal?} 
PostgreSQL에서는 물리적으로 uncommitted tuple을 남기기 때문에 추후 abort된 uncommitted 튜플에 대한 Vacumming이 진행되어야함. 따라서 vacumming overhead에 대한 분석 (TPS 하락 유무, WAF 비율)도 필요함.
\comment{
\subsection{FK read}
PostgreSQL에서 Insert 할 때 FK check 한다고 shared lock으로 자기 버전까지 찾아가서 시간이 소요. TPC-C에서는 문제될 경우가 없지만 general case(delete) 고려할 경우 어떤 버전을 읽을지가 관건.
}
}

\end{document}